\documentclass[12pt,english,aps,manuscript,aps,manuscript]{revtex4}
\usepackage[latin9]{inputenc}
\setcounter{secnumdepth}{3}
\usepackage{verbatim}
\usepackage{longtable}
\usepackage{float}
\usepackage{calc}
\usepackage{amsthm}
\usepackage{amsmath}
\usepackage{amssymb}
\usepackage{graphicx}
\usepackage{esint}
\PassOptionsToPackage{normalem}{ulem}
\usepackage{ulem}

\makeatletter

\providecommand{\tabularnewline}{\\}

\@ifundefined{textcolor}{}
{%
 \definecolor{BLACK}{gray}{0}
 \definecolor{WHITE}{gray}{1}
 \definecolor{RED}{rgb}{1,0,0}
 \definecolor{GREEN}{rgb}{0,1,0}
 \definecolor{BLUE}{rgb}{0,0,1}
 \definecolor{CYAN}{cmyk}{1,0,0,0}
 \definecolor{MAGENTA}{cmyk}{0,1,0,0}
 \definecolor{YELLOW}{cmyk}{0,0,1,0}
}
 \theoremstyle{definition}
 \newtheorem*{defn*}{\protect\definitionname}
  \theoremstyle{plain}
  \newtheorem{lem}{\protect\lemmaname}
  \theoremstyle{plain}
  \newtheorem*{thm*}{\protect\theoremname}
  \theoremstyle{plain}
  \newtheorem*{lem*}{\protect\lemmaname}

\makeatother

\usepackage{babel}
  \providecommand{\definitionname}{Definition}
  \providecommand{\lemmaname}{Lemma}
  \providecommand{\theoremname}{Theorem}

\begin{document}

\title{{\Large Isotropic Entanglement }}

\author{\textup{Ramis Movassagh}}

\email{ramis@math.mit.edu}

\affiliation{Department of Mathematics, Massachusetts Institute of Technology}

\author{\textup{Alan Edelman}}

\email{edelman@math.mit.edu}

\affiliation{Department of Mathematics, Massachusetts Institute of Technology}

\date{\today}
\begin{abstract}
The method of \textbf{``Isotropic Entanglement''} (IE), inspired
by Free Probability Theory and Random Matrix Theory predicts the eigenvalue
distribution of quantum many-body (spin) systems with generic interactions.
At the heart is a ``Slider'', which interpolates between two extrema
by matching fourth moments. The first extreme treats the non-commuting
terms \textit{classically} and the second treats them \textit{isotropically.
}Isotropic means that the eigenvectors are in generic positions. We
prove Matching Three Moments and Slider Theorems and further prove
that the interpolation is universal, i.e., independent of the choice
of local terms. Our examples show that IE provides an accurate picture
well beyond what one expects from the first four moments alone.\textit{ }
\end{abstract}
\maketitle
\tableofcontents{}

\section{\label{sec:The-Elusive-Spectra} Elusive Spectra of Hamiltonians}

Energy \textit{eigenvalue distributions} or the \textit{density of
states (DOS)} are needed for calculating the partition function\cite[p. 14]{kadanoff}.
The DOS plays an important role in the theory of solids, where it
is used to calculate various physical properties such the internal
energy, the density of particles, specific heat capacity, and thermal
conductivity \cite{mardar,wen-1}. Quantum Many-Body Systems (QMBS)
spectra have been elusive for two reasons: 1. The terms that represent
the interactions are generally non-commuting. This is pronounced for
systems with random interactions (e.g., quantum spin glasses \cite[p. 320]{sachdev}\cite{spinGlass,sachdev2}).
2. Standard numerical diagonalization is limited by memory and computer
speed. Calculation of the spectrum of interacting QMBS has been shown
to be difficult \cite{schuch}. 

An accurate description of tails of distributions are desirable for
condensed matter physics. Though we understand much about the ground
states of interacting QMBS \cite{cirac,vidal,white,wen,frank,hastings,ramis},
eigenvalue distributions are less studied. Isotropic Entanglement
(IE) provides a direct method for obtaining eigenvalue distributions
of quantum spin systems with generic local interactions and does remarkably
well in capturing the tails. Indeed interaction is the very source
of entanglement generation \cite[Section 2.4.1]{NC}\cite{Preskill}
which makes QMBS a resource for quantum computation \cite{gottesman}
but their study a formidable task on a classical computer. 

Suppose we are interested in the eigenvalue distribution of a sum
of Hermitian matrices $M=\sum_{i=1}^{N}M_{i}.$ In general, $M_{i}$
cannot be simultaneously diagonalized, consequently the spectrum of
the sum is not the sum of the spectra. Summation mixes the entries
in a very complicated manner that depends on eigenvectors. Nevertheless,
it seems possible that a one-parameter approximation might suffice. 

Though we are not restricted to one dimensional chains, for sake of
concreteness, we investigate $N$ interacting $d$-dimensional quantum
spins (qudits) on a line with generic interactions. The Hamiltonian
is

\vspace{-0.15in}

\begin{equation}
H=\sum_{l=1}^{N-1}\mathbb{I}_{d^{l-1}}\otimes H_{l,\cdots,l+L-1}\otimes\mathbb{I}_{d^{N-l-\left(L-1\right)}},\label{eq:Hamiltonian}
\end{equation}
where the local terms $H_{l,\cdots,l+L-1}$ are finite $d^{L}\times d^{L}$
random matrices. We take the case of nearest neighbors interactions,
$L=2$, unless otherwise specified. 

The eigenvalue distribution of any commuting subset of $H$ such as
the terms with $l$ odd (the ``odds'') or $l$ even (the ``evens'')
can be obtained using local diagonalization. However, the difficulty
in approximating the full spectrum of $H\equiv H_{\mbox{odd}\vphantom{\mbox{even}}}+H_{\mbox{even}\vphantom{\mbox{odd}}}$
is in summing the odds and the evens because of their overlap at every
site. 

The intuition behind IE is that terms with an overlap, such as $H_{l,l+1}$
and $H_{l+1,l+2}$, introduce randomness and mixing through sharing
of a site. Namely, the process of entanglement generation introduces
an \textit{isotropicity }between the eigenvectors of evens and odds
that can be harnessed to capture the spectrum.

\begin{figure}
\centering{}\includegraphics[scale=1.5]{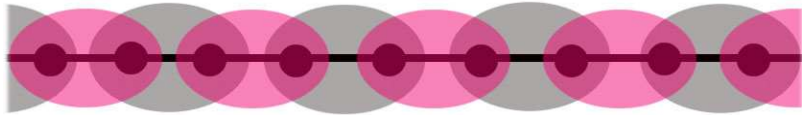}\caption{\label{fig:Odd-and-even}Odd and even summands can separately be locally
diagonalized, but not the sum. The overlap of the two subsets at every
site generally requires a global diagonalization.}
\end{figure}

Random Matrix Theory (RMT) often takes advantage of eigenvectors with
Haar measure, the uniform measure on the orthogonal/unitary group.
However, the eigenvectors of QMBS have a more special structure (see
Eq. \ref{eq:OddEven}).

Therefore we created a \textit{hybrid theory}, where we used a finite
version of Free Probability Theory (FPT) and Classical Probability
Theory to capture the eigenvalue distribution of Eq. \ref{eq:Hamiltonian}.
Though such problems can be QMA-complete, our examples show that IE
provides an accurate picture well beyond what one expects from the
first four moments alone.\textit{ }The \textit{Slider} (bottom of
Figure \ref{fig:RoadMap}) displays the proposed mixture $p$.

\section{\label{sec:The-Method}The Method of Isotropic Entanglement}

\subsection{\label{sub:Road-Map} Overview}

\begin{figure}
\centering{}\includegraphics[scale=0.6]{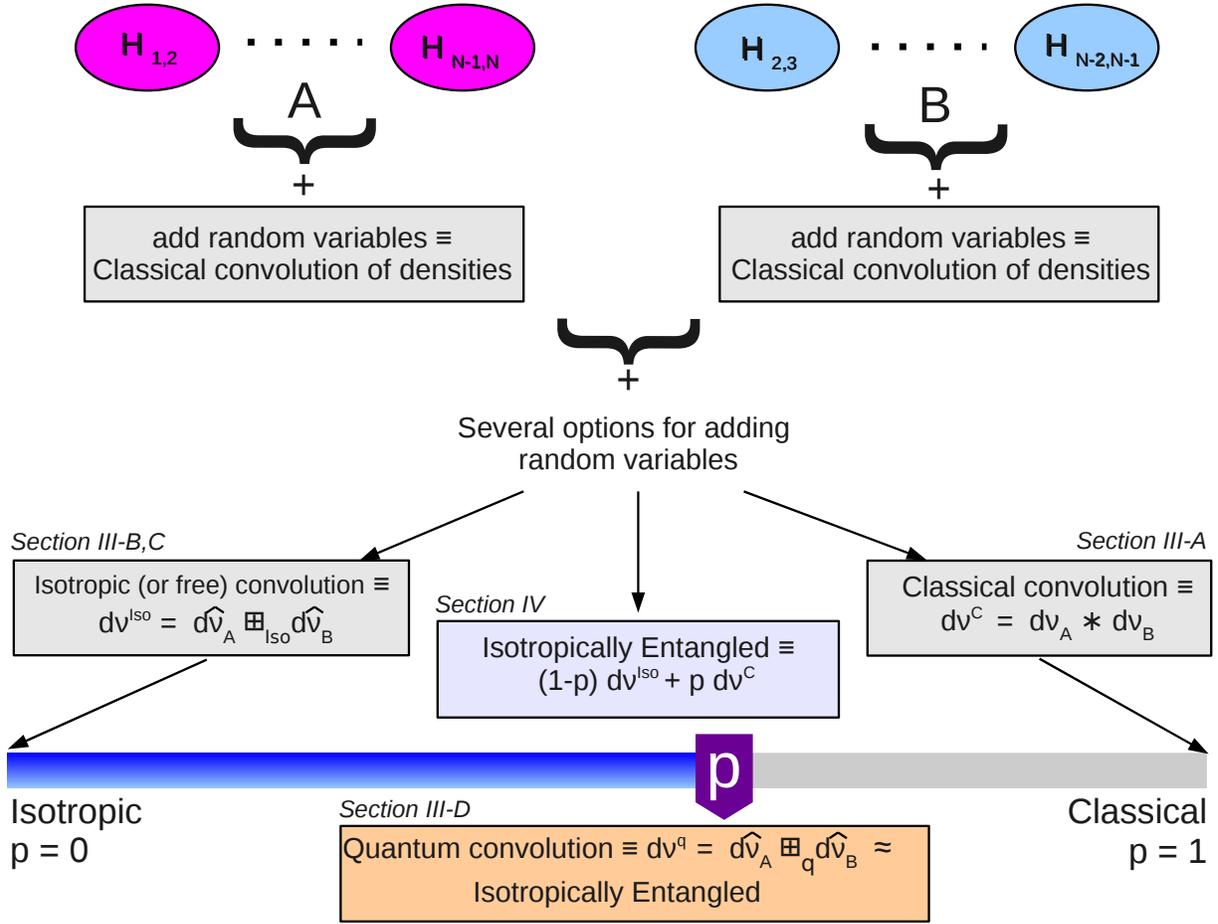}\caption{\label{fig:RoadMap}The method of Isotropic Entanglement: Quantum
spectra as a convex combination of isotropic and classical distributions.
The Slider (bottom) indicates the $p$ that matches the quantum kurtosis
as a function of classical ($p=1$) and isotropic ($p=0$) kurtoses.
To simplify we drop the tensor products (Eq. \ref{eq:HevenHodd})
in the local terms (ellipses on top). Note that isotropic and quantum
convolution depend on multivariate densities for the eigenvalues. }
\end{figure}

We propose a method to compute the ``density of states'' (DOS) or
``eigenvalue density'' of quantum spin systems with generic local
interactions. More generally one wishes to compute the DOS of the
sum of non-commuting random matrices from their, individually known,
DOS's. 

We begin with an example in Figure \ref{fig:The_intro}, where we
compare exact diagonalization against two approximations:

\begin{figure}
\centering{}\includegraphics[width=12cm]{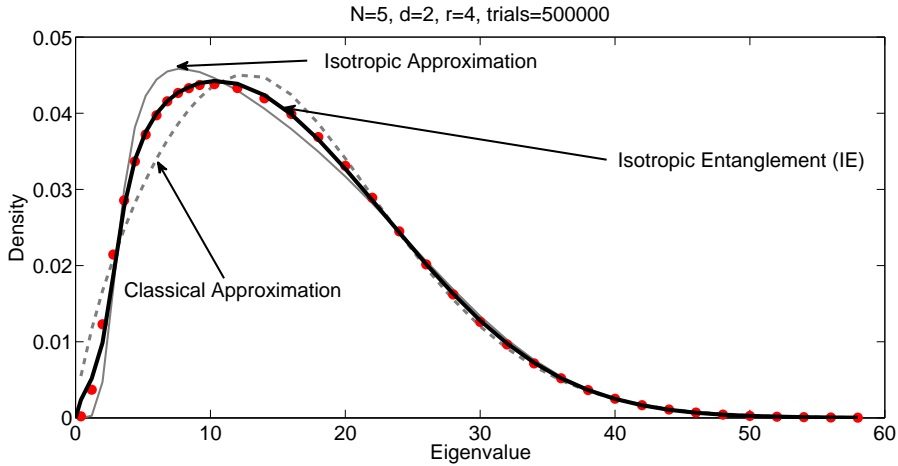}\caption{\label{fig:The_intro}The exact diagonalization in dots and IE compared
to the two approximations. The title parameters are explained in the
section on numerical results.}
\end{figure}

\begin{itemize}
\item Dashed grey curve: \textit{classical} approximation. Notice that it
overshoots to the right.
\item Solid grey curve: \textit{isotropic} approximation. Notice that it
overshoots to the left.
\item Solid black curve: \textit{isotropic entanglement (IE).}
\item Dots: \textit{exact diagonalization} of the quantum problem given
in Eq. \ref{eq:Hamiltonian}.
\end{itemize}
The \textit{classical approximation} ignores eigenvector structure
by summing random eigenvalues uniformly from non-commuting matrices.
The dashed curve is the convolution of the probability densities of
the eigenvalues of each matrix. 

The\textit{ isotropic approximation} assumes that the eigenvectors
are in ``general position''; that is, we add the two matrices with
correct eigenvalue densities but choose the eigenvectors from Haar
measure. As the matrix size goes to infinity, the resulting distribution
is the free convolution of the individual distributions \cite{speicher}. 

The exact diagonalization given by red dots, the dashed and solid
grey curves have exactly the same first three moments, but differing
fourth moments. 

\textit{Isotropic Entanglement (IE)} is a linear combination of the
two approximations that is obtained by matching the fourth moments.
We show that 1) the fit is better than what might be expected by four
moments alone, 2) the combination is always convex for the problems
of interest, given by $0\leq p\leq1$ and 3) this convex combination
is universal depending on the parameter counts of the problem but
not the eigenvalue densities of the local terms.

\textit{Parameter counts: exponential, polynomial and zero.} Because
of the \textit{locality} of generic interactions, the complete set
of eigenstates has parameter count equal to a polynomial in the number
of spins, though the dimensionality is exponential. The classical
and isotropic approximations have zero and exponentially many random
parameters respectively. This suggests that the problem of interest
somehow lies in between the two approximations. 

Our work supports a very general principle that one can obtain an
accurate representation of inherently exponential problems by approximating
them with less complexity. This realization is at the heart of other
recent developments in QMBS research such as Matrix Product States
\cite{cirac,vidal}, and Density Matrix Renormalization Group \cite{white},
where the \textit{state} (usually the ground state of $1D$ chains)
can be adequately represented by a Matrix Product State (MPS) ansatz
whose parameters grow \textit{linearly} with the number of quantum
particles. Future work includes explicit treatment of fermionic systems
and numerical exploration of higher dimensional systems.

\subsection{Inputs and Outputs of the Theory}

In general we consider Hamiltonians $H=H_{\mbox{odd}\mbox{\ensuremath{\vphantom{even}}}}+H_{\mbox{even}\mbox{\ensuremath{\vphantom{odd}}}}$,
where the local terms that add up to $H_{\mbox{odd}\vphantom{\mbox{even}}}$
(or $H_{\mbox{even}\vphantom{\mbox{odd}}}$) form a commuting subset.
All the physically relevant quantities such as the lattice structure,
$N$, dimension of the spin $d$ and the rank $r$ are encoded in
the eigenvalue densities. The output of the theory is a $0\leq p\leq1$
by which the IE distribution is obtained and $d\nu^{IE}$ serves as
an approximation to the spectral measure. The inputs can succinctly
be expressed in terms of the dimension of the quantum spins, and the
nature of the lattice (Figure \ref{tab:Inputs-and-Outputs}).

\begin{figure}
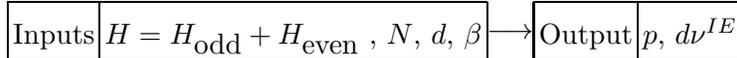

\begin{centering}
\begin{tabular}{|c|c|}
\hline 
Inputs & $H=H_{\mbox{odd}\vphantom{\mbox{even}}}+H_{\mbox{even}\vphantom{\mbox{odd}}}$
, $N$, $d$, $\beta$\tabularnewline
\hline 
\end{tabular}$\longrightarrow$%
\begin{tabular}{|c|c|}
\hline 
Output & $p$, $d\nu^{IE}$\tabularnewline
\hline 
\end{tabular}
\par\end{centering}

\caption{\label{tab:Inputs-and-Outputs}Inputs and outputs of the IE theory.
See section \ref{sec:Spectra-Sums-from-Prob} for the definition of
$d\nu^{IE}$.}
\end{figure}

\subsection{\label{sub:More-Than-Four}More Than Four Moments of Accuracy?}

Alternatives to IE worth considering are 1) Pearson and 2) Gram-Charlier
moment fits. 

We illustrate in Figure \ref{fig:IEvsOthersZERO} how the IE fit is
better than expected when matching four moments. We used the first
four moments to approximate the density using the Pearson fit as implemented
in MATLAB and also the well-known Gram-Charlier fit \cite{Cramer}.
In \cite{Gram} it was demonstrated that the statistical mechanics
methods for obtaining the DOS, when applied to a finite dimensional
vector space, lead to a Gaussian distribution in the lowest order.
Further, they discovered that successive approximations lead naturally
to the Gram-Charlier series \cite{Cramer}. Comparing these against
the accuracy of IE leads us to view IE as more than a moment matching
methodology. 

\begin{figure}
\begin{centering}
\includegraphics[width=8cm]{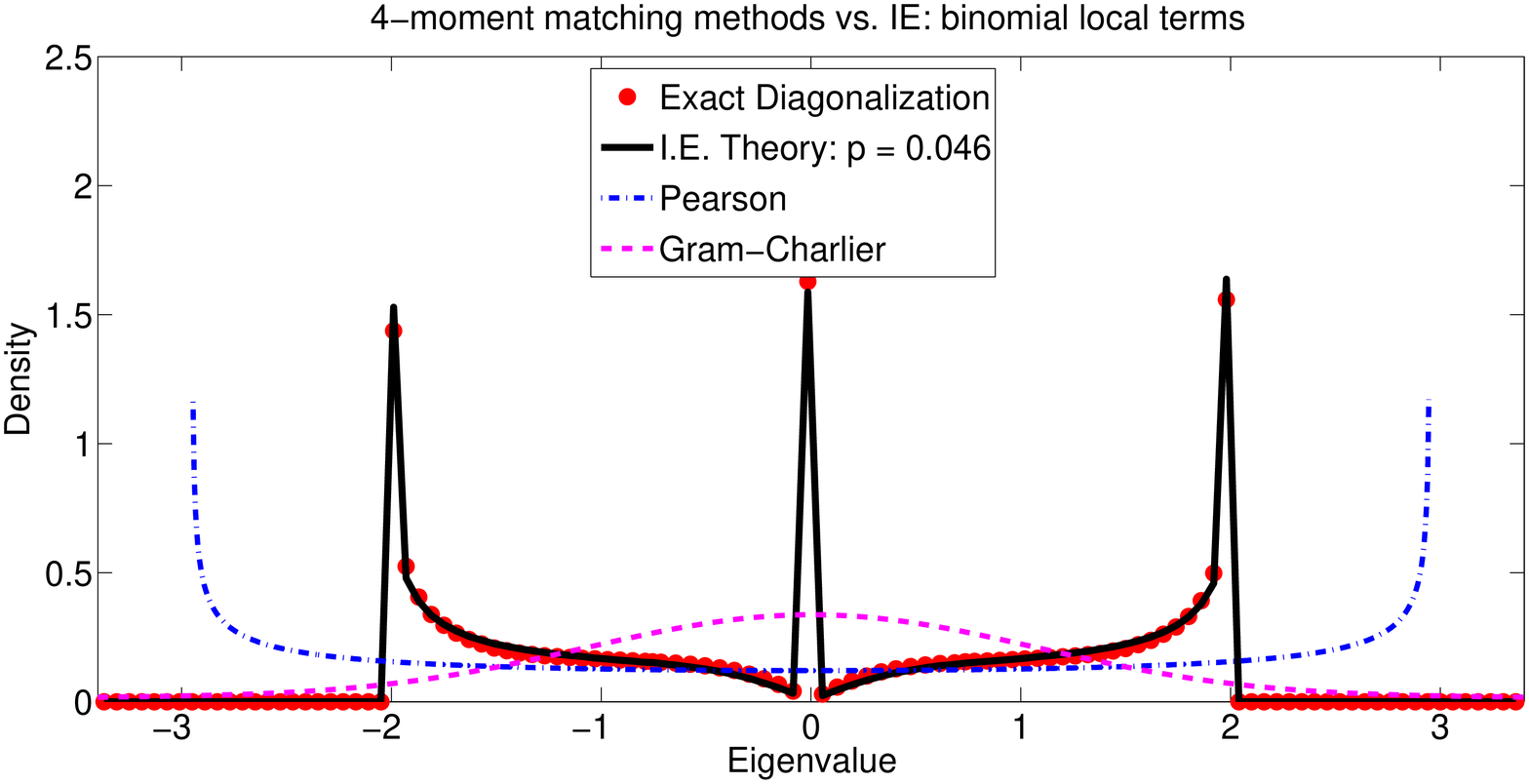}\includegraphics[width=8cm]{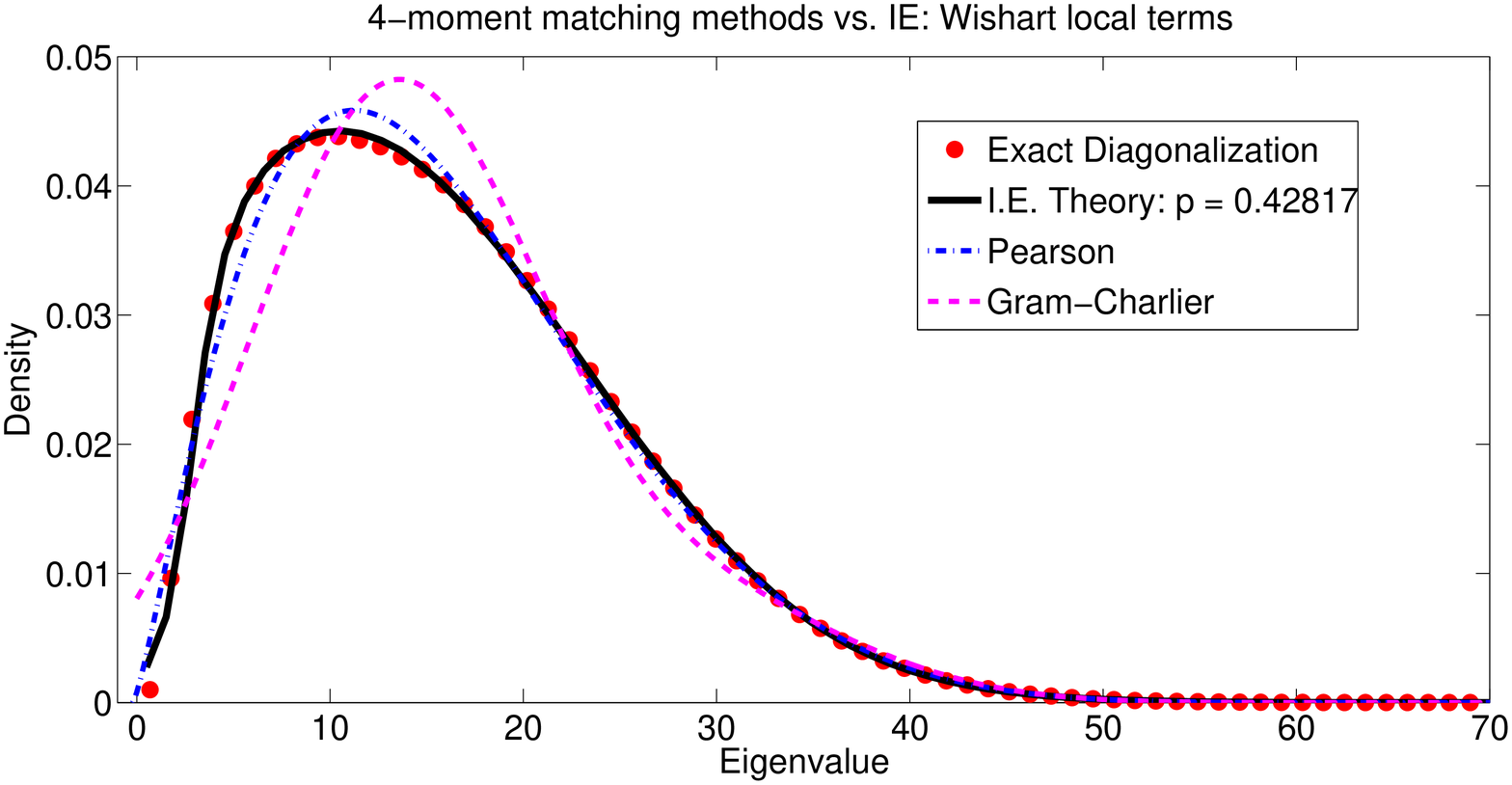}
\par\end{centering}

\centering{}\caption{\label{fig:IEvsOthersZERO}IE vs. Pearson and Gram-Charlier}
\end{figure}

The departure theorem (Section \ref{sub:departure}) shows that in
any of the higher moments ($>4$) there are many terms in the quantum
case that match IE exactly. Further, we conjecture that the effect
of the remaining terms are generally less significant.

\section{\label{sec:Spectra-Sums-from-Prob}Spectra Sums in Terms of Probability
Theory}

The density of eigenvalues may be thought of as a histogram. Formally
for an $m\times m$ matrix $M$ the\textit{ eigenvalue distribution}
is \cite[p. 4]{zeitouni1}\cite[p. 101]{Hiai} 
\begin{equation}
d\nu_{M}(x)=\frac{1}{m}\sum_{i=1}^{m}\delta(x-\lambda_{i}\left(M\right)).
\end{equation}

For a random matrix, there is the expected eigenvalue distribution
\cite{alan}, \cite[p. 362]{speicher}
\begin{equation}
d\nu_{M}(x)=\frac{1}{m}\mathbb{E}\left[\sum_{i=1}^{m}\delta(x-\lambda_{i}\left(M\right))\right],
\end{equation}
 which is typically a smooth curve \cite[p. 101]{zeitouni1}\cite[p. 115]{Hiai}. 

The eigenvalue distributions above are measures on one variable. We
will also need the multivariate measure on the $m$ eigenvalues of
$M$: 
\[
d\hat{\nu}_{M}\left(x\right)\mbox{= The symmetrized joint density of the eigenvalues. }
\]

Given the densities for $M$ and $M',$ the question arises: What
kind of summation of densities might represent the density for $M+M'$?
This question is unanswerable without further information.

One might try to answer this using various assumptions based on probability
theory. The first assumption is the familiar ``classical'' probability
theory where the distribution of the sum is obtained by convolution
of the density of summands. Another assumption is the modern ``free''
probability theory; we introduce a finite version to obtain the ``isotropic''
theory. Our target problem of interest, the ``quantum'' problem,
we will demonstrate, practically falls nicely in between the two.
The ``Slider'' quantifies to what extent the quantum problem falls
in between (Figure \ref{fig:RoadMap} bottom).

\subsection{\label{sub:Classical}Classical}

Consider random diagonal matrices $A$ and $B$ of size $m$, the
only randomness is in a uniform choice among the $m!$ possible orders.
Then there is no difference between the density of eigenvalue sums
and the familiar convolution of densities of random variables,
\begin{equation}
d\nu^{c}=d\nu_{A}*d\nu_{B}.
\end{equation}

Comment: From this point of view, the diagonal elements of $A,$ say,
are identically distributed random variables that need not be independent.
Consider Wishart matrices \cite{wishart}, where there are dependencies
among the eigenvalues. To be precise let $\mathbf{a}\in\mathbb{R}^{m}$
be a symmetric random variable, i.e., $P\mathbf{a}$ has the same
distribution as $\mathbf{a}$ for all permutation matrices $P$. We
write, $A=\left(\begin{array}{ccc}
a_{1}\\
 & \ddots\\
 &  & a_{m}
\end{array}\right)\equiv\textrm{diag}(\mathbf{a})$. Similarly for $B$.

Comment: The classical convolution appears in Figure \ref{fig:RoadMap}
in two different ways. Firstly, in the definition of $A$ (or $B$)
, the eigenvalues of the odd (or even) terms are added classically.
Secondly, $A$ and $B$ are added classically to form one end of the
Slider.

\subsection{\label{sub:Free}Free and Isotropic}

Free probability \cite[is recommended]{speicher} provides a new natural
mathematical ``sum'' of random variables. This sum is computed ``free
convolution'' denoted 

\begin{equation}
d\nu^{f}=d\nu_{A}\boxplus d\nu_{B}.
\end{equation}
Here we assume the random matrices $A$ and $B$, representing the
eigenvalues, have densities $d\nu_{A}$ and $d\nu_{B}$. In the large
$m$ limit, we can compute the DOS of $A+Q^{T}BQ$, where $Q$ is
a $\beta-$Haar distributed matrix (see Table \ref{tab:field}). 

Comment: In this paper we will not explore the free approach strictly
other than observing that it is the infinite limit of the isotropic
approach (i.e., $t\rightarrow\infty$ in Eq. \ref{eq:HIso}). This
infinite limit is independent of the choice of $\beta$ (see Table
\ref{tab:field}).

\begin{table}[H]
\begin{centering}
\begin{tabular}{|c|c|c|c|c|}
\hline 
 & Real $\mathbb{R}$ & Complex $\mathbb{C}$ & Quaternions $\mathbb{H}$ & ``Ghosts'' \tabularnewline
\hline 
\hline 
$\beta$ & $1$ & $2$ & $4$ & general $\beta$\tabularnewline
\hline 
Notation & $Q$ & $U$ & $S$ & $\mathcal{Q_{\beta}}$\tabularnewline
\hline 
Haar matrices & orthogonal & unitary & symplectic & $\beta-$orthogonal\tabularnewline
\hline 
\end{tabular}
\par\end{centering}

\centering{}\caption{\label{tab:field}Various $\beta-$Haar matrices.}
\end{table}

We define an isotropic convolution. The isotropic sum depends on a
copying parameter $t$ and $\beta$ (Table \ref{tab:field}). The
new Hamiltonian is the \textit{isotropic Hamiltonian} (\textit{``iso}''):

\begin{equation}
H_{iso}\equiv\left(A'\otimes\mathbb{I}_{t}\right)+Q_{\beta}^{-1}\left(\mathbb{I}_{t}\otimes B'\right)Q_{\beta},\label{eq:HIso}
\end{equation}
where $Q_{\beta}$ is a $\beta-$Haar distributed matrix, $A=A'\otimes\mathbb{I}_{t}$
and $B=\mathbb{I}_{t}\otimes B'$. For the copying parameter $t=d$,
$H_{iso}$ has the same dimension as $H$ in Eq. \ref{eq:Hamiltonian};
however, $t>d$ allows us to formally treat problems of growing size.
We can recover the free convolution by taking the limit: $\lim_{t\rightarrow\infty}d\nu^{iso\left(\beta,t\right)}=d\nu^{f}$.
The effect of $Q_{\beta}$ is to spin the eigenvectors of $\mathbb{I}_{t}\otimes B$
to point isotropically with respect to the eigenvectors of $A$. We
denote the isotropic eigenvalue distribution by

\begin{equation}
d\nu^{iso\left(\beta,t\right)}=d\hat{\nu}_{A}\boxplus_{iso\left(\beta,t\right)}d\hat{\nu}_{B}
\end{equation}
omitting $t$ and $\beta$ when it is clear from the context. 

Comment: In Eq. \ref{eq:HIso}, the $\mathbb{I}_{t}$ and $B$ in
$\mathbb{I}_{t}\otimes B$, can appear in any order. We chose this
presentation in anticipation of the quantum problem.

Comment: In this paper we primarily consider $t$ to match the dimension
of $H$.

\subsection{\label{sub:Quantum}Quantum}

Let $d\nu^{q}$ denote the eigenvalue distribution for the Hamiltonian
in Eq. \ref{eq:Hamiltonian}. This is the distribution that we will
approximate by $d\nu^{IE}$. In connection to Figure \ref{fig:Odd-and-even}
the Hamiltonian can be written as

\begin{equation}
H\equiv H_{\mbox{odd}\vphantom{\mbox{even}}}+H_{\mbox{even}\mbox{\ensuremath{\vphantom{odd}}}}=\sum_{l=1,3,5,\cdots}\mathbb{I}\otimes H_{l,l+1}\otimes\mathbb{I}+\sum_{l=2,4,6,\cdots}\mathbb{I}\otimes H_{l,l+1}\otimes\mathbb{I}.\label{eq:HevenHodd}
\end{equation}

We proceed to define a ``quantum convolution'' on the distributions
$d\hat{\nu}_{A}$ and $d\hat{\nu}_{B}$, which is $\beta$-dependent

\begin{equation}
d\nu^{q\left(\beta\right)}=d\hat{\nu}_{A}\boxplus_{q}d\hat{\nu}_{B}.
\end{equation}

\begin{flushleft}
In general, without any connection to a Hamiltonian, let $d\hat{\nu}_{A}$
and $d\hat{\nu}_{B}$ be symmetric measures on $\mathbb{R}^{d^{N}}$.
We define $d\nu^{q\left(\beta\right)}$ to be the eigenvalue distribution
of 
\par\end{flushleft}

\begin{equation}
H=A+Q_{q}^{-1}BQ_{q},
\end{equation}

\begin{flushleft}
where $Q_{q}=\left(Q_{q}^{(A)}\right)^{-1}Q_{q}^{(B)}$ with
\par\end{flushleft}

\begin{equation}
\begin{array}{c}
Q_{q}^{(A)}=\left[\bigotimes_{i=1}^{\left(N-1\right)/2}Q_{i}^{(O)}\right]\otimes\mathbb{I}_{d}\;\textrm{and}\; Q_{q}^{(B)}=\mathbb{I}_{d}\otimes\left[\bigotimes_{i=1}^{\left(N-1\right)/2}Q_{i}^{(E)}\right]\qquad N\;\textrm{odd}\\
\\
\qquad Q_{q}^{(A)}=\left[\bigotimes_{i=1}^{N/2}Q_{i}^{(O)}\right]\;\textrm{and}\; Q_{q}^{(B)}=\mathbb{I}_{d}\otimes\left[\bigotimes_{i=1}^{N/2-1}Q_{i}^{(E)}\right]\otimes\mathbb{I}_{d}\qquad N\;\textrm{even}
\end{array}\label{eq:OddEven}
\end{equation}

\begin{flushleft}
and each $Q_{i}^{\left(\bullet\right)}$ is a $\beta-$Haar measure
orthogonal matrix of size $d^{2}$ and $\mathbb{I}_{d}$ is an identity
matrix of size $d$.
\par\end{flushleft}

Comment: $A$, $B$ and $Q_{q}$ are $d^{N}\times d^{N}.$

Comment: In our examples given in this paper, we assume the local
terms are independent and identically distributed (iid) random matrices,
each of which has eigenvectors distributed with $\beta-$Haar measure. 

The tensor product in (\ref{eq:OddEven}) succinctly summarizes the
departure of the quantum case from a generic matrix as well as from
the classical case. First of all the number of parameters in $Q_{q}$
grows linearly with $N$ whereas in $Q$ it grows exponentially with
$N$. Second, the quantum case possesses isotropicity that makes it
different from the classical, whose eigenvectors are a point on the
orthogonal group (i.e., the identity matrix).

Comment: General $\beta$'s can be treated formally \cite{alanGhost}.
In particular, for quantum mechanical problems $\beta$ is taken to
be $1$ or $2$ corresponding to real and complex entries in the local
terms. $\beta=4$ corresponds to quaternions.
\begin{defn*}
\label{def:The-Hadamard-product}The \textbf{Hadamard product} of
two matrices $M_{1}$ and $M_{2}$ of the same size, denoted by $M_{1}\circ M_{2}$,
is the product of the corresponding elements. \end{defn*}
\begin{lem}
\label{lem:var}The elements of $Q_{q}$ defined in Eq. \ref{eq:OddEven}
are (dependent) random variables with mean zero and variance $d^{-N}$.\end{lem}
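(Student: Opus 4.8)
The plan is to reduce everything to two standard facts about a single $\beta$-Haar matrix $Q$ of size $d^2$: its entries have mean zero, by invariance of Haar measure under a sign (or phase) flip of a row; and they obey the first-order moment relation $\mathbb{E}[Q_{ab}Q_{cd}]=d^{-2}\delta_{ac}\delta_{bd}$, which follows from the rows being orthonormal together with the same invariance. Since each $Q_i^{(O)},Q_i^{(E)}$ is orthogonal, the Kronecker products $Q_q^{(A)}$ and $Q_q^{(B)}$ in \ref{eq:OddEven} are orthogonal, so $(Q_q^{(A)})^{-1}=(Q_q^{(A)})^{T}$ and $Q_q=(Q_q^{(A)})^{T}Q_q^{(B)}$ is again orthogonal; in particular every row of $Q_q$ is a unit vector, a fact I will use as a consistency check. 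I will present $\beta=1$; other $\beta$ only introduce conjugates. Throughout I index entries in the tensor-product basis by spin configurations $\vec\sigma=(\sigma_1,\dots,\sigma_N)$, so that each entry of $Q_q^{(A)}$ is a product of entries of the independent $Q_i^{(O)}$ times a Kronecker delta coming from the boundary $\mathbb{I}_d$, and similarly for $Q_q^{(B)}$.

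First I would dispatch the mean. Because an entry of $Q_q^{(A)}$ factorizes over the independent Haar blocks and each block has mean zero, $\mathbb{E}[(Q_q^{(A)})_{\vec\sigma\vec\tau}]=0$; expanding $(Q_q)_{\vec\tau\vec\rho}=\sum_{\vec\sigma}(Q_q^{(A)})_{\vec\sigma\vec\tau}(Q_q^{(B)})_{\vec\sigma\vec\rho}$ and using independence of the odd and even families gives $\mathbb{E}[(Q_q)_{\vec\tau\vec\rho}]=0$ at once.

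The substance is the variance. Since the mean vanishes I would compute (illustrating with $N$ odd; $N$ even is identical in outcome) $\mathbb{E}[|(Q_q)_{\vec\tau\vec\rho}|^2]=\sum_{\vec\sigma,\vec\sigma'}\mathbb{E}[(Q_q^{(A)})_{\vec\sigma\vec\tau}(Q_q^{(A)})_{\vec\sigma'\vec\tau}]\,\mathbb{E}[(Q_q^{(B)})_{\vec\sigma\vec\rho}(Q_q^{(B)})_{\vec\sigma'\vec\rho}]$, where independence has split the expectation into an odd and an even part. Applying the first-order moment relation to each $d^2\times d^2$ block and multiplying over the sites it covers, the odd factor forces $\sigma_j=\sigma'_j$ on the sites paired by the odds and pins the trailing boundary site to $\tau$ through the $\mathbb{I}_d$, contributing $d^{-(N-1)}$; the even factor forces $\sigma_j=\sigma'_j$ on the sites paired by the evens and pins the leading boundary site to $\rho$, contributing another $d^{-(N-1)}$. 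The key point is that the odd and even pairings are \emph{staggered} by one site, so together their constraints force $\vec\sigma=\vec\sigma'$ entirely while leaving the $N-2$ interior indices free and the endpoints pinned to $\rho_1$ and $\tau_N$. Summing over the $d^{N-2}$ admissible $\vec\sigma$ yields $d^{N-2}\cdot d^{-2(N-1)}=d^{-N}$, independent of $\vec\tau,\vec\rho$.

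I expect the bookkeeping of the staggered Kronecker decompositions to be the main obstacle: one must keep $Q_q^{(A)}$ and $Q_q^{(B)}$ in a common spin-configuration basis even though their factorizations are shifted relative to one another, track how the two systems of $\delta$-constraints interlock, and treat the boundary $\mathbb{I}_d$ blocks and the two parities $N$ odd and $N$ even separately (both producing the same count). The outcome is reassuringly consistent with orthogonality, which gives the sanity check I promised: the $d^N$ entries in any row of $Q_q$ have equal expected square and sum to $1$, forcing each to equal $d^{-N}$. For general $\beta$ the only modification is carrying conjugates on the primed factors and invoking the $\beta$-analogue of the moment relation, treated formally in the ``ghost'' case.
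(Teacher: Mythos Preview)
Your argument is correct and is essentially the same computation as the paper's, carried out in explicit index notation rather than the paper's more compact packaging. The paper observes that the vanishing of cross terms (your $\vec\sigma\neq\vec\sigma'$) lets one write $\mathbb{E}(Q_q\circ Q_q)=\mathbb{E}(Q_q^{(A)}\circ Q_q^{(A)})^{T}\,\mathbb{E}(Q_q^{(B)}\circ Q_q^{(B)})$ as an honest matrix product of Hadamard squares, then evaluates each factor as a tensor product of blocks $d^{-2}J_{d^2}$ and $\mathbb{I}_d$ and multiplies using $(J_i/i)^2=J_i/i$; this is exactly your index count, reorganized. One small rhetorical point: in your variance step the ``staggering'' is not what forces $\vec\sigma=\vec\sigma'$ --- for $N$ odd the odd factor alone (blocks on sites $1,\dots,N-1$ together with the boundary $\delta_{\sigma_N\tau_N}$) already pins every coordinate, and likewise the even factor alone does --- so the staggering is relevant only insofar as the two boundary pinnings together fix $\sigma_1=\rho_1$ and $\sigma_N=\tau_N$, leaving the $N-2$ interior sites free. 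Your final count $d^{N-2}\cdot d^{-2(N-1)}=d^{-N}$ is correct.
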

\begin{proof}
Here expectations are taken with respect to the random matrix $Q_{q}$
which is built from local Haar measure matrices by Eq. \ref{eq:OddEven}.
The fact that $\mathbb{E}\left(Q_{q}^{\left(A\right)}\right)=\mathbb{E}\left(Q_{q}^{\left(B\right)}\right)=0_{d^{N}}$
follows from the Haar distribution of local terms. Thus $\mathbb{E}\left(Q_{q}\right)=0$
by independence of $Q_{q}^{\left(A\right)}$ and $Q_{q}^{\left(B\right)}$.
Further, each element in $Q_{q}$ involves a dot product between columns
of $Q_{q}^{\left(A\right)}$ and $Q_{q}^{\left(B\right)}$. In every
given column of $Q_{q}^{\left(A\right)}$ any nonzero entry is a distinct
product of entries of local $Q's$ (see Eq.\ref{eq:OddEven}). For
example the expectation value of the $1,1$ entry is $\mathbb{E}\left(q_{i,1}^{\left(A\right)}q_{j,1}^{\left(A\right)}q_{i,1}^{\left(B\right)}q_{j,1}^{\left(B\right)}\right)=\mathbb{E}\left(q_{i,1}^{\left(A\right)}q_{j,1}^{\left(A\right)}\right)\mathbb{E}\left(q_{i,1}^{\left(B\right)}q_{j,1}^{\left(B\right)}\right)$.
Because of the Haar measure of the local terms, this expectation is
zero unless $i=j$. We then have that

\begin{equation}
\begin{array}{c}
\mathbb{E}\left(Q_{q}\circ Q_{q}\right)=\mathbb{E}\left(Q_{q}^{\left(A\right)}\circ Q_{q}^{\left(A\right)}\right)^{T}\mathbb{E}\left(Q_{q}^{\left(B\right)}\circ Q_{q}^{\left(B\right)}\right)=\\
\begin{cases}
\begin{array}{c}
\left(\left[\bigotimes_{i=1}^{\left(N-1\right)/2}d^{-2}J_{d^{2}}\right]\otimes\mathbb{I}_{d}\right)\left(\mathbb{I}_{d}\otimes\left[\bigotimes_{i=1}^{\left(N-1\right)/2}d^{-2}J_{d^{2}}\right]\right)\qquad N\;\textrm{odd}\\
\left(\bigotimes_{i=1}^{N/2}d^{-2}J_{d^{2}}\right)\left(\mathbb{I}_{d}\otimes\left[\bigotimes_{i=1}^{N/2-1}d^{-2}J_{d^{2}}\right]\otimes\mathbb{I}_{d}\right)\qquad\qquad\quad N\;\mbox{even}
\end{array}\end{cases}\\
=d^{-N}J_{d^{N}},
\end{array}
\end{equation}
where $J_{i}=i\times i$ matrix of all ones. We use facts such as
$\left(J_{i}/i\right)^{2}=\left(J_{i}/i\right)$, $\left(J_{i}/i\right)\otimes\left(J_{i}/i\right)=\left(J_{i^{2}}/i^{2}\right)$
and the variance of the elements of an $i\times i$ $\beta-$Haar
matrix is $1/i$.
\end{proof}

\section{\label{sec:Theory-of-Isotropic}Theory of Isotropic Entanglement}

\subsection{\label{sub:convex}Isotropic Entanglement as the Combination of Classical
and Isotropic}

We create a ``Slider'' based on the fourth moment. The moment $m_{k}$
of a random variable defined in terms of its density is $m_{k}=\int x^{k}d\nu_{M}.$
For the eigenvalues of an $m\times m$ random matrix, this is $\frac{1}{m}\mathbb{E}\mbox{Tr}M^{k}.$
In general, the moments of the classical sum and the free sum are
different, but the first three moments, $m_{1},\ m_{2},$ and $m_{3}$
are theoretically equal \cite[p. 191]{speicher}. Further, to anticipate
our target problem, the first three moments of the quantum eigenvalues
are also equal to that of the iso and the classical (The Departure
and the Three Moments Matching theorems in Section \ref{sub:departure}).
These moments are usually encoded as the mean, variance, and skewness.

We propose to use the fourth moment (or the excess kurtosis) to choose
a correct $p$ from a sliding hybrid sum:

\begin{equation}
d\nu^{q}\approx d\nu^{IE}=pd\nu^{c}+(1-p)d\nu^{iso}\label{eq:dNuq}
\end{equation}

Therefore, we find $p$ that expresses the kurtosis of the quantum
sum $(\gamma_{2}^{q}$) in terms of the kurtoses of the classical
($\gamma_{2}^{c}$) and isotropic ($\gamma_{2}^{iso}$) sums:

\begin{equation}
\gamma_{2}^{q}=p\gamma_{2}^{c}+\left(1-p\right)\gamma_{2}^{iso}\Rightarrow\qquad p=\frac{\gamma_{2}^{q}-\gamma_{2}^{iso}}{\gamma_{2}^{c}-\gamma_{2}^{iso}}.\label{eq:convex}
\end{equation}
Recall that the kurtosis $\gamma_{2}\equiv\frac{m_{4}}{\sigma^{4}}$,
where $\sigma^{2}$ is the variance. Hence kurtosis is the correct
statistical quantity that encodes the fourth moments:

\begin{equation}
m_{4}^{c}=\frac{1}{d^{N}}\mathbb{E}\textrm{Tr}\left(A+\Pi^{T}B\Pi\right)^{4},\; m_{4}^{iso}=\frac{1}{d^{N}}\mathbb{E}\textrm{Tr}\left(A+Q^{T}BQ\right)^{4},\; m_{4}^{q}=\frac{1}{d^{N}}\mathbb{E}\textrm{Tr}\left(A+Q_{q}^{T}BQ_{q}\right)^{4},\label{eq:moments}
\end{equation}
where $\Pi$ is a random uniformly distributed permutation matrix,
$Q$ is a $\beta-$Haar measure orthogonal matrix of size $d^{N}$,
and $Q_{q}$ is given by Eq. \ref{eq:OddEven}. \textit{Unless stated
otherwise, in the following the expectation values are taken with
respect to random eigenvalues $A$ and $B$ and eigenvectors. The
expectation values over the eigenvectors are taken with respect to
random permutation $\Pi$, $\beta-$Haar $Q$ or $Q_{q}$ matrices
for classical, isotropic or quantum cases respectively. }

\subsection{\label{sub:departure}The Departure and The Matching Three Moments
Theorems}

In general we have the $i^{\textrm{th}}$ moments:

\begin{eqnarray*}
m_{i}^{iso} & = & \frac{1}{m}\mathbb{E}\textrm{Tr}\left(A+Q^{T}BQ\right)^{i}\\
m_{i}^{q} & = & \frac{1}{m}\mathbb{E}\textrm{Tr}\left(A+Q_{q}^{T}BQ_{q}\right)^{i},\mbox{ and }\\
m_{i}^{c} & = & \frac{1}{m}\mathbb{E}\textrm{Tr}\left(A+\Pi^{T}B\Pi\right)^{i}.
\end{eqnarray*}
where $m\equiv d^{N}$. If we expand the moments above we find some
terms can be put in the form $\mathbb{E}\textrm{Tr}\left(A^{m_{1}}Q_{\bullet}^{T}B^{m_{2}}Q_{\bullet}\right)$
and the remaining terms can be put in the form $\mathbb{E}\textrm{Tr}\left\{ \ldots Q_{\bullet}^{T}B^{\ge1}Q_{\bullet}A^{\ge1}Q_{\bullet}^{T}B^{\ge1}Q_{\bullet}\ldots\right\} .$
The former terms we denote \textit{non-departing}; the remaining terms
we denote \textit{departing}.

For example, when $i=4$, 

\begin{eqnarray}
m_{4}^{iso} & = & \frac{1}{m}\mathbb{E}\left\{ \textrm{Tr}\left[A^{4}+4A^{3}Q^{T}BQ+4A^{2}Q^{T}B^{2}Q+4AQ^{T}B^{3}Q+\mathbf{\underline{2\left(\mathbf{AQ^{T}BQ}\right)^{2}}}+B^{4}\right]\right\} \label{eq:fourthMoments}\\
m_{4}^{q} & = & \frac{1}{m}\mathbb{E}\left\{ \textrm{Tr}\left[A^{4}+4A^{3}Q_{q}^{T}BQ_{q}+4A^{2}Q_{q}^{T}B^{2}Q_{q}+4AQ_{q}^{T}B^{3}Q_{q}+\underline{\mathbf{2\left(AQ_{q}^{T}BQ_{q}\right)^{2}}}+B^{4}\right]\right\} \nonumber \\
m_{4}^{c} & = & \frac{1}{m}\mathbb{E}\left\{ \textrm{Tr}\left[A^{4}+4A^{3}\Pi^{T}B\Pi+4A^{2}\Pi^{T}B^{2}\Pi+4A\Pi^{T}B^{3}\Pi+\mathbf{\underline{2\left(A\Pi^{T}B\Pi\right)^{2}}}+B^{4}\right]\right\} ,\nonumber 
\end{eqnarray}
where the only departing terms and the corresponding classical term
are shown as underlined and bold faced.
\begin{thm*}
\textbf{\textup{\label{thm:(The-Departure-Theorem)}(The Departure
Theorem)}}\textbf{ }The moments of the quantum, isotropic and classical
sums differ only in the departing terms: $\mathbb{E}\textrm{Tr}\left\{ \ldots Q_{\bullet}^{T}B^{\ge1}Q_{\bullet}A^{\ge1}Q_{\bullet}^{T}B^{\ge1}Q_{\bullet}\ldots\right\} .$ \end{thm*}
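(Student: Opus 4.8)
The plan is to show directly that every non-departing term evaluates to the \emph{same} number in all three ensembles, so that any discrepancy among the moments is confined to the departing terms. First I would expand $\tfrac{1}{m}\mathbb{E}\,\textrm{Tr}\left(A+Q_{\bullet}^{T}BQ_{\bullet}\right)^{i}$ as a sum over cyclic words of length $i$ in the two letters $A$ and $\widetilde{B}:=Q_{\bullet}^{T}BQ_{\bullet}$, and classify each word by the number of maximal $\widetilde{B}$-blocks it contains (read cyclically). Since $Q_{\bullet}^{T}Q_{\bullet}=\mathbb{I}$ holds for all three choices ($\Pi$ is a permutation, $Q$ is $\beta$-Haar orthogonal, and $Q_{q}$ of Eq.~\ref{eq:OddEven} is a tensor product of orthogonal blocks, hence orthogonal), any run of consecutive $\widetilde{B}$'s collapses via $\widetilde{B}^{k}=Q_{\bullet}^{T}B^{k}Q_{\bullet}$. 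A word with exactly one such block is therefore, up to cyclicity, of the form $\mathbb{E}\,\textrm{Tr}\left(A^{m_{1}}Q_{\bullet}^{T}B^{m_{2}}Q_{\bullet}\right)$ — precisely the \emph{non-departing} terms (the pure powers $\textrm{Tr}\,A^{i}$ and $\textrm{Tr}\,B^{i}$ being the degenerate $m_{2}=0$ and $m_{1}=0$ cases) — while words with two or more separated blocks retain the interleaved pattern $\ldots Q_{\bullet}^{T}B^{\ge1}Q_{\bullet}A^{\ge1}Q_{\bullet}^{T}B^{\ge1}Q_{\bullet}\ldots$ and are the departing terms.

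The key step is to evaluate a single non-departing term. Because $A^{m_{1}}$ and $B^{m_{2}}$ are diagonal, the contraction picks up only the ``Hadamard-diagonal'' of $Q_{\bullet}$: writing $a'_{i}=(A^{m_{1}})_{ii}$ and $b'_{k}=(B^{m_{2}})_{kk}$,
\[
\textrm{Tr}\left(A^{m_{1}}Q_{\bullet}^{T}B^{m_{2}}Q_{\bullet}\right)=\sum_{i,k}a'_{i}\,b'_{k}\,\left(Q_{\bullet}\circ Q_{\bullet}\right)_{ki}.
\]
Using that the eigenvalues $(A,B)$ are independent of the eigenvector matrix $Q_{\bullet}$, I would factor the expectation and substitute the entrywise second-moment matrix $\mathbb{E}(Q_{\bullet}\circ Q_{\bullet})$. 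This is exactly the object controlled by Lemma~\ref{lem:var}: for the quantum case $\mathbb{E}(Q_{q}\circ Q_{q})=d^{-N}J_{d^{N}}=m^{-1}J_{m}$; for the isotropic case each entry of a $\beta$-Haar matrix has mean zero and variance $1/m$, giving $\mathbb{E}(Q\circ Q)=m^{-1}J_{m}$; and for the classical case $\Pi_{ki}\in\{0,1\}$ with $\Pi_{ki}^{2}=\Pi_{ki}$ and $\mathbb{E}\,\Pi_{ki}=1/m$, again giving $\mathbb{E}(\Pi\circ\Pi)=m^{-1}J_{m}$. Since in every case each entry equals the same constant $1/m$, the term collapses to
\[
\mathbb{E}\,\textrm{Tr}\left(A^{m_{1}}Q_{\bullet}^{T}B^{m_{2}}Q_{\bullet}\right)=\frac{1}{m}\,\mathbb{E}\left[\textrm{Tr}\,A^{m_{1}}\;\textrm{Tr}\,B^{m_{2}}\right],
\]
which is manifestly independent of the choice of $Q_{\bullet}$. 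Summing over all non-departing words then shows that the non-departing contribution to $m_{i}^{c}$, $m_{i}^{iso}$ and $m_{i}^{q}$ is identical, so any difference among the three moments lives entirely in the departing terms.

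The main obstacle is not analytic but organizational: one must make the word-classification precise enough that the collapse $\widetilde{B}^{k}=Q_{\bullet}^{T}B^{k}Q_{\bullet}$ is applied correctly under cyclic equivalence, and verify that a term is non-departing \emph{iff} it reduces to the single-block form above — equivalently, that exactly the single-block terms depend on $Q_{\bullet}$ only through the entrywise second moment $\mathbb{E}(Q_{\bullet}\circ Q_{\bullet})$. By contrast, a departing term contracts $Q_{\bullet}$ against itself in at least two independent index pairs, so its expectation involves higher joint moments of the entries of $Q_{\bullet}$ (e.g.\ quantities of the type $\mathbb{E}[Q_{ki}^{2}Q_{lj}^{2}]$), and these genuinely differ across $\Pi$, $Q$ and $Q_{q}$ — which is precisely why the theorem isolates them. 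Establishing those differences is unnecessary here, since the statement only asserts equality of the non-departing part.
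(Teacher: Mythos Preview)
Your proposal is correct and follows essentially the same route as the paper's proof: both arguments reduce a non-departing term $\mathbb{E}\,\textrm{Tr}(A^{m_1}Q_\bullet^{T}B^{m_2}Q_\bullet)$ to a quantity that depends on $Q_\bullet$ only through the entrywise second moments $\mathbb{E}(q_{ij}^2)$, then invoke Lemma~\ref{lem:var} (together with the standard facts for $\beta$-Haar and permutation matrices) to conclude that this equals $1/m$ in all three ensembles. Your write-up is a bit more systematic in setting up the cyclic-word classification and in packaging the key step via the Hadamard square $\mathbb{E}(Q_\bullet\circ Q_\bullet)=m^{-1}J_m$, but the mathematical content is the same.
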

\begin{proof}
Below the repeated indices are summed over. If $A$ and $B$ are any
diagonal matrices, and $Q_{\bullet}$ is $Q$ or $Q_{q}$ or $\Pi$
of size $m\times m$ then $\mathbb{E}\left(q_{ij}^{2}\right)=1/m$
, by symmetry and by Lemma \ref{lem:var} respectively. Since the
$\mathbb{E}\textrm{Tr}\left(AQ_{\bullet}^{T}BQ_{\bullet}\right)=\mathbb{E}\left(q_{ij}^{2}a_{i}b_{j}\right)$,
where expectation is taken over randomly ordered eigenvalues and eigenvectors;
the expected value is $m^{2}\left(\frac{1}{m}\right)\mathbb{E}\left(a_{i}b_{j}\right)$
for any $i$ or $j$. Hence, $\frac{1}{m}\mathbb{E}\textrm{Tr}\left(AQ_{\bullet}^{T}BQ_{\bullet}\right)=\mathbb{E}\left(a_{i}b_{j}\right)=\mathbb{E}\left(a_{i}\right)\mathbb{E}\left(b_{j}\right)$,
which is equal to the classical value. The first equality is implied
by permutation invariance of entries in $A$ and $B$ and the second
equality follows from the independence of $A$ and $B$. 
\end{proof}
\begin{flushleft}
Therefore, the three cases differ only in the terms $\frac{2}{m}\mathbb{E}\textrm{Tr}\left(AQ^{T}BQ\right)^{2}$,
$\frac{2}{m}\mathbb{E}\textrm{Tr}\left(AQ_{q}^{T}BQ_{q}\right)^{2}$
and $\frac{2}{m}\mathbb{E}\textrm{Tr}\left(A\Pi^{T}B\Pi\right)^{2}$
in Eq. \ref{eq:fourthMoments}.
\par\end{flushleft}
\begin{thm*}
\textbf{\textup{\label{thm:(The-Matching-Three}(The Matching Three
Moments Theorem) }}The first three moments of the quantum, iso and
classical sums are equal.\end{thm*}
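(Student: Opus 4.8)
The plan is to lean on the Departure Theorem, which has just been established: it tells us that $m_{i}^{q}$, $m_{i}^{iso}$, and $m_{i}^{c}$ can differ \emph{only} through the departing terms, i.e.\ traces of the form $\mathbb{E}\textrm{Tr}\{\ldots Q_{\bullet}^{T}B^{\ge1}Q_{\bullet}A^{\ge1}Q_{\bullet}^{T}B^{\ge1}Q_{\bullet}\ldots\}$, while every non-departing contribution is common to all three sums. Consequently the whole statement reduces to a single combinatorial claim: for $i=1,2,3$ the expansion of $(A+Q_{\bullet}^{T}BQ_{\bullet})^{i}$ contains no departing term whatsoever, so the three moments must coincide.

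To make this precise I would set $C\equiv Q_{\bullet}^{T}BQ_{\bullet}$ and expand $(A+C)^{i}$ as a sum over all words of length $i$ in the two letters $A$ and $C$. The key simplification is that consecutive factors of $C$ collapse, since $C^{k}=Q_{\bullet}^{T}B^{k}Q_{\bullet}$; combined with the cyclic invariance of the trace, this lets me rewrite every word as an alternating pattern of $A$-blocks and $C$-blocks arranged around the trace cycle. In this reduced form a term is non-departing exactly when it has at most one $C$-block, in which case cyclicity puts it in the shape $\textrm{Tr}(A^{m_{1}}Q_{\bullet}^{T}B^{m_{2}}Q_{\bullet})$; it is departing exactly when two or more $C$-blocks are separated by intervening $A$-blocks.

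Next comes the degree count, which is the crux. A departing term needs at least two $C$-blocks, and by cyclicity two such blocks are separated by $A$-blocks on \emph{both} connecting arcs of the trace cycle, forcing at least one $A$ on each side. Hence any departing word has total degree at least $1+1+1+1=4$ (two factors of $B$ and two of $A$ in alternation), so none can arise when $i\le3$. Concretely, for $i=1,2,3$ every word reduces to one of $\textrm{Tr}(A^{i})$, $\textrm{Tr}(B^{i})$, or $\textrm{Tr}(A^{m_{1}}Q_{\bullet}^{T}B^{m_{2}}Q_{\bullet})$ with $m_{1}+m_{2}\le3$ — all non-departing — and the first departing term appears only at $i=4$, namely $\tfrac{2}{m}\mathbb{E}\textrm{Tr}(AQ_{\bullet}^{T}BQ_{\bullet})^{2}$ as displayed in Eq.~\ref{eq:fourthMoments}.

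Applying the Departure Theorem to these exclusively non-departing expansions then yields $m_{i}^{q}=m_{i}^{iso}=m_{i}^{c}$ for $i=1,2,3$, which is the assertion. I expect the only delicate point to be the reduction-and-counting step: one must argue cleanly that the collapse $C^{k}=Q_{\bullet}^{T}B^{k}Q_{\bullet}$ together with cyclic rearrangement genuinely prevents any word of degree $\le3$ from realizing the two-separated-$B$-block pattern defining a departing term, rather than merely verifying the $i=4$ case by hand. Everything else is bookkeeping once that structural fact is in place.
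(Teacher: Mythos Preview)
Your proposal is correct and follows essentially the same approach as the paper: both arguments reduce the claim to showing that the expansions of $(A+Q_{\bullet}^{T}BQ_{\bullet})^{i}$ for $i\le 3$ contain no departing terms, and then invoke the Departure Theorem. The only difference is cosmetic: the paper simply writes out $m_{1}^{(\bullet)},m_{2}^{(\bullet)},m_{3}^{(\bullet)}$ explicitly (using $\textrm{Tr}(M_{1}M_{2})=\textrm{Tr}(M_{2}M_{1})$) and inspects the resulting terms, whereas you give the general degree-counting argument that a departing word requires at least four letters.
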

\begin{proof}
The first three moments are
\begin{equation}
\begin{array}{c}
m_{1}^{\left(\bullet\right)}=\frac{1}{m}\mathbb{E}\textrm{Tr}\left(A+B\right)\\
m_{2}^{\left(\bullet\right)}=\frac{1}{m}\mathbb{E}\textrm{Tr}\left(A+Q_{\bullet}^{T}BQ_{\bullet}\right)^{2}=\frac{1}{m}\mathbb{E}\textrm{Tr}\left(A^{2}+2AQ_{\bullet}^{T}BQ_{\bullet}+B^{2}\right)\\
m_{3}^{\left(\bullet\right)}=\frac{1}{m}\mathbb{E}\textrm{Tr}\left(A+Q_{\bullet}^{T}BQ_{\bullet}\right)^{3}=\frac{1}{m}\mathbb{E}\textrm{Tr}\left(A^{3}+3A^{2}Q_{\bullet}^{T}BQ_{\bullet}+3AQ_{\bullet}^{T}B^{2}Q_{\bullet}+B^{3}\right),
\end{array}
\end{equation}
where $Q_{\bullet}$ is $Q$ and $Q_{q}$ for the iso and the quantum
sums respectively and we used the familiar trace property $\textrm{Tr}(M_{1}M_{2})=\textrm{Tr}(M_{2}M_{1})$.
The equality of the first three moments of the iso and quantum with
the classical follows from The Departure Theorem.
\end{proof}
Furthermore, in the expansion of any of the moments $>4$ all the
non-departing terms are exactly captured by IE. These terms are equal
to the corresponding terms in the classical and the isotropic and
therefore equal to any linear combination of them. The departing terms
in higher moments (i.e.,$>4$) that are approximated by IE, we conjecture
are of little relevance. For example, the fifth moment has only two
terms (shown in bold) in its expansion that are departing: 

\begin{equation}
\begin{array}{c}
m_{5}=\frac{1}{m}\mathbb{E}\textrm{Tr}\left(A^{5}+5A^{4}Q_{\bullet}^{T}BQ_{\bullet}+\mathit{5}A^{3}Q_{\bullet}^{T}B^{2}Q_{\bullet}+\mathit{5}A^{2}Q_{\bullet}^{T}B^{3}Q_{\bullet}+\mathbf{\underline{5A\left(AQ_{\bullet}^{T}BQ_{\bullet}\right)^{2}}+}\right.\\
\left.\mathbf{\underline{5\left(AQ_{\bullet}^{T}BQ_{\bullet}\right)^{2}Q_{\bullet}^{T}BQ_{\bullet}}}+\mathit{5}AQ_{\bullet}^{T}B^{4}Q_{\bullet}+B^{5}\right)
\end{array}\label{eq:fifthmoments}
\end{equation}

\begin{table}
\begin{centering}
\begin{tabular}{|c|c|c|c|c|}
\hline 
\noalign{\vskip2sp}
number of & number of odds  & number of odds  & size of & Number of \tabularnewline
\noalign{\vskip2sp}
 sites & or evens ($N$ odd) & or evens ($N$ even) &  $H$ & copies\tabularnewline
\hline 
$N$ & $k=\frac{N-1}{2}$ & $k_{\mbox{odd}\vphantom{\mbox{even}}}=\frac{N}{2},\; k_{\mbox{even}\mbox{\ensuremath{\vphantom{odd}}}}=\frac{N-2}{2}$ & $m=d^{N}$ & $t$\tabularnewline
\hline 
\end{tabular}
\par\end{centering}

\vspace{0.3in}

\begin{centering}
\begin{tabular}{|c|c|c|c|c|c|c|c|}
\hline 
dimension of qudits & size of local terms & $l^{th}$ moment & $l^{th}$ cumulant & mean & variance & skewness & kurtosis\tabularnewline
\hline 
\hline 
$d$ & $n=d^{2}$ & $m_{l}$ & $\kappa_{l}$ & $\mu$ & $\sigma^{2}$ & $\gamma_{1}$ & $\gamma_{2}$\tabularnewline
\hline 
\end{tabular}
\par\end{centering}

\caption{\label{tab:parameters}Notation}
\end{table}

By the Departure Theorem the numerator in Eq. \ref{eq:convex} becomes,

\begin{equation}
\gamma_{2}^{q}-\gamma_{2}^{iso}=\frac{\kappa_{4}^{q}-\kappa_{4}^{iso}}{\sigma^{4}}=\frac{2}{m}\frac{\mathbb{E}\left\{ \textrm{Tr}\left[\left(AQ_{q}^{T}BQ_{q}\right)^{2}-\left(AQ^{T}BQ\right)^{2}\right]\right\} }{\sigma^{4}}\label{eq:numer}
\end{equation}
and the denominator in Eq. \ref{eq:convex} becomes, 

\begin{equation}
\gamma_{2}^{c}-\gamma_{2}^{iso}=\frac{\kappa_{4}^{c}-\kappa_{4}^{iso}}{\sigma^{4}}=\frac{2}{m}\frac{\mathbb{E}\left\{ \textrm{Tr}\left[\left(A\Pi^{T}B\Pi\right)^{2}-\left(AQ^{T}BQ\right)^{2}\right]\right\} }{\sigma^{4}},\label{eq:denom}
\end{equation}
where as before, $Q$ is a $\beta-$Haar measure orthogonal matrix
of size $m=d^{N}$, $Q_{q}=\left(Q_{q}^{(A)}\right)^{T}Q_{q}^{(B)}$
given by Eq. \ref{eq:OddEven} and $\kappa_{4}^{\bullet}$ denote
the fourth cumulants. Therefore, evaluation of $p$ reduces to the
evaluation of the right hand sides of Eqs. \ref{eq:numer} and \ref{eq:denom}.

Below we do not want to restrict ourselves to only chains with odd
number of sites and we need to take into account the multiplicity
of the eigenvalues as a result of taking the tensor product with identity.
It is convenient to denote the size of the matrices involved by $m=d^{N}=tn^{k}$,
where $n=d^{2}$ and $k=\frac{N-1}{2}$ and $t$ is the number of
copies (Section \ref{sub:Free} and Table \ref{tab:parameters}).

\subsection{Distribution of $A$ and $B$ }

The goal of this section is to express the moments of the entries
of $A$ and $B$ (e.g., $m_{2}^{A}$ and $m_{1,1}^{A}$) in terms
of the moments of the local terms (e.g for odd local terms $m_{2}^{\mbox{odd}},m_{11}^{\mbox{odd}}$).
Note that $A$ and $B$ are independent. The odd summands that make
up $A$ all commute and therefore can be locally diagonalized to give
the diagonal matrix $A$ (similarly for $B$),

\begin{eqnarray}
A & = & \sum_{i=1,3,\cdots}^{N-2}\mathbb{I}\otimes\Lambda_{i}\otimes\mathbb{I}\label{eq:A_and_B}\\
B & = & \sum_{i=2,4,\cdots}^{N-1}\mathbb{I}\otimes\Lambda_{i}\otimes\mathbb{I},\nonumber 
\end{eqnarray}
where $\Lambda_{i}$ are of size $d^{2}$ and are the diagonal matrices
of the local eigenvalues. 

The diagonal matrices $A$ and $B$ are formed by a direct sum of
the local eigenvalues of odds and evens respectively. For open boundary
conditions (OBC) each entry has a multiplicity given by Table \ref{tab:multiplicity}.

\begin{table}[H]
\begin{centering}
\begin{tabular}{|c|c|c|}
\hline 
OBC & $N$ odd & $N$ even\tabularnewline
\hline 
\hline 
$A$ & $d$ & $1$\tabularnewline
\hline 
$B$ & $d$ & $d^{2}$\tabularnewline
\hline 
\end{tabular}
\par\end{centering}

\begin{centering}
\caption{\label{tab:multiplicity}The multiplicity of terms in $A$ and $B$
for OBC. For closed boundary conditions there is no repetition.}

\par\end{centering}

\end{table}

Comment: We emphasize that $A$ and $B$ are independent of the eigenvector
structures. In particular, $A$ and $B$ are the same among the three
cases of isotropic, quantum and classical. 

We calculate the moments of $A$ and $B$. Let us treat the second
moment of $A$ ($B$ is done the same way). By the permutation invariance
of entries in $A$

\begin{eqnarray}
m_{2}^{A}\equiv\mathbb{E}\left(a_{1}^{2}\right) & = & \mathbb{E}\left(\lambda_{i_{1}}^{\left(1\right)}+\cdots+\lambda_{i_{k}}^{\left(k\right)}\right)^{2}\nonumber \\
 & = & \mathbb{E}\left[k\left(\lambda^{2}\right)+k\left(k-1\right)\lambda^{\left(1\right)}\lambda^{\left(2\right)}\right]\label{eq:m_2}\\
 & = & km_{2}^{\mbox{odd}}+k\left(k-1\right)m_{1,1}^{\mbox{odd}}\nonumber 
\end{eqnarray}
where expectation is taken over randomly chosen local eigenvalues,
$m_{2}^{\mbox{odd}}\equiv\mathbb{E}\left(\lambda_{i}^{2}\right)$
and $m_{1,1}^{\mbox{odd}}\equiv\mathbb{E}\left(\lambda_{i}\lambda_{j}\right)$
for some uniformly chosen $i$ and $j$ with $i\ne j$. The permutation
invariance assumption implies $\mathbb{E}\left(a_{i}^{2}\right)=\mathbb{E}\left(a_{1}^{2}\right)$
for all $i=1\cdots m$.

Comment: The key to this argument giving $m_{2}^{A}$ is that the
indices are not sensitive to the copying that results from the tensor
product with $\mathbb{I}_{d}$ at the boundaries.

Next we calculate the correlation between two diagonal terms, namely
$m_{1,1}^{A}\equiv\mathbb{E}\left(a_{i}a_{j}\right)$ for $i\neq j$.
We need to incorporate the multiplicity, denoted by $t$, due to the
tensor product with an identity matrix at the end of the chain,

\begin{eqnarray}
m_{1,1}^{A} & = & \frac{1}{m\left(m-1\right)}\mathbb{E}\left\{ \left(\sum_{i_{1},\cdots,i_{k}=1}^{n}\lambda_{i_{1}}^{\left(1\right)}+\cdots+\lambda_{i_{k}}^{\left(k\right)}\right)^{2}-\sum_{i_{1},\cdots,i_{k}=1}^{n}\left(\lambda_{i_{1}}^{\left(1\right)}+\cdots+\lambda_{i_{k}}^{\left(k\right)}\right)^{2}\right\} \label{eq:elemGeneral}\\
 & = & k\left(k-1\right)\mathbb{E}\left(\lambda\right)^{2}+k\left\{ \textrm{prob}\left(\lambda^{2}\right)\mathbb{E}\left(\lambda^{2}\right)+\textrm{prob}\left(\lambda_{1}\lambda_{2}\right)\mathbb{E}\left(\lambda_{1}\lambda_{2}\right)\right\} \nonumber \\
 & = & k\left(k-1\right)m_{2}^{\mbox{odd}}+\frac{k}{m-1}\left\{ \left(tn^{k-1}-1\right)m_{2}^{\mbox{odd}}+\left(tn^{k-1}\left(n-1\right)\right)m_{1,1}^{\mbox{odd}}\right\} \nonumber 
\end{eqnarray}
where, $\textrm{prob}\left(\lambda^{2}\right)=\frac{tn^{k-1}-1}{m-1}$
and $\textrm{prob}\left(\lambda_{1}\lambda_{2}\right)=\frac{tn^{k-1}\left(n-1\right)}{m-1}$.
Similarly for $B$.

\subsection{\label{sub:Isotropic-theory} Evaluation and Universality of $p=\frac{\gamma_{2}^{q}-\gamma_{2}^{iso}}{\gamma_{2}^{c}-\gamma_{2}^{iso}}$}

Recall the definition of $p$; from Eqs. \ref{eq:convex}, \ref{eq:numer}
and \ref{eq:denom} we have,

\begin{equation}
1-p=\frac{\mathbb{E}\mbox{Tr}\left(A\Pi^{T}B\Pi\right)^{2}-\mathbb{E}\textrm{Tr}\left(AQ_{q}^{T}BQ_{q}\right)^{2}}{\mathbb{E}\mbox{Tr}\left(A\Pi^{T}B\Pi\right)^{2}-\mathbb{E}\textrm{Tr}\left(AQ^{T}BQ\right)^{2}}.\label{eq:1-p}
\end{equation}

The classical case

\begin{equation}
\frac{1}{m}\mathbb{E}\mbox{Tr}\left(A\Pi^{T}B\Pi\right)^{2}=\frac{1}{m}\mathbb{E}\sum_{i=1}^{m}a_{i}^{2}b_{i}^{2}=\mathbb{E}\left(a_{i}^{2}\right)\mathbb{E}\left(b_{i}^{2}\right)=m_{2}^{A}m_{2}^{B}.\label{eq:classical_depart}
\end{equation}
\begin{table}
\begin{centering}
\begin{tabular}{|c|c|c|}
\hline 
moments & expectation values & count\tabularnewline
\hline 
\hline 
$m_{2}^{2}$ & $\mathbb{E}\left(\left|q_{i,j}\right|^{4}\right)=\frac{\beta+2}{m\left(m\beta+2\right)}$ & $m^{2}$\tabularnewline
\hline 
$m_{2}m_{11}$ & $\mathbb{E}\left(\left|q_{1,1}q_{1,2}\right|^{2}\right)=\frac{\beta}{m\left(m\beta+2\right)}$ & $2m^{2}\left(m-1\right)$\tabularnewline
\hline 
$\left(m_{11}\right)^{2}$ & $\mathbb{E}\left(q_{1,1}\overline{q_{1,2}}\overline{q_{2,1}}q_{2,2}\right)=-\frac{\beta}{m\left(m\beta+2\right)\left(m-1\right)}$ & $m^{2}\left(m-1\right)^{2}$\tabularnewline
\hline 
 & $\mathbb{E}\left(q_{13}^{2}q_{24}^{2}\right)=\frac{\beta\left(n-1\right)+2}{n\left(n\beta+2\right)\left(n-1\right)}$ & \tabularnewline
\hline 
\end{tabular}
\par\end{centering}

\centering{}\caption{\label{tab:HaarExp} The expectation values and counts of colliding
terms in $Q$ when it is $\beta-$Haar distributed. In this section
we use the first row; we include the last three rows for the calculations
in the appendix.}
\end{table}

Comment: Strictly speaking after the first equality we must have used
$b_{\pi_{i}}$ instead of $b_{i}$ but we simplified the notation
as they are the same in an expectation sense.

The general form for the denominator of Eq. \ref{eq:1-p} is

\begin{equation}
\frac{1}{m}\mathbb{E}\textrm{Tr}\left[\left(A\Pi^{T}B\Pi\right)^{2}-\left(AQ^{T}BQ\right)^{2}\right]=\frac{1}{m}\mathbb{E}\left\{ a_{l}^{2}b_{l}^{2}-a_{i}a_{k}b_{j}b_{p}\left(q_{ji}q_{jk}q_{pk}q_{pi}\right)\right\} .\label{eq:isotropic}
\end{equation}
It's worth noting that the arguments leading to Eq. \ref{eq:Iso-Classical_FINAL}
hold even if one fixes $A$ and $B$ and takes expectation values
over $\Pi$ and a permutation invariant $Q$ whose entries have the
same expectation value. The right hand side of Eq. \ref{eq:Iso-Classical_FINAL}
is a homogeneous polynomial of order two in the entries of $A$ and
$B$; consequently it necessarily has the form

\[
\frac{1}{m}\mathbb{E}\textrm{Tr}\left[\left(A\Pi^{T}B\Pi\right)^{2}-\left(AQ^{T}BQ\right)^{2}\right]=c_{1}\left(B,Q\right)m_{2}^{A}+c_{2}\left(B,Q\right)m_{1,1}^{A}
\]
but Eq. \ref{eq:isotropic} must be zero for $A=I$, for which $m_{2}^{A}=m_{1,1}^{A}=1$.
This implies that $c_{1}=-c_{2}$, allowing us to factor out $\left(m_{2}^{A}-m_{1,1}^{A}\right)$.
Similarly, the homogeneity and permutation invariance of $B$ implies,

\[
\frac{1}{m}\mathbb{E}\textrm{Tr}\left[\left(A\Pi^{T}B\Pi\right)^{2}-\left(AQ^{T}BQ\right)^{2}\right]=\left(m_{2}^{A}-m_{1,1}^{A}\right)\left(D_{1}\left(Q\right)m_{2}^{B}+D_{2}\left(Q\right)m_{1,1}^{B}\right).
\]
The right hand side should be zero for $B=I$, whereby we can factor
out $\left(m_{2}^{B}-m_{1,1}^{B}\right)$

\begin{equation}
\frac{1}{m}\mathbb{E}\textrm{Tr}\left[\left(A\Pi^{T}B\Pi\right)^{2}-\left(AQ^{T}BQ\right)^{2}\right]=\left(m_{2}^{A}-m_{1,1}^{A}\right)\left(m_{2}^{B}-m_{1,1}^{B}\right)f\left(Q\right),\label{eq:FreeExpect}
\end{equation}
where $m_{2}^{A}=\mathbb{E}\left(a_{i}^{2}\right)$, $ $ $m_{2}^{B}=\mathbb{E}\left(b_{j}^{2}\right)$,
and $m_{1,1}^{A}=\mathbb{E}\left(a_{i},a_{j}\right)$ , $m_{1,1}^{B}=\mathbb{E}\left(b_{i},b_{j}\right)$.
Moreover $f\left(Q\right)$ is a homogeneous function of order four
in the entries of $Q$. To evaluate $f\left(Q\right)$, it suffices
to let $A$ and $B$ be projectors of rank one where $A$ would have
only one nonzero entry on the $i^{\mbox{th }}$ position on its diagonal
and $B$ only one nonzero entry on the $j^{\mbox{th }}$ position
on its diagonal. Further take those nonzero entries to be ones, giving
$m_{1,1}^{A}=m_{1,1}^{B}=0$ and $m_{2}^{A}=m_{2}^{B}=1/m$,

\begin{equation}
\frac{1}{m}\mathbb{E}\textrm{Tr}\left[\left(A\Pi^{T}B\Pi\right)^{2}-\left(AQ^{T}BQ\right)^{2}\right]=\frac{1}{m^{2}}f\left(Q\right)
\end{equation}
But the left hand side is

\begin{eqnarray*}
\frac{1}{m}\mathbb{E}\textrm{Tr}\left[\left(A\Pi^{T}B\Pi\right)^{2}-\left(AQ^{T}BQ\right)^{2}\right] & = & \frac{1}{m}\mathbb{E}\left[\delta_{ij}-q_{ij}^{4}\right]\\
 & = & \frac{1}{m}\left\{ \frac{1}{m^{2}}\sum_{ij}\delta_{ij}-\frac{1}{m^{2}}\sum_{ij}\mathbb{E}\left(q_{ij}^{4}\right)\right\} \\
 & = & \frac{1}{m}\left\{ \frac{1}{m}-\mathbb{E}\left(q_{ij}^{4}\right)\right\} ,
\end{eqnarray*}
where, we used the homogeneity of $Q$. Consequently, by equating
this to $f\left(Q\right)/m^{2}$, we get the desired quantity

\[
f\left(Q\right)=\left\{ 1-m\mathbb{E}\left(q_{ij}^{4}\right)\right\} 
\]
Our final result Eq. \ref{eq:FreeExpect} now reads

\begin{equation}
\frac{1}{m}\mathbb{E}\textrm{Tr}\left[\left(A\Pi^{T}B\Pi\right)^{2}-\left(AQ^{T}BQ\right)^{2}\right]=\left(m_{2}^{A}-m_{1,1}^{A}\right)\left(m_{2}^{B}-m_{1,1}^{B}\right)\left\{ 1-m\mathbb{E}\left(q_{ij}^{4}\right)\right\} .\label{eq:isotropicBook}
\end{equation}
The same calculation where each of the terms is obtained separately
yields the same result (Appendix). In this paper $p$ is formed by
taking $Q$ to have a $\beta-$Haar measure. Expectation values of
the entries of $Q$ are listed in the Table \ref{tab:HaarExp}.

We wish to express everything in terms of the local terms; using Eqs.
\ref{eq:m_2} and \ref{eq:elemGeneral} as well as $tn^{k}=m$,

\begin{eqnarray*}
m_{2}^{A}-m_{1,1}^{A} & = & \frac{tk\left(n-1\right)n^{k-1}}{m-1}\left(m_{2}^{\mbox{odd}}-m_{1,1}^{\mbox{odd}}\right)\\
m_{2}^{B}-m_{1,1}^{B} & = & \frac{tk\left(n-1\right)n^{k-1}}{m-1}\left(m_{2}^{\mbox{even}}-m_{1,1}^{\mbox{even}}\right),
\end{eqnarray*}
giving 

\begin{eqnarray}
\frac{1}{m}\mathbb{E}\left[\mbox{Tr}\left(A\Pi^{T}B\Pi\right)^{2}-\textrm{Tr}\left(AQ^{T}BQ\right)^{2}\right] & = & \left(m_{2}^{\mbox{odd}}-m_{1,1}^{\mbox{odd}}\right)\left(m_{2}^{\mbox{even}}-m_{1,1}^{\mbox{even}}\right)\times\nonumber \\
 &  & \left(\frac{km\left(n-1\right)}{n\left(m-1\right)}\right)^{2}\left\{ 1-m\mathbb{E}\left(q_{ij}^{4}\right)\right\} .\label{eq:Iso-Classical_FINAL}
\end{eqnarray}

\vspace{0.2in}

We now proceed to the quantum case where we need to evaluate $\frac{1}{m}\mathbb{E}\left[\left(A\Pi^{T}B\Pi\right)^{2}-\textrm{Tr}\left(AQ_{q}^{T}BQ_{q}\right)^{2}\right]$.
In this case, we cannot directly use the techniques that we used to
get Eq. \ref{eq:Iso-Classical_FINAL} because $Q_{q}$ is not permutation
invariant despite local eigenvectors being so. Before proceeding further
we like to prove a useful lemma (Lemma \ref{lem:diamonds}). Let us
simplify the notation and denote the local terms that are drawn randomly
from a known distribution by $H_{l,l+1}\equiv H^{\left(l\right)}$
whose eigenvalues are $\Lambda_{l}$ as discussed above. 

\begin{flushleft}
Recall that $A$ represents the \textit{sum} of all the odds and $Q_{q}^{-1}BQ_{q}$
the \textit{sum} of all the evens,
\par\end{flushleft}

\[
H_{\mbox{odd}\vphantom{\mbox{even}}}=\sum_{l=1,3,5,\cdots}\mathbb{I}\otimes H^{\left(l\right)}\otimes\mathbb{I},\mbox{ and}\quad H_{\mbox{even}\vphantom{\mbox{odd}}}=\sum_{l=2,4,6,\cdots}\mathbb{I}\otimes H^{\left(l\right)}\otimes\mathbb{I},
\]

\begin{flushleft}
Hence, the expansion of $\frac{1}{m}\mathbb{E}\left[\textrm{Tr}\left(AQ_{q}^{T}BQ_{q}\right)^{2}\right]$
amounts to picking an odd term, an even term, then another odd term
and another even term, multiplying them together and taking the expectation
value of the trace of the product (Figure \ref{fig:racksMoment}).
Therefore, each term in the expansion can have four, three or two
different local terms, whose expectation values along with their counts
are needed. These expectations are taken with respect to the local
terms (dense $d^{2}\times d^{2}$ random matrices).
\par\end{flushleft}

\begin{figure}
\begin{centering}
\includegraphics[scale=1.5]{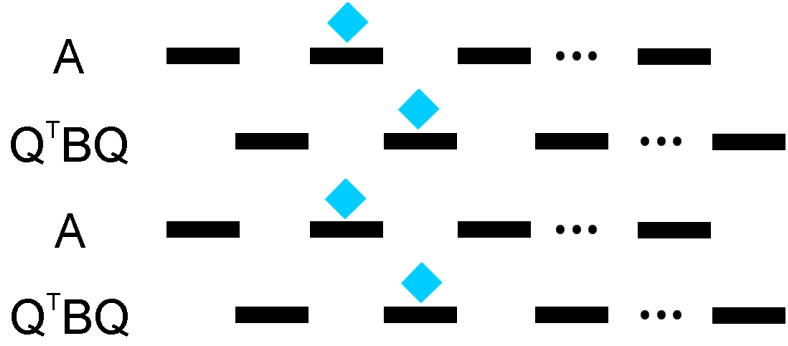}
\par\end{centering}

\caption{\label{fig:racksMoment}The terms in the expansion of $\frac{1}{m}\mathbb{E}\left[\textrm{Tr}\left(AQ_{q}^{T}BQ_{q}\right)^{2}\right]$
can be visualized as picking an element from each row from top to
bottom and multiplying. Each row has $k$ of the local terms corresponding
to a chain with odd number of terms. Among $k^{4}$ terms roughly
$k^{2}$ of them differ among the classical, isotropic and quantum
cases (See Eqs. \ref{eq:QExpCountOdd} and \ref{eq:QExpCountEven}).
An example of such a choice is shown by diamonds. }
\end{figure}

The expectation values depend on the type of random matrix distribution
from which the local terms are drawn. The counting however, depends
on the configuration of the lattice only. We show the counting of
the number of terms, taking care of the boundary terms for an open
chain, along with the type of expectation values by which they need
to be weighted:

\uline{For $N$ odd ($k$ odd terms and $k$ even terms)}

\begin{equation}
\begin{array}{c}
\textrm{Four}\: H^{\left(\centerdot\right)}\textrm{'s}:k^{2}\left(k-1\right)^{2}\Rightarrow d^{N-u_{1}}\mathbb{E}\textrm{Tr}\left(H^{\left(l\right)}\right)^{4},\mbox{ }u_{1}\in\left\{ 5,\cdots,8\right\} \\
\textrm{Three}\: H^{\left(\centerdot\right)}\textrm{'s}:2k^{2}\left(k-1\right)\Rightarrow d^{N-u_{2}}\mathbb{E}\textrm{Tr}\left(\left[H^{\left(l\right)}\right]^{2}\right)\mathbb{E}\textrm{Tr}\left(H^{\left(l\right)}\right)^{2},\mbox{ }u_{2}\in\left\{ 4,5,6\right\} \\
\textrm{Two}\: H^{\left(\centerdot\right)}\textrm{'s:}\left(k-1\right)^{2}\;\textrm{Not\;\ Entangled}\Rightarrow d^{N-4}\left\{ \mathbb{E}\textrm{Tr}\left(\left[H^{\left(l\right)}\right]^{2}\right)\right\} ^{2}\\
\textrm{Two}\: H^{\left(\centerdot\right)}\textrm{'s}:\left(2k-1\right)\;\textrm{ Entangled}\Rightarrow d^{N-3}\mathbb{E}\textrm{Tr}\left[\left(H^{\left(l\right)}\otimes\mathbb{I}\right)\left(\mathbb{I}\otimes H^{\left(l+1\right)}\right)\left(H^{\left(l\right)}\otimes\mathbb{I}\right)\left(\mathbb{I}\otimes H^{\left(l+1\right)}\right)\right]
\end{array}\label{eq:QExpCountOdd}
\end{equation}

\uline{For $N$ even ($k$ odd terms and $k-1$ even terms)}

\begin{equation}
\begin{array}{c}
\textrm{Four}\: H^{\left(\centerdot\right)}\textrm{'s}:k\left(k-1\right)^{2}\left(k-2\right)\Rightarrow d^{N-u_{1}}\mathbb{E}\textrm{Tr}\left(H^{\left(l\right)}\right)^{4},\mbox{ }u_{1}\in\left\{ 5,\cdots,8\right\} \\
\textrm{Three}\: H^{\left(\centerdot\right)}\textrm{'s}:k\left(k-1\right)\left(2k-3\right)\Rightarrow d^{N-u_{2}}\mathbb{E}\textrm{Tr}\left(\left[H^{\left(l\right)}\right]^{2}\right)\mathbb{E}\textrm{Tr}\left(H^{\left(l\right)}\right)^{2},\mbox{ }u_{2}\in\left\{ 4,5,6\right\} \\
\textrm{Two}\: H^{\left(\centerdot\right)}\textrm{'s}:\left(k-1\right)\left(k-2\right)\;\textrm{Not\;\ Entangled}\Rightarrow d^{N-4}\left\{ \mathbb{E}\textrm{Tr}\left(\left[H^{\left(l\right)}\right]^{2}\right)\right\} ^{2}\\
\textrm{Two}\: H^{\left(\centerdot\right)}\textrm{'s}:2\left(k-1\right)\;\textrm{ Entangled}\Rightarrow d^{N-3}\mathbb{E}\textrm{Tr}\left[\left(H^{\left(l\right)}\otimes\mathbb{I}\right)\left(\mathbb{I}\otimes H^{\left(l+1\right)}\right)\left(H^{\left(l\right)}\otimes\mathbb{I}\right)\left(\mathbb{I}\otimes H^{\left(l+1\right)}\right)\right]
\end{array}\label{eq:QExpCountEven}
\end{equation}
Here $u_{1}$ and $u_{2}$ indicate the number of sites that the local
terms act on (i.e., occupy). Therefore, $\frac{1}{m}\mathbb{E}\left[\textrm{Tr}\left(AQ_{q}^{T}BQ_{q}\right)^{2}\right]$
is obtained by multiplying each type of terms, weighted by the counts
and summing. For example for $u_{1}=5$ and $u_{2}=3$, when $N$
is odd,

\begin{equation}
\begin{array}{c}
\frac{1}{m}\mathbb{E}\left[\textrm{Tr}\left(AQ_{q}^{T}BQ_{q}\right)^{2}\right]=\frac{1}{m}\left\{ d^{N-5}k^{2}\left(k-1\right)^{2}\mathbb{E}\textrm{Tr}\left(H^{\left(l\right)}\right)^{4}+\right.\\
2k^{2}\left(k-1\right)d^{N-4}\mathbb{E}\textrm{Tr}\left(\left[H^{\left(l\right)}\right]^{2}\right)\mathbb{E}\textrm{Tr}\left(H^{\left(l\right)}\right)^{2}+\left(k-1\right)^{2}d^{N-4}\left\{ \mathbb{E}\textrm{Tr}\left(\left[H^{\left(l\right)}\right]^{2}\right)\right\} ^{2}+\\
\left.\left(2k-1\right)d^{N-3}\mathbb{E}\textrm{Tr}\left[\left(H^{\left(l\right)}\otimes\mathbb{I}\right)\left(\mathbb{I}\otimes H^{\left(l+1\right)}\right)\left(H^{\left(l\right)}\otimes\mathbb{I}\right)\left(\mathbb{I}\otimes H^{\left(l+1\right)}\right)\right]\right\} 
\end{array}\label{eq:quantumOdd}
\end{equation}
and similarly for $N$ even,

\begin{equation}
\begin{array}{c}
\frac{1}{m}\mathbb{E}\left[\textrm{Tr}\left(AQ_{q}^{T}BQ_{q}\right)^{2}\right]=\frac{\left(k-1\right)}{m}\left\{ k\left(k-1\right)\left(k-2\right)d^{N-5}\mathbb{E}\textrm{Tr}\left(H^{\left(l\right)}\right)^{4}+\right.\\
k\left(2k-3\right)d^{N-4}\mathbb{E}\textrm{Tr}\left(\left[H^{\left(l\right)}\right]^{2}\right)\mathbb{E}\textrm{Tr}\left(H^{\left(l\right)}\right)^{2}+\left(k-2\right)d^{N-4}\left\{ \mathbb{E}\textrm{Tr}\left(\left[H^{\left(l\right)}\right]^{2}\right)\right\} ^{2}+\\
\left.2d^{N-3}\mathbb{E}\textrm{Tr}\left[\left(H^{\left(l\right)}\otimes\mathbb{I}\right)\left(\mathbb{I}\otimes H^{\left(l+1\right)}\right)\left(H^{\left(l\right)}\otimes\mathbb{I}\right)\left(\mathbb{I}\otimes H^{\left(l+1\right)}\right)\right]\right\} .
\end{array}\label{eq:quantumEven}
\end{equation}

The expectation values depend on the type of random matrix distribution
from which the local terms are drawn. We will give explicit examples
in the following sections. In the following lemma, we use $\mathbb{E\left(\mathit{H^{\left(l\right)}}\right)=}\mu\mathbb{I}_{d^{2}}$
and $\mathbb{E\left(\mathit{H^{\left(l\right)}}\right)^{\mathit{2}}=}m_{2}\mathbb{I}_{d^{2}}$.
\begin{lem}
\label{lem:diamonds}In calculating the $\mathbb{E}\mbox{Tr}\left(AQ_{q}^{T}BQ_{q}\right)^{2}$
if at least one of the odds (evens) commutes with one of the evens
(odds) then the expectation value is the same as the classical expectation
value. Further if the local terms have permutation invariance of eigenvalues
then the only quantum expectation value that differs from classical
is of Type II (see the proof and the diamonds in figure \ref{fig:racksMoment}).\end{lem}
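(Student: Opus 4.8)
The plan is to expand $\mathbb{E}\textrm{Tr}\left(AQ_{q}^{T}BQ_{q}\right)^{2}$ exactly as indicated in Figure \ref{fig:racksMoment}. Substituting $A=\sum_{l\;\mathrm{odd}}\mathbb{I}\otimes\Lambda_{l}\otimes\mathbb{I}$ and writing $Q_{q}^{T}BQ_{q}$ as the corresponding sum over even blocks, the trace becomes a sum over ordered choices $(l_{1},l_{2},l_{3},l_{4})$ of an odd, an even, an odd, and an even local term, and for each choice one must evaluate $\mathbb{E}\textrm{Tr}(\widetilde{H}^{(l_{1})}\widetilde{H}^{(l_{2})}\widetilde{H}^{(l_{3})}\widetilde{H}^{(l_{4})})$, where the tilde denotes the full $d^{N}\times d^{N}$ embedding in the frame where $A$ is diagonal and the expectation runs over both the local eigenvalues and the local $\beta$-Haar matrices $Q_{i}^{(O)},Q_{i}^{(E)}$ from which $Q_{q}$ is built in Eq. \ref{eq:OddEven}. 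Because $Q_{q}$ carries the tensor-product structure of Eq. \ref{eq:OddEven}, I would first show that this expectation factorizes block by block: on every site touched by none of the four chosen terms the operators reduce to identities and contribute only a dimension count, producing exactly the powers $d^{N-u}$ of Eqs. \ref{eq:QExpCountOdd}--\ref{eq:QExpCountEven}.

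For the first assertion I would show that a single commuting pair forces a decoupled block into a non-departing configuration. Suppose one chosen odd, say $H^{(l_{1})}$, commutes with one chosen even, say $H^{(l_{2})}$; by definition their supports $\{l_{1},l_{1}+1\}$ and $\{l_{2},l_{2}+1\}$ are disjoint, so the local Haar matrix coupling that even block to its neighbouring odd block enters the cyclic product only once rather than twice. I would use the commutativity to slide $\widetilde{H}^{(l_{2})}$ past its disjoint odd neighbour so that the decoupled block carries a configuration of the non-departing type $\mathbb{E}\textrm{Tr}(A^{m_{1}}Q_{\bullet}^{T}B^{m_{2}}Q_{\bullet})$. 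Averaging over that single local Haar matrix then involves only the second moment $\mathbb{E}(q_{ij}^{2})=1/d^{2}$, the very value shared by a uniform permutation, so the Departure Theorem identifies the block's value with the classical one. Iterating over whichever block decouples shows that the entire term coincides with the classical value, e.g. the contribution of Eq. \ref{eq:classical_depart}.

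For the second assertion I would classify the surviving choices under permutation invariance of the local eigenvalues. With this hypothesis the four-, three-, and non-entangled two-$H$ expectations depend only on the scalar moments $m_{2}^{\mathrm{odd}},m_{1,1}^{\mathrm{odd}}$ and their even analogues, and the chain geometry forces at least one chosen odd to be disjoint from at least one chosen even in each of these configurations; hence the first assertion applies and they coincide with the classical value. The unique configuration in which \emph{no} odd commutes with any even is the one where a single odd and a single adjacent even share exactly one site and reappear cyclically, namely the entangled two-$H$ term $\mathbb{E}\textrm{Tr}[(H^{(l)}\otimes\mathbb{I})(\mathbb{I}\otimes H^{(l+1)})(H^{(l)}\otimes\mathbb{I})(\mathbb{I}\otimes H^{(l+1)})]$ marked by the diamonds in Figure \ref{fig:racksMoment}. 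This is precisely the Type II term, and I would confirm it is genuinely non-classical because the two factors fail to commute, so the local Haar matrix on the shared site is integrated against a true fourth-order invariant rather than a second moment.

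The main obstacle I anticipate is bookkeeping rather than conceptual: one must verify that the block-by-block factorization of the local Haar average is clean, i.e. that no higher correlation among the $Q_{i}^{(O)},Q_{i}^{(E)}$ survives across the boundary between a shared and an unshared block, and that permutation invariance of the eigenvalues is exactly the hypothesis needed to merge the three-$H$ and non-entangled two-$H$ contributions into the classical value, leaving Type II as the sole departure. In particular the slide-and-collapse step of the first assertion must be checked for each cyclic position of the commuting pair, since it may occur as $(l_{1},l_{2})$, $(l_{3},l_{2})$, $(l_{1},l_{4})$, or $(l_{3},l_{4})$, and one should confirm the reduction is uniform across these cases.
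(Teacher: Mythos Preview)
Your treatment of the first assertion is essentially the paper's: use $\textrm{Tr}(MP)=\textrm{Tr}(PM)$ to slide a commuting odd/even pair into an adjacent position, obtaining a non-departing word $\mathbb{E}\textrm{Tr}(A^{m_1}Q_q^{T}B^{m_2}Q_q)$, which the Departure Theorem identifies with the classical value.

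The second assertion, however, has a real gap. You claim that ``the unique configuration in which no odd commutes with any even'' is the Type~II diamond, and that every three-$H$ choice contains a disjoint odd/even pair. This is false. Consider the three-$H$ configuration the paper calls Type~I: the two odd slots are the same $H^{(l)}$ and the two even slots are $H^{(l+1)}$ and $H^{(l-1)}$. Here $H^{(l)}$ overlaps $H^{(l+1)}$ at site $l{+}1$ and overlaps $H^{(l-1)}$ at site $l$, so \emph{no} chosen odd commutes with any chosen even, and your slide-and-collapse step cannot be invoked. Type~I therefore survives the first assertion exactly as Type~II does, and you still owe an argument that it is classical.

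This is precisely where the permutation-invariance hypothesis does its work, and not merely in the weak sense you describe (reducing expectations to scalar moments). The point is that in a Type~I word one of the three local terms, say $H^{(l-1)}$, appears only once and is independent of the other two; permutation invariance of its eigenvalues together with Haar local eigenvectors gives $\mathbb{E}\,H^{(l-1)}=\mu\,\mathbb{I}$, so it can be averaged out \emph{inside} the trace to a scalar. The remaining factor $\mathbb{E}\textrm{Tr}\!\left[(H^{(l)})^{2}H^{(l+1)}\right]$ is non-departing and hence classical. Without this step your classification leaves Type~I unaccounted for, and the conclusion that Type~II is the sole departure does not follow.
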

\begin{proof}
This can be shown using the trace property $\mbox{Tr}\left(MP\right)=\mbox{Tr}\left(PM\right)$.
In calculating $\mathbb{E}\mbox{Tr}\left(H_{l}^{odd}H_{p}^{even}H_{j}^{odd}H_{k}^{even}\right)$;
if any of the odd (even) terms commutes with any of the even (odd)
terms to its left or right then they can be swapped. For example one
gets $\mathbb{E}\mbox{Tr}\left(H_{l}^{odd}H_{p}^{even}H_{k}^{even}H_{j}^{odd}\right)=\mathbb{E}\mbox{Tr}\left(H_{j}^{odd}H_{l}^{odd}H_{p}^{even}H_{k}^{even}\right)$
which is just the classical value. Hence the only types of expectations
that we need to worry about are

\[
\begin{array}{ccc}
 & \underset{\_\_}{H^{\left(l\right)}}\\
 &  & \underset{\_\_}{H^{\left(l+1\right)}}\\
 & \underset{\_\_}{H^{\left(l\right)}}\\
\underset{\_\_}{H^{\left(l-1\right)}}\\
 & \mbox{Type I}
\end{array}\qquad\mbox{ and }\qquad\begin{array}{cc}
\underset{\_\_}{H^{\left(l\right)}}\\
 & \underset{\_\_}{H^{\left(l+1\right)}}\\
\underset{\_\_}{H^{\left(l\right)}}\\
 & \underset{\_\_}{H^{\left(l+1\right)}}\\
\mbox{Type II}
\end{array}
\]
now we show that with permutation invariance of the local eigenvalues
the first type are also classical leaving us with the ``diamond terms''
alone (Fig. \ref{fig:racksMoment}). Consider a Type I term, which
involves three independent local terms,

\[
\begin{array}{c}
{\scriptstyle \frac{1}{m}\mathbb{E}\textrm{Tr}\left[\left(\mathbb{I}_{d^{2}}\otimes H^{\left(3\right)}\otimes\mathbb{I}_{d^{N-4}}\right)\left(\mathbb{I}\otimes H^{\left(2\right)}\otimes\mathbb{I}_{d^{N-3}}\right)\left(\mathbb{I}_{d^{2}}\otimes H^{\left(3\right)}\otimes\mathbb{I}_{d^{N-4}}\right)\left(\mathbb{I}_{d^{3}}\otimes H^{\left(4\right)}\otimes\mathbb{I}_{d^{N-5}}\right)\right]}\\
=\mu^{2}m_{2}.
\end{array}
\]
This follows immediately from the independence of $H^{\left(4\right)}$
, which allows us to take its expectation value separately giving
a $\mu$ and leaving us with 
\[
\frac{\mu}{m}\mathbb{E}\textrm{Tr}\left[\left(\mathbb{I}_{d^{2}}\otimes H^{\left(3\right)}\otimes\mathbb{I}_{d^{N-4}}\right)^{2}\left(\mathbb{I}\otimes H^{\left(2\right)}\otimes\mathbb{I}_{d^{N-3}}\right)\right]=\mu^{2}m_{2}.
\]

Therefore the only relevant terms, shown by diamonds in Fig. \ref{fig:racksMoment},
are of Type II. As an example of such terms consider (here on repeated
indices are summed over)

\begin{equation}
\begin{array}{c}
{\scriptstyle \frac{1}{m}\mathbb{E}\textrm{Tr}\left[\left(H^{\left(1\right)}\otimes\mathbb{I}_{d^{N-2}}\right)\left(\mathbb{I}\otimes H^{\left(2\right)}\otimes\mathbb{I}_{d^{N-3}}\right)\left(H^{\left(1\right)}\otimes\mathbb{I}_{d^{N-2}}\right)\left(\mathbb{I}\otimes H^{\left(2\right)}\otimes\mathbb{I}_{d^{N-3}}\right)\right]}\\
=\frac{1}{d^{3}}\left\{ \mathbb{E}\left(H_{i_{1}i_{2},j_{1}j_{2}}^{\left(1\right)}H_{i_{1}p_{2},j_{1}k_{2}}^{\left(1\right)}\right)\mathbb{E}\left(H_{j_{2}i_{3},k_{2}k_{3}}^{\left(2\right)}H_{i_{2}i_{3},p_{2}k_{3}}^{\left(2\right)}\right)\right\} ,
\end{array}
\end{equation}
where the indices with subscript $2$ prevent us from treating the
two expectation values independently: $H^{\left(1\right)}$ and $H^{\left(2\right)}$
overlap at the second site. The number of such terms is $2k-1$, where
$k=\frac{N-1}{2}$.
\end{proof}
Therefore, we have found a further reduction of the terms from the
departure theorem, that distinguishes the quantum problem from the
other two. Luckily and interestingly the kurtosis of the quantum case
lies in between the classical and the iso. We emphasize that \textit{the
only inputs to the theory are the geometry of the lattice (e.g., the
number of summands and the inter-connectivity of the local terms)
and the moments} that characterizes the type of the local terms. 

Comment: The most general treatment would consider Type I terms as
well, i.e., there is no assumption of permutation invariance of the
eigenvalues of the local terms. This allows one to treat all types
of local terms. Here we are confining to random local interactions,
where the local eigenvectors are generic or the eigenvalues locally
are permutation invariant in the expectation value sense. 

The goal is to find $p$ by matching fourth moments

\[
1-p=\frac{\mathbb{E}\mbox{Tr}\left(A\Pi^{T}B\Pi\right)^{2}-\mathbb{E}\mbox{Tr}\left(AQ_{q}^{T}BQ_{q}\right)^{2}}{\mathbb{E}\mbox{Tr}\left(A\Pi^{T}B\Pi\right)^{2}-\mathbb{E}\mbox{Tr}\left(AQ^{T}BQ\right)^{2}}
\]
for which we calculated the denominator resulting in Eq. \ref{eq:Iso-Classical_FINAL},
where $\mathbb{E}\left(\left|q_{i,j}\right|^{4}\right)=\frac{\beta+2}{m\left(m\beta+2\right)}$
for $\beta-$Haar $Q$$ $ (Table \ref{tab:HaarExp}). If the numerator
allows a factorization of the moments of the local terms as in Eq.
\ref{eq:Iso-Classical_FINAL}, then the value of $p$ will be independent
of the covariance matrix (i.e., eigenvalues of the local terms). 
\begin{lem*}
\textbf{\textup{(Universality)}} $p\mapsto p\left(N,d,\beta\right)$,
namely, it is independent of the distribution of the local terms.\end{lem*}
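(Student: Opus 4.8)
The plan is to show that the numerator and the denominator of $1-p$ share a common eigenvalue-dependent prefactor $\left(m_{2}^{\mbox{odd}}-m_{1,1}^{\mbox{odd}}\right)\left(m_{2}^{\mbox{even}}-m_{1,1}^{\mbox{even}}\right)$, so that it cancels in the ratio and leaves a quantity assembled only from lattice counts and local $\beta$-Haar eigenvector expectations. The denominator has already been put in exactly this form in Eq. \ref{eq:Iso-Classical_FINAL}: the surviving factor $\left(\frac{km(n-1)}{n(m-1)}\right)^{2}\left\{1-m\mathbb{E}(q_{ij}^{4})\right\}$ depends only on $m=d^{N}$, $n=d^{2}$, $k$ and on $\mathbb{E}(q_{ij}^{4})=\frac{\beta+2}{m(m\beta+2)}$, all of which are functions of $N$, $d$, $\beta$ alone. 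Hence it suffices to establish the identical factorization for the numerator.

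First I would reduce the quantum numerator to diamond terms. By the Departure Theorem only $\frac{2}{m}\mathbb{E}\mbox{Tr}\left(AQ_{q}^{T}BQ_{q}\right)^{2}$ can differ from its classical counterpart, and by Lemma \ref{lem:diamonds} every term in its expansion except the Type II (diamond) terms equals the classical value $m_{2}^{A}m_{2}^{B}$ and therefore cancels against the classical contribution $\frac{1}{m}\mathbb{E}\mbox{Tr}\left(A\Pi^{T}B\Pi\right)^{2}$. Thus the numerator collapses to the sum, over the $2k-1$ (for $N$ odd) or $2(k-1)$ (for $N$ even) diamond positions counted in Eqs. \ref{eq:QExpCountOdd}--\ref{eq:QExpCountEven}, of each diamond contribution minus its classical counterpart.

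Next I would factor a single diamond difference by replaying the homogeneity-and-vanishing argument of Eqs. \ref{eq:FreeExpect}--\ref{eq:isotropicBook}, now at the local level. A diamond involves exactly one odd local term $H^{\left(l\right)}=V_{l}\Lambda_{l}V_{l}^{-1}$ and one even local term $H^{\left(l+1\right)}=V_{l+1}\Lambda_{l+1}V_{l+1}^{-1}$, overlapping at a single shared site, with each $V$ drawn from the $\beta$-Haar measure on the local group of size $d^{2}$. Because the two local terms are independent, the expectation factors into an $l$-tensor and an $(l+1)$-tensor contracted over the shared-site indices, and integrating out $V_{l}$, $V_{l+1}$ leaves an expression that is a homogeneous polynomial of degree two in the entries of $\Lambda_{l}$ and degree two in the entries of $\Lambda_{l+1}$. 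By permutation invariance of the local eigenvalues the $\Lambda_{l}$ dependence reduces to a combination $c_{1}m_{2}^{\mbox{odd}}+c_{2}m_{1,1}^{\mbox{odd}}$; and since the diamond reverts to its classical value whenever $H^{\left(l\right)}\propto\mathbb{I}$ (an identity factor commutes with the adjacent even term, so Lemma \ref{lem:diamonds} forces the difference to vanish, giving $m_{2}^{\mbox{odd}}=m_{1,1}^{\mbox{odd}}=1$ and $c_{1}+c_{2}=0$), we obtain $c_{1}=-c_{2}$ and factor out $\left(m_{2}^{\mbox{odd}}-m_{1,1}^{\mbox{odd}}\right)$. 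The identical argument applied to $\Lambda_{l+1}$ factors out $\left(m_{2}^{\mbox{even}}-m_{1,1}^{\mbox{even}}\right)$, and what remains of each diamond is a pure local Haar integral evaluated through the entries of Table \ref{tab:HaarExp} (in particular $\mathbb{E}(q_{13}^{2}q_{24}^{2})=\frac{\beta(n-1)+2}{n(n\beta+2)(n-1)}$), which depends only on $d$ and $\beta$.

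Summing over the diamond positions and dividing by the denominator, the common factor $\left(m_{2}^{\mbox{odd}}-m_{1,1}^{\mbox{odd}}\right)\left(m_{2}^{\mbox{even}}-m_{1,1}^{\mbox{even}}\right)$ cancels whenever it is nonzero (it vanishes only in the degenerate case of deterministic local spectra, where the mixture parameter plays no role), leaving $1-p$, and hence $p$, as a ratio of combinatorial counts and local $\beta$-Haar expectations, i.e. a function of $N$, $d$, $\beta$ alone. I expect the main obstacle to be the third step: confirming that the coupled (entangled) diamond expectation genuinely \emph{separates} into independent bilinear forms in the odd and even spectra, so that the two-fold factorization can be applied sequentially, and that the vanishing-at-local-identity condition holds for the coupled term even while the other local factor retains its full eigenvector randomness. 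Both points are secured by the independence of the local terms together with Lemma \ref{lem:diamonds}; once they are in hand, the cancellation and the stated universality of $p$ follow immediately.
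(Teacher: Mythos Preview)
Your proposal is correct and follows essentially the same route as the paper: reduce the numerator to diamond (Type~II) terms via Lemma~\ref{lem:diamonds}, then exploit bilinearity and the vanishing-at-local-identity observation to factor out $\left(m_{2}^{\mbox{odd}}-m_{1,1}^{\mbox{odd}}\right)\left(m_{2}^{\mbox{even}}-m_{1,1}^{\mbox{even}}\right)$, which cancels against the identical factor already extracted in the denominator (Eq.~\ref{eq:Iso-Classical_FINAL}). The only notable difference is in how the surviving eigenvector-only factor is evaluated: you propose computing it directly from the fourth-moment entries of Table~\ref{tab:HaarExp}, whereas the paper instead specializes $\Lambda_{l}$ and $\Lambda_{l+1}$ to rank-one projectors to identify the residual factor as $n^{2}\mathbb{E}\left(\left\Vert uv\right\Vert_{\mbox{F}}^{2}-\left\Vert uv(uv)^{\dagger}\right\Vert_{\mbox{F}}^{2}\right)$, a singular-value representation that is cleaner to manipulate and feeds directly into the Slider Theorem; either route suffices for the universality statement itself.
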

\begin{proof}
We use a similar techniques as we did in the isotropic case. The general
form for the numerator of Eq. \ref{eq:1-p} is (denoting Lemma \ref{lem:diamonds}
by L3)

\begin{eqnarray}
\frac{1}{m}\mathbb{E}\textrm{Tr}\left[\left(A\Pi^{T}B\Pi\right)^{2}-\left(AQ_{q}^{T}BQ_{q}\right)^{2}\right] & \overset{\mbox{L3}}{=} & \frac{\left(2k-1\right)}{d^{3}}\mbox{\ensuremath{\mathbb{E}}Tr}\left\{ \left(H^{\left(l\right)}\otimes\mathbb{I}_{d}\right)^{2}\left(\mathbb{I}_{d}\otimes H^{\left(l+1\right)}\right)^{2}\right.\nonumber \\
 &  & -\left.\left[\left(H^{\left(l\right)}\otimes\mathbb{I}_{d}\right)\left(\mathbb{I}_{d}\otimes H^{\left(l+1\right)}\right)\right]^{2}\right\} \nonumber \\
 & = & \frac{\left(2k-1\right)}{d^{3}}\mbox{\ensuremath{\mathbb{E}}}\mbox{Tr}\left\{ \left(Q_{l}^{-1}\Lambda_{l}Q_{l}\otimes\mathbb{I}_{d}\right)^{2}\left(\mathbb{I}_{d}\otimes Q_{l+1}^{-1}\Lambda_{l+1}Q_{l+1}\right)^{2}\right.\nonumber \\
 &  & -\left.\left[\left(Q_{l}^{-1}\Lambda_{l}Q_{l}\otimes\mathbb{I}_{d}\right)\left(\mathbb{I}_{d}\otimes Q_{l+1}^{-1}\Lambda_{l+1}Q_{l+1}\right)\right]^{2}\right\} \label{eq:quantumBook-1}
\end{eqnarray}
where the expectation on the right hand side is taken with respect
to the local terms $H^{\left(l\right)}$ and $H^{\left(l+1\right)}$
. The right hand side is a homogeneous polynomial of order two in
the entries of $\Lambda_{l}$, as well as, in the entries of $\Lambda_{l+1}$;
consequently Eq. \ref{eq:quantumBook-1} necessarily has the form

\[
c_{1}\left(\Lambda^{\mbox{even}},Q_{\mbox{odd}},Q_{\mbox{even}}\right)m_{2}^{\mbox{odd}}+c_{2}\left(H^{\mbox{even}},Q_{\mbox{odd}},Q_{\mbox{even}}\right)m_{1,1}^{\mbox{odd}}
\]
but Eq. \ref{eq:quantumBook-1} must be zero for $\Lambda_{l}=I$,
for which $m_{2}^{\mbox{odd}}=m_{1,1}^{\mbox{odd}}=1$. This implies
that $c_{1}=-c_{2}$. By permutation invariance of the local terms
we can factor out $\left(m_{2}^{\mbox{odd}}-m_{1,1}^{\mbox{odd}}\right)$.
Similarly, the homogeneity and permutation invariance of $H^{\left(l+1\right)}$
implies,

\[
\left(m_{2}^{\mbox{odd}}-m_{1,1}^{\mbox{odd}}\right)\left[D_{1}\left(Q_{\mbox{odd}},Q_{\mbox{even}}\right)m_{2}^{\mbox{even}}+D_{2}\left(Q_{\mbox{odd}},Q_{\mbox{even}}\right)m_{1,1}^{\mbox{even}}\right].
\]
The right hand side should be zero for $\Lambda_{l+1}=I$, whereby
we can factor out $\left(m_{2}^{\mbox{even}}-m_{1,1}^{\mbox{even}}\right)$;
hence the right hand side of Eq. \ref{eq:quantumBook-1} becomes

\begin{equation}
\frac{\left(2k-1\right)}{d^{3}}\left(m_{2}^{\mbox{odd}}-m_{1,1}^{\mbox{odd}}\right)\left(m_{2}^{\mbox{even}}-m_{1,1}^{\mbox{even}}\right)f_{q}\left(Q_{\mbox{odd}},Q_{\mbox{even}}\right)\label{eq:Q-C_final}
\end{equation}
where $f_{q}\left(Q_{\mbox{odd}},Q_{\mbox{even}}\right)$ is a homogeneous
function of order four in the entries of $Q_{\mbox{odd}}$ as well
as $Q_{\mbox{even}}$. To evaluate $f_{q}$, it suffices to let $\Lambda_{l}$
and $\Lambda_{l+1}$ be projectors of rank one where $\Lambda_{l}$
would have only one nonzero entry on the $i^{\mbox{th }}$ position
on its diagonal and $\Lambda_{l+1}$ only one nonzero entry on the
$j^{\mbox{th }}$ position on its diagonal. Further take those nonzero
entries to be ones, giving $m_{1,1}^{A}=m_{1,1}^{B}=0$ and $m_{2}^{A}=m_{2}^{B}=1/n$.
Using this choice of local terms the right hand side of Eq. \ref{eq:quantumBook-1}
now reads 

\begin{eqnarray}
\frac{\left(2k-1\right)}{d^{3}} & \mbox{\ensuremath{\mathbb{E}}}\mbox{Tr} & \left\{ \left(|q_{i}^{\left(l\right)}\rangle\langle q_{i}^{\left(l\right)}|\otimes I_{d}\right)^{2}\left(I_{d}\otimes|q_{j}^{\left(l+1\right)}\rangle\langle q_{j}^{\left(l+1\right)}|\right)^{2}\right.\nonumber \\
 &  & -\left.\left[\left(|q_{i}^{\left(l\right)}\rangle\langle q_{i}^{\left(l\right)}|\otimes I_{d}\right)\left(I_{d}\otimes|q_{j}^{\left(l+1\right)}\rangle\langle q_{j}^{\left(l+1\right)}|\right)\right]^{2}\right\} \label{eq:Qdepart}
\end{eqnarray}
where here the expectation value is taken with respect to random choices
of local eigenvectors. Equating this and Eq. \ref{eq:Q-C_final} 

\begin{eqnarray}
f_{q}\left(Q_{\mbox{odd}},Q_{\mbox{even}}\right) & \mbox{=} & n^{2}\mbox{\ensuremath{\mathbb{E}}}\mbox{Tr}\left\{ \left(|q_{i}^{\left(l\right)}\rangle\langle q_{i}^{\left(l\right)}|\otimes I_{d}\right)^{2}\left(I_{d}\otimes|q_{j}^{\left(l+1\right)}\rangle\langle q_{j}^{\left(l+1\right)}|\right)^{2}\right.\label{eq:Qdepart-1}\\
 &  & -\left.\left[\left(|q_{i}^{\left(l\right)}\rangle\langle q_{i}^{\left(l\right)}|\otimes I_{d}\right)\left(I_{d}\otimes|q_{j}^{\left(l+1\right)}\rangle\langle q_{j}^{\left(l+1\right)}|\right)\right]^{2}\right\} \nonumber 
\end{eqnarray}

To simplify notation let us expand these vectors in the computational
basis $|q_{i}^{\left(l\right)}\rangle=u_{i_{1}i_{2}}|i_{1}\rangle|i_{2}\rangle$
and $|q_{j}^{\left(l+1\right)}\rangle=v_{i_{2}i_{3}}|i_{2}\rangle|i_{3}\rangle.$
The first term on the right hand side of Eq. \ref{eq:Qdepart}, the
classical term, is obtained by assuming commutativity and using the
projector properties,

\begin{eqnarray}
\mbox{Tr}\left[\left(|q_{i}^{\left(l\right)}\rangle\langle q_{i}^{\left(l\right)}|\otimes I_{d}\right)^{2}\left(I_{d}\otimes|q_{j}^{\left(l+1\right)}\rangle\langle q_{j}^{\left(l+1\right)}|\right)^{2}\right] & =\nonumber \\
\mbox{Tr}\left[\left(|q_{i}^{\left(l\right)}\rangle\langle q_{i}^{\left(l\right)}|\otimes I_{d}\right)\left(I_{d}\otimes|q_{j}^{\left(l+1\right)}\rangle\langle q_{j}^{\left(l+1\right)}|\right)\right] & =\nonumber \\
\mbox{Tr}\left[u_{i_{1},i_{2}}\overline{u_{j_{1}j_{2}}}v_{j_{2}i_{3}}\overline{v_{k_{2}k_{3}}}u_{j_{1}k_{2}}|i_{1}i_{2}i_{3}\rangle\langle j_{1}k_{2}k_{3}|\right] & =\nonumber \\
\left[u_{i_{1},i_{2}}\overline{u_{i_{1}j_{2}}}v_{j_{2}i_{3}}\overline{v_{i_{2}i_{3}}}\right]=\left(u^{\dagger}u\right)_{j_{2}i_{2}}\left(vv^{\dagger}\right)_{j_{2}i_{2}} & =\nonumber \\
\mbox{Tr}\left[\left(u^{\dagger}u\right)\left(vv^{\dagger}\right)\right]=\mbox{Tr}\left[uv\left(uv\right)^{\dagger}\right] & =\nonumber \\
\left\Vert uv\right\Vert _{\mbox{F}}^{2} & = & \sum_{i=1}^{d}\sigma_{i}^{2}.\label{eq:CDepart}
\end{eqnarray}
where $\left\Vert \centerdot\right\Vert _{\mbox{F}}$ denotes the
Frobenius norm and $\sigma_{i}$ are the singular values of $uv$.
The second term, the quantum term, is

\begin{eqnarray}
\mbox{Tr}\left[\left(|q_{i}^{\left(l\right)}\rangle\langle q_{i}^{\left(l\right)}|\otimes I_{d}\right)\left(I_{d}\otimes|q_{j}^{\left(l+1\right)}\rangle\langle q_{j}^{\left(l+1\right)}|\right)\right]^{2} & =\label{eq:QdepartFinal}\\
\mbox{Tr}\left[u_{i_{1}i_{2}}\overline{u_{j_{1}j_{2}}}v_{j_{2}i_{3}}\overline{v_{k_{2}k_{3}}}u_{j_{1}k_{2}}\overline{u_{m_{1}m_{2}}}v_{m_{2}k_{3}}\overline{v_{i_{2}i_{3}}}|i_{1}i_{2}i_{3}\rangle\langle p_{1}p_{2}p_{3}|\right] & =\nonumber \\
\left(u^{\dagger}u\right)_{j_{2}k_{2}}\left(vv^{\dagger}\right)_{m_{2}k_{2}}\left(u^{\dagger}u\right)_{m_{2}i_{2}}\left(vv^{\dagger}\right)_{j_{2}i_{2}} & =\nonumber \\
\left(u^{\dagger}uvv^{\dagger}\right)_{j_{2}m_{2}}\left(u^{\dagger}uvv^{\dagger}\right)_{m_{2}j_{2}}=\mbox{Tr}\left\{ \left[uv\left(uv\right)^{\dagger}\right]^{2}\right\}  & =\nonumber \\
\left\Vert uv\left(uv\right)^{\dagger}\right\Vert _{\mbox{F}}^{2} & = & \sum_{i=1}^{d}\sigma_{i}^{4}.\nonumber 
\end{eqnarray}
where we used the symmetry of $\left(uv\left(uv\right)^{\dagger}\right)^{2}=uv\left(uv\right)^{\dagger}\left[uv\left(uv\right)^{\dagger}\right]^{\dagger}$.

Now we can calculate 

\begin{equation}
f_{q}\left(Q_{\mbox{odd}},Q_{\mbox{even}}\right)=n^{2}\mbox{\ensuremath{\mathbb{E}}}\left\{ \left\Vert uv\right\Vert _{\mbox{F}}^{2}-\left\Vert uv\left(uv\right)^{\dagger}\right\Vert _{\mbox{F}}^{2}\right\} \label{eq:d4_introduced}
\end{equation}
giving us the desired result

\begin{eqnarray}
\frac{1}{m}\mathbb{E}\textrm{Tr}\left[\left(A\Pi^{T}B\Pi\right)^{2}-\left(AQ_{q}^{T}BQ_{q}\right)^{2}\right] & = & d\left(2k-1\right)\left(m_{2}^{\mbox{odd}}-m_{1,1}^{\mbox{odd}}\right)\left(m_{2}^{\mbox{even}}-m_{1,1}^{\mbox{even}}\right)\label{eq:Quantum-Classical}\\
 & \times & \mathbb{E}\left(\left\Vert uv\right\Vert _{\mbox{F}}^{2}-\left\Vert uv\left(uv\right)^{\dagger}\right\Vert _{\mbox{F}}^{2}\right),\nonumber 
\end{eqnarray}
from which 

\begin{eqnarray}
1-p & = & \frac{\mbox{ETr}\left(A\Pi^{T}B\Pi\right)^{2}-\mbox{ETr}\left(AQ_{q}^{-1}BQ_{q}\right)^{2}}{\mbox{ETr}\left(A\Pi^{T}B\Pi\right)^{2}-\mbox{ETr}\left(AQ^{-1}BQ\right)^{2}}\label{eq:1-pFINAL}\\
 & = & \frac{d\left(2k-1\right)\mathbb{E}\left(\left\Vert uv\right\Vert _{\mbox{F}}^{2}-\left\Vert uv\left(uv\right)^{\dagger}\right\Vert _{\mbox{F}}^{2}\right)}{\left(\frac{km\left(n-1\right)}{n\left(m-1\right)}\right)^{2}\left\{ 1-m\mathbb{E}\left(q_{ij}^{4}\right)\right\} }.\nonumber 
\end{eqnarray}
The dependence on the covariance matrix has cancelled- a covariance
matrix is one whose element in the $i,j$ position is the covariance
between the $i^{th}$ and $j^{th}$ eigenvalue. This shows that $p$
is independent of eigenvalues of the local terms which proves the
universality lemma.

\noindent \begin{flushleft}
Comment: To get the numerator we used permutation invariance of $A$
and $B$ and local terms, to get the denominator we used permutation
invariance of $Q$. 
\par\end{flushleft}
\end{proof}
Comment: It is interesting that the amount of mixture of the two extremes
needed to capture the quantum spectrum is independent of the actual
types of local terms. It only depends on the physical parameters of
the lattice.

\subsection{\label{sub:The-Slider-Theorem}The Slider Theorem and a Summary}

In this section we make explicit use of $\beta-$Haar properties of
$Q$ and local terms. To prove that there exists a $0\le p\le1$ such
that the combination in Eq. \ref{eq:convex} is convex we need to
evaluate the expected Frobenius norms in Eq. \ref{eq:Quantum-Classical}.
\begin{lem}
$\mathbb{E}\left\Vert uv\right\Vert _{F}^{2}=1/d$ and $\mathbb{E}\left\Vert uv\left(uv\right)^{\dagger}\right\Vert _{F}^{2}=\frac{\beta^{2}\left[3d\left(d-1\right)+1\right]+2\beta\left(3d-1\right)+4}{d\left(\beta d^{2}+2\right)^{2}}$,
when local terms have $\beta-$Haar eigenvectors. \end{lem}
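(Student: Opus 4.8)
The plan is to reduce both quantities to moments of the two reshaped local eigenvectors $u$ and $v$. These are independent, and each is the reshaping into a $d\times d$ array of a single $\beta$-Haar unit vector in dimension $n=d^{2}$; in particular $\left\Vert u\right\Vert_{F}^{2}=\left\Vert v\right\Vert_{F}^{2}=1$ deterministically, and the needed entrywise moments are exactly those of Table \ref{tab:HaarExp} with $m$ replaced by $n$.

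For the first identity I would write $\left\Vert uv\right\Vert_{F}^{2}=\textrm{Tr}\left[u\left(vv^{\dagger}\right)u^{\dagger}\right]$ and average over $v$ first. The isotropy of a unit vector gives $\mathbb{E}\left[v_{p}\overline{v_{q}}\right]=\frac{1}{n}\delta_{pq}$, hence $\mathbb{E}_{v}\left[vv^{\dagger}\right]=\frac{d}{n}\mathbb{I}_{d}=\frac{1}{d}\mathbb{I}_{d}$. Therefore $\mathbb{E}\left\Vert uv\right\Vert_{F}^{2}=\frac{1}{d}\mathbb{E}\,\textrm{Tr}\left[uu^{\dagger}\right]=\frac{1}{d}\left\Vert u\right\Vert_{F}^{2}=\frac{1}{d}$, using only second moments.

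For the second identity, since $uv\left(uv\right)^{\dagger}$ is Hermitian, $\left\Vert uv\left(uv\right)^{\dagger}\right\Vert_{F}^{2}=\textrm{Tr}\left[\left(uvv^{\dagger}u^{\dagger}\right)^{2}\right]$, and by cyclicity of the trace this equals $\textrm{Tr}\left[RPRP\right]$ with $R=u^{\dagger}u$ and $P=vv^{\dagger}$, two independent Hermitian $d\times d$ matrices of unit trace. Expanding and using independence,
\[
\mathbb{E}\,\textrm{Tr}\left[RPRP\right]=\sum_{a,b,c,e}\mathbb{E}\left[R_{ab}R_{ce}\right]\mathbb{E}\left[P_{bc}P_{ea}\right],
\]
so the whole computation collapses onto the single second-moment tensor $T_{abce}=\mathbb{E}\left[R_{ab}R_{ce}\right]=\sum_{i,k}\mathbb{E}\left[\overline{u_{ia}}u_{ib}\,\overline{u_{kc}}u_{ke}\right]$. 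I would evaluate $T$ by splitting the sum according to whether $i=k$: the $i\neq k$ block forces the positions to coincide in pairs ($a=b$, $c=e$) and contributes a multiple of $\delta_{ab}\delta_{ce}$, while the $i=k$ block is the four-point function of one unit vector and generates the three $U(d)$- (resp. $O(d)$-) invariant tensors $\delta_{ab}\delta_{ce}$, $\delta_{ae}\delta_{bc}$ and $\delta_{ac}\delta_{be}$. Assembling these yields $T=\lambda\,\delta_{ab}\delta_{ce}+\mu\,\delta_{ae}\delta_{bc}+\rho\,\delta_{ac}\delta_{be}$, after which I would finish by contracting the two index-permuted copies $\sum_{a,b,c,e}T_{abce}T_{bcea}$, conveniently organized via the Gram matrix of the three tensors (diagonal entries $d^{2}$, off-diagonal entries $d$), and simplifying to the claimed rational function of $d$ and $\beta$.

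The delicate step is the coefficient $\rho$ of $\delta_{ac}\delta_{be}$, which is controlled by the expectation of a \emph{squared} $\beta$-Haar entry, $\mathbb{E}\left[u_{p}^{2}\overline{u_{q}}^{2}\right]$. This is the one ingredient that is genuinely field-dependent: it vanishes in the complex case $\beta=2$ but is nonzero in the real case $\beta=1$, and it is precisely the term that separates the orthogonal, unitary and symplectic ensembles. It must therefore be supplied by the $\beta$-Haar (ghost) moment calculus rather than by naive interpolation, and pinning down its $\beta$-dependence is the main obstacle. As consistency checks I would verify that the coefficients reproduce the known mean purities $\mathbb{E}\,\textrm{Tr}\,R^{2}=\frac{2d}{d^{2}+1}$ for $\beta=2$ and $\frac{2d+1}{d^{2}+2}$ for $\beta=1$, and that the final contraction recovers the stated numerator $\beta^{2}\left[3d\left(d-1\right)+1\right]+2\beta\left(3d-1\right)+4$ over $d\left(\beta d^{2}+2\right)^{2}$.
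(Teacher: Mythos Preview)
Your plan is correct but takes a genuinely different route from the paper. For $\mathbb{E}\|uv\|_{F}^{2}$ you average directly over the Haar unit vector $v$ using $\mathbb{E}[vv^{\dagger}]=d^{-1}\mathbb{I}_{d}$, which is shorter than what the paper does. The paper instead writes $G=u\,\chi_{\beta d^{2}}$ with $G$ a $d\times d$ $\beta$-Gaussian matrix, so that the radial part factors out as an independent $\chi$ variable; then $\mathbb{E}\|uv\|_{F}^{2}=\mathbb{E}\|G_{1}G_{2}\|_{F}^{2}/(\beta d^{2})^{2}$ and the Gaussian moments are computed entrywise.

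For $\mathbb{E}\|uv(uv)^{\dagger}\|_{F}^{2}$ the contrast is sharper. You stay with the unit vectors, reduce to $\textrm{Tr}(RPRP)$ with $R=u^{\dagger}u$ and $P=vv^{\dagger}$ i.i.d., and propose to compute the second-moment tensor $T_{abce}=\mathbb{E}[R_{ab}R_{ce}]$ as a linear combination of the three invariant delta-tensors, then contract $\sum T_{abce}T_{bcea}$. The paper again passes to Gaussians, obtaining $\mathbb{E}\|uv(uv)^{\dagger}\|_{F}^{2}=\mathbb{E}\,\textrm{Tr}(W_{1}W_{2})^{2}/[d^{2}\beta(d^{2}\beta+2)]^{2}$ with Wishart $W_{i}=G_{i}^{\dagger}G_{i}$, expands $\textrm{Tr}(W_{1}W_{2}W_{1}W_{2})$ as a sum over column indices $i,j,k,l$, and evaluates the three types of terms ($i\neq k,j\neq l$; one coincidence; $i=k,j=l$) by an explicit QR reduction that turns each into a product of $\chi$ and scalar Gaussian moments. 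The Gaussian detour buys independence of entries, so every expectation is a product of elementary moments and no general-$\beta$ four-point function of a Haar vector is ever invoked; your route is structurally cleaner (pure invariant theory on $d\times d$ matrices) but, as you correctly flag, it puts the entire $\beta$-dependence into the single coefficient $\rho$ attached to $\delta_{ac}\delta_{be}$, which you must extract from the $\beta$-Haar calculus. Both arrive at the same closed form; the paper's version trades conceptual economy for computational transparency.
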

\begin{proof}
It is a fact that $G=u\chi_{\beta d^{2}}$, when $u$ is uniform on
a sphere, $G$ is a $d\times d$ $\beta-$Gaussian matrix whose expected
Frobenius norm has a $\chi-$distribution denoted here by $\chi_{\beta d^{2}}$
(similarly for $v$). Recall that $\mathbb{E}\left(\chi_{h}^{2}\right)=h$
and $\mathbb{E}\left(\chi_{h}^{4}\right)=h\left(h+2\right)$.

\begin{eqnarray}
\mathbb{E}\left\Vert uv\right\Vert _{\mbox{F}}^{2} & \mathbb{E}\left(\chi_{\beta d^{2}}\right)^{2}= & \mathbb{E}\left\Vert \left(G_{1}G_{2}\right)\right\Vert _{\mbox{F}}^{2}\label{eq:SVclass}\\
\Rightarrow\mathbb{E}\left\Vert uv\right\Vert _{\mbox{F}}^{2} & = & \frac{1}{\left(\beta d^{2}\right)^{2}}\mathbb{E}\left\Vert G_{1}G_{2}\right\Vert _{\mbox{F}}^{2}=\frac{d^{2}}{\left(\beta d^{2}\right)^{2}}\mathbb{E}\sum_{k=1}^{d}\left(g_{i,k}^{(1)}g_{kj}^{(2)}\right)^{2}\nonumber \\
 & = & \frac{d^{2}}{\left(\beta d^{2}\right)^{2}}d\left(\beta\right)^{2}=\frac{1}{d}.\nonumber 
\end{eqnarray}
The quantum case, $u^{\dagger}u=\frac{G_{1}^{\dagger}G_{1}}{\left\Vert G_{1}\right\Vert _{\mbox{F}}^{2}}\equiv\frac{W_{1}}{\left\Vert G_{1}\right\Vert _{\mbox{F}}^{2}}$
, similarly $v^{\dagger}v=\frac{G_{2}^{\dagger}G_{2}}{\left\Vert G_{2}\right\Vert _{\mbox{F}}^{2}}\equiv\frac{W_{2}}{\left\Vert G_{2}\right\Vert _{\mbox{F}}^{2}}$,
where $W_{1}$ and $W_{2}$ are Wishart matrices. 

\begin{equation}
\mathbb{E}\left\Vert uv\left(uv\right)^{\dagger}\right\Vert _{\mbox{F}}^{2}=\frac{\mathbb{E}\mbox{Tr}\left(W_{1}W_{2}\right)^{2}}{\mathbb{E}\left(\chi_{d^{2}\beta}^{4}\right)\mathbb{E}\left(\chi_{d^{2}\beta}^{4}\right)}=\frac{\mathbb{E}\mbox{Tr}\left(W_{1}W_{2}\right)^{2}}{\left[d^{2}\beta\left(d^{2}\beta+2\right)\right]^{2}}\label{eq:QuantumFrob}
\end{equation}
hence the complexity of the problem is reduced to finding the expectation
of the trace of a product of Wishart matrices. 

\begin{equation}
\mathbb{E}\mbox{Tr}\left(W_{1}W_{2}\right)^{2}=\mathbb{E}\mbox{Tr}\left(W_{1}W_{2}W_{1}W_{2}\right)=\mathbb{E}\sum_{1\le ijkl\le d}x_{i}x_{i}^{\dagger}y_{j}y_{j}^{\dagger}x_{k}x_{k}^{\dagger}y_{l}y_{l}^{\dagger}\equiv\Pi\left[\begin{array}{cc}
x_{i}^{\dagger}y_{j} & y_{l}^{\dagger}x_{i}\\
y_{j}^{\dagger}x_{k} & x_{k}^{\dagger}y_{l}
\end{array}\right],\label{eq:PROD}
\end{equation}
where $\Pi$ denotes the product of the elements of the matrix. There
are three types of expectations summarized in Table \ref{tab:Expectation-values.}.

\begin{table}
\noindent \begin{centering}
\begin{tabular}{|c|c|c|}
\hline 
Notation & Type & Count\tabularnewline
\hline 
\hline 
$X$ & $\begin{array}{ccc}
i\neq k & \& & j\neq l\end{array}$ & $d^{2}\left(d-1\right)^{2}$\tabularnewline
\hline 
$Y$ & $\begin{array}{ccc}
i=k & \& & j\neq l\\
 & \mbox{or}\\
i\neq k & \& & j=l
\end{array}$ & $2d^{2}\left(d-1\right)$\tabularnewline
\hline 
$Z$ & $\begin{array}{ccc}
i=k & \& & j=l\end{array}$ & $d^{2}$\tabularnewline
\hline 
\end{tabular}
\par\end{centering}

\caption{\label{tab:Expectation-values.}Expectation values.}
\end{table}

In Table \ref{tab:Expectation-values.}

\begin{eqnarray*}
X & \equiv & \mathbb{E}\left[\Pi\left(x_{i}x_{k}\right)\left(y_{i}y_{l}\right)\right]\\
Y & \equiv & \mathbb{E}\left[\left(x_{i}^{\dagger}y_{j}\right)^{2}\left(x_{i}^{\dagger}y_{l}\right)^{2}\right]\\
Z & \equiv & \mathbb{E}\left[\left(x_{i}^{\dagger}y_{i}\right)^{4}\right].
\end{eqnarray*}
We now evaluate these expectation values. We have

\[
X=\Pi\left(\begin{array}{cc}
\chi_{\beta d} & g_{\beta}\\
0 & \chi_{\beta\left(d-1\right)}\\
0 & 0\\
\vdots & \vdots
\end{array}\right)^{\dagger}\left(\begin{array}{cc}
g_{\beta} & g_{\beta}\\
g_{\beta} & g_{\beta}\\
\mbox{DC} & \mbox{DC}\\
\vdots & \vdots
\end{array}\right)
\]
by $QR$ decomposition, where $g_{\beta}$ and $\chi_{h}$ denote
an element with a $\beta-$Gaussian and $\chi_{h}$ distribution respectively;
DC means ``Don't Care''. Consequently

\begin{eqnarray*}
X & = & \Pi\left(\begin{array}{cc}
\chi_{\beta d} & g_{\beta}\\
0 & \chi_{\beta\left(d-1\right)}
\end{array}\right)\left(\begin{array}{cc}
a & b\\
c & d
\end{array}\right)\\
 & =\Pi & \left[\begin{array}{cc}
a\chi_{\beta d} & g_{\beta}a+\chi_{\beta\left(d-1\right)}c\\
b\chi_{\beta d} & g_{\beta}b+\chi_{\beta\left(d-1\right)}d
\end{array}\right]=\Pi\left[\begin{array}{cc}
a\chi_{\beta d} & g_{\beta}a\\
b\chi_{\beta d} & g_{\beta}b
\end{array}\right]\\
 & = & \chi_{\beta d}^{2}a^{2}b^{2}g_{\beta}^{2}=\beta^{4}d.
\end{eqnarray*}
where we denoted the four independent Gaussian entries by $a,b,c,d$
to not confuse them as one number. From Eq. \ref{eq:PROD} we have

\begin{eqnarray*}
Y & = & \mathbb{E}\left[\left(x_{i}^{\dagger}y_{j}\right)^{2}\left(x_{i}^{\dagger}y_{l}\right)^{2}\right]=\mathbb{E}\left(\chi_{d\beta}g_{\beta}^{\left(1\right)}\right)^{2}\left(\chi_{d\beta}g_{\beta}^{\left(2\right)}\right)^{2}=\beta d\left(\beta d+2\right)\beta^{2}\\
Z & = & \mathbb{E}\left(x^{\dagger}y\right)^{4}=\mathbb{E}\left(\chi_{\beta d}^{4}\right)\mathbb{E}\left(\chi_{\beta}^{4}\right)=\beta d\left(\beta d+2\right)\beta\left(\beta+2\right).
\end{eqnarray*}
Eq. \ref{eq:QuantumFrob} now reads

\begin{equation}
\mathbb{E}\left\Vert uv\left(uv\right)^{\dagger}\right\Vert _{F}^{2}=\frac{\beta^{2}\left[3d\left(d-1\right)+1\right]+2\beta\left(3d-1\right)+4}{d\left(\beta d^{2}+2\right)^{2}}.\label{eq:SVq}
\end{equation}
\end{proof}
\begin{thm*}
\textbf{\textup{(The Slider Theorem)\label{thm:The-Slider-Theorem}}}\textbf{
}The quantum kurtosis lies in between the classical and the iso kurtoses,
$\gamma_{2}^{iso}\leq\gamma_{2}^{q}\leq\gamma_{2}^{c}$. Therefore
there exists a $0\leq p\leq1$ such that $\gamma_{2}^{q}=p\gamma_{2}^{c}+\left(1-p\right)\gamma_{2}^{iso}$.
Further, $\lim_{N\rightarrow\infty}p=1$.\end{thm*}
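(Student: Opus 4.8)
The plan is to read off the signs and magnitudes of the numerator and denominator of $1-p$ from the closed forms already derived. By Eqs. \ref{eq:Iso-Classical_FINAL} and \ref{eq:Quantum-Classical}, both $\gamma_2^c-\gamma_2^{iso}$ and $\gamma_2^c-\gamma_2^q$ carry the common factor $(m_2^{\mbox{odd}}-m_{1,1}^{\mbox{odd}})(m_2^{\mbox{even}}-m_{1,1}^{\mbox{even}})$, which is nonnegative because permutation invariance gives $m_2-m_{1,1}=\frac{1}{2}\mathbb{E}(\lambda_i-\lambda_j)^2\ge 0$; for nondegenerate local spectra it is strictly positive and cancels to produce Eq. \ref{eq:1-pFINAL}. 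The theorem thus reduces to controlling the two remaining scalar factors. I would first record that the denominator is strictly positive: Table \ref{tab:HaarExp} gives $1-m\mathbb{E}(q_{ij}^4)=1-\frac{\beta+2}{m\beta+2}=\frac{(m-1)\beta}{m\beta+2}>0$ for $m=d^N>1$, and the prefactor $(km(n-1)/(n(m-1)))^2$ is manifestly positive, so $\gamma_2^c>\gamma_2^{iso}$ and the mixture $p$ is well defined.

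To obtain $p\le 1$ I would show the numerator $\gamma_2^c-\gamma_2^q$ is nonnegative. By Eq. \ref{eq:Quantum-Classical} its sign is that of $\mathbb{E}(\|uv\|_{\mbox{F}}^2-\|uv(uv)^\dagger\|_{\mbox{F}}^2)$. Writing $\sigma_1,\dots,\sigma_d$ for the singular values of $uv$, this expectation equals $\mathbb{E}\sum_{i}\sigma_i^2(1-\sigma_i^2)$. Since $u$ and $v$ are the $d\times d$ reshapings of unit eigenvectors, $\|u\|_{\mbox{F}}=\|v\|_{\mbox{F}}=1$, whence their operator norms are at most $1$ and every $\sigma_i(uv)\le\|u\|_{\mbox{op}}\|v\|_{\mbox{op}}\le 1$. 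Each summand is then nonnegative, giving $\gamma_2^q\le\gamma_2^c$.

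The reverse bound $p\ge 0$, equivalently $\gamma_2^q\ge\gamma_2^{iso}$ or $1-p\le 1$, is where the substantive work lies. Here I would insert the explicit evaluations $\mathbb{E}\|uv\|_{\mbox{F}}^2=1/d$ (Eq. \ref{eq:SVclass}) and the formula for $\mathbb{E}\|uv(uv)^\dagger\|_{\mbox{F}}^2$ (Eq. \ref{eq:SVq}) from the lemma immediately preceding this theorem, together with $1-m\mathbb{E}(q_{ij}^4)=(m-1)\beta/(m\beta+2)$, into Eq. \ref{eq:1-pFINAL}. This converts $1-p\le 1$ into a single rational inequality of the shape $d(2k-1)\,C\le\frac{k^2m^2(n-1)^2\beta}{n^2(m-1)(m\beta+2)}$, where $C=1/d-\mathbb{E}\|uv(uv)^\dagger\|_{\mbox{F}}^2>0$ depends only on $d$ and $\beta$. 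I expect this to be the main obstacle, since the inequality must be verified over all admissible $d\ge 2$, $\beta\in\{1,2,4\}$ and $N\ge 3$, the $N$-dependence entering only through $k$ and $m=d^N$; the delicate regime is the smallest case $N=3$, $k=1$, before the quadratic-in-$k$ growth of the right-hand side that dominates at large $N$ takes over.

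Finally, the limit $\lim_{N\to\infty}p=1$ follows from the same substitution by an asymptotic count. As $N\to\infty$ we have $k=(N-1)/2\to\infty$ and $m=d^N\to\infty$ with $n=d^2$ fixed, while the Frobenius-norm factors are independent of $N$. Hence the numerator of Eq. \ref{eq:1-pFINAL} grows like $d(2k-1)\,C=O(k)$, whereas the denominator behaves like $(k(n-1)/n)^2$ because $m/(m-1)\to 1$ and $1-m\mathbb{E}(q_{ij}^4)\to 1$. Therefore $1-p=O(1/N)\to 0$, so $p\to 1$, completing the proof.
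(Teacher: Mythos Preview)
Your plan matches the paper's proof essentially step for step: the paper also argues $\gamma_2^c\ge\gamma_2^{iso}$ from $1-m\mathbb{E}(q_{ij}^4)=\frac{(m-1)\beta}{m\beta+2}>0$, establishes $\gamma_2^q\le\gamma_2^c$ from $\sum\sigma_i^4\le\sum\sigma_i^2$ via $\sigma_i\le 1$, and obtains $p\to 1$ from the same $O(k)$ versus $O(k^2)$ count. The one place you stop short is the verification of $1-p\le 1$, which you correctly flag as the substantive step; the paper finishes it exactly along the lines you anticipate. After inserting Eqs.~\ref{eq:SVclass}, \ref{eq:SVq} and the Haar fourth moment into Eq.~\ref{eq:1-pFINAL}, the paper factors $1-p$ as in Eq.~\ref{eq:1-pSlider}, notes that the two leading factors $(1-d^{-2k-1})$ and $[1-((k-1)/k)^2]$ are each $\le 1$, and then bounds the $k$-dependent piece inside the braces by its $k=1$ value (your ``delicate regime''), reducing the question to a single rational inequality in $d$ and $\beta$; subtracting denominator from numerator yields $(\beta d+2)[\beta(d^4-2d^3)+2(d^3-d^2-1)]\ge 0$ for all $d\ge 2$, which closes the argument for every $\beta\ge 1$ (not only $\beta\in\{1,2,4\}$).
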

\begin{proof}
We have $\left\{ 1-\frac{1}{m}\sum_{ij=1}^{m}q_{ij}^{4}\right\} \geq0$,
since $\sum_{ij}q_{ij}^{4}\leq\sum_{ij}q_{ij}^{2}=m.$ The last inequality
follows from $q_{ij}^{2}\le1$ . Therefore, Eq. \ref{eq:denom} is

\begin{eqnarray*}
\gamma_{2}^{iso}-\gamma_{2}^{c} & = & \frac{2}{\sigma^{4}}\left(m_{2}^{\mbox{odd}}-m_{11}^{\mbox{odd}}\right)\left(m_{2}^{\mbox{even}}-m_{11}^{\mbox{even}}\right)\times\\
 &  & \left(\frac{km\left(n-1\right)}{n\left(m-1\right)}\right)^{2}\left\{ m\mathbb{E}\left(q_{11}^{4}\right)-1\right\} \le0.
\end{eqnarray*}

From Eqs. \ref{eq:QdepartFinal} and \ref{eq:CDepart} and using the
fact that the singular values $\sigma_{i}\le1$ we have

\[
\left\Vert uv\left(uv\right)^{\dagger}\right\Vert _{\mbox{F}}^{2}=\sum_{i=1}^{d}\sigma_{i}^{4}\le\sum_{i=1}^{d}\sigma_{i}^{2}=\left\Vert uv\right\Vert _{\mbox{F}}^{2}
\]
which proves $\gamma_{2}^{q}-\gamma_{2}^{c}\leq0$. In order to establish
$\gamma_{2}^{iso}\leq\gamma_{2}^{q}\leq\gamma_{2}^{c}$, we need to
show that $\gamma_{2}^{c}-\gamma_{2}^{q}\le\gamma_{2}^{c}-\gamma_{2}^{iso}$.
Eq. \ref{eq:1-pFINAL} after substituting $m\mathbb{E}\left(q_{ij}^{4}\right)=\frac{\beta+2}{\left(m\beta+2\right)}$
from Table \ref{tab:HaarExp} and Eqs. \ref{eq:SVclass}, \ref{eq:SVq}
reads

\begin{equation}
1-p=\left(1-d^{-2k-1}\right)\left[1-\left(\frac{k-1}{k}\right)^{2}\right]\left\{ \left(1-\frac{1-d^{-2k+1}}{1+\beta d^{2}/2}\right)\left(\frac{d}{d+1}\right)^{2}\left[\frac{\beta\left(d^{3}+d^{2}-2d+1\right)+4d-2}{\left(d-1\right)\left(\beta d^{2}+2\right)}\right]\right\} \label{eq:1-pSlider}
\end{equation}
We want to show that $0\le1-p\le1$ for any integer $k\ge1,\mbox{ }d\ge2$
and $\mbox{ }\beta\ge1$. All the factors are manifestly $\ge0$,
therefore $1-p\ge0$. The first two factors are clearly $\le1$ so
we need to prove that the term in the braces is too. Further, $k=1$
provides an upper bound as $\left(1-\frac{1-d^{-2k+1}}{1+\beta d^{2}/2}\right)\le\left(1-\frac{1-d^{-3}}{1+\beta d^{2}/2}\right)$.
We rewrite the term in the braces

\begin{equation}
\frac{d\left(\beta d^{3}+2\right)\left[\beta\left(d^{3}+d^{2}-2d+1\right)+4d-2\right]}{\left(\beta d^{2}+2\right)^{2}\left(d+1\right)^{2}\left(d-1\right)},\label{eq:num-denom}
\end{equation}
but we can subtract the denominator from the numerator to get

\[
\left(\beta d+2\right)\left[\beta\left(d^{4}-2d^{3}\right)+2\left(d^{3}-d^{2}-1\right)\right]\ge0\quad\forall\; d\ge2.
\]

This proves that ($\ref{eq:num-denom}$) is less than one. Therefore,
the term in the braces is less than one and hence $0\le p\le1$. Let
us note the following limits of interest (recall $N-1=2k$)

\begin{eqnarray*}
\lim_{d\rightarrow\infty}\left(1-p\right) & = & \frac{2k-1}{k^{2}}\overset{k=1}{=}1\\
\lim_{N\rightarrow\infty}\left(1-p\right) & \sim & \frac{1}{N}\rightarrow0
\end{eqnarray*}
the first limit tells us that if we consider having two local terms
and take the local dimension to infinity we have essentially free
probability theory as expected. The second limit shows that in the
thermodynamical limit (i.e., $N\rightarrow\infty$) the convex combination
slowly approaches the classical end. In the limit where $\beta\rightarrow\infty$
the $\beta$ dependence in $\left(1-p\right)$ cancels out. This is
a reconfirmation of the fact that in free probability theory, for
$\beta\rightarrow\infty$, the result should be independent of $\beta$.
We see that the bounds are tight. 

\begin{figure}[H]
\begin{centering}
\includegraphics[scale=0.35]{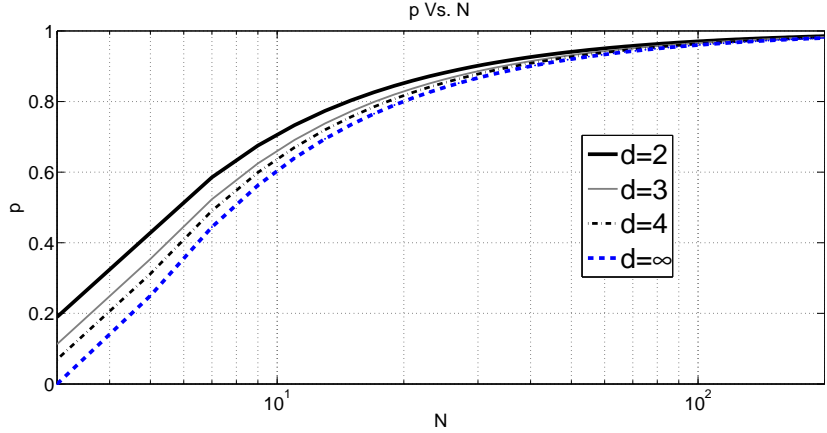}
\par\end{centering}

\centering{}\caption{\label{fig:pVsN}An example: $\beta=1$: the quantum problem for all
$d$ lies in between the iso $(p=0)$ and the classical $(p=1)$. }
\end{figure}

\end{proof}
\begin{flushleft}
Comment: Entanglement shows itself starting at the fourth moment;
further, in the expansion of the fourth moments only the terms that
involve \textit{a pair} of local terms \textit{sharing a site} differ.
Note that when the QMBS possesses a translational symmetry, there
is an additional complication introduced by the dependence of the
local terms. Though, in this case, the non-iid nature of the local
terms complicates the matter theoretically, we have not seen a practical
limitation of IE in our numerical experiments. 
\par\end{flushleft}

\begin{flushleft}
Comment: One could from the beginning use free approximation instead
of isotropic ($m\rightarrow\infty$), in which case the proofs are
simplified.
\par\end{flushleft}

We now \textit{\uline{summarize}} the main thesis of this work.
We are interested in the eigenvalue distribution of 

\[
H\equiv H_{\mbox{odd}\vphantom{\mbox{even}}}+H_{\mbox{even}\mbox{\ensuremath{\vphantom{odd}}}}=\sum_{l=1,3,5,\cdots}\mathbb{I}\otimes H_{l,l+1}\otimes\mathbb{I}+\sum_{l=2,4,6,\cdots}\mathbb{I}\otimes H_{l,l+1}\otimes\mathbb{I},
\]
 which in a basis in that $H_{\mbox{odd}\vphantom{\mbox{even}}}$
is diagonal reads $H=A+Q_{q}^{-1}BQ_{q}$. Since this problem has
little hope in being solved exactly we consider along with it two
known approximations:

\begin{eqnarray*}
H_{c} & = & A+\Pi^{-1}B\Pi\\
H & = & A+Q_{q}^{-1}BQ_{q}\\
H_{iso} & = & A+Q^{-1}BQ.
\end{eqnarray*}

We proved that the first three moments of the three foregoing equations
are equal. We then calculated their fourth moments as encoded by their
kurtoses ($\gamma_{2}$'s) analytically and proved that there exists
a $0\le p\le$1 such that 

\[
\gamma_{2}^{q}=p\gamma_{2}^{c}+\left(1-p\right)\gamma_{2}^{iso}.
\]
It turned out that the only terms in the expansion of the fourth moments
that were relevant were

\begin{equation}
1-p=\frac{\mathbb{E}\mbox{Tr}\left\{ \left(A\Pi^{-1}B\Pi\right)^{2}-\left(AQ_{q}^{-1}BQ_{q}\right)^{2}\right\} }{\mathbb{E}\mbox{Tr}\left\{ \left(A\Pi^{-1}B\Pi\right)^{2}-\left(AQ^{-1}BQ\right)^{2}\right\} }.\label{eq:1-P_Effective}
\end{equation}
Through direct calculation we found that the numerator $\mathbb{E}\mbox{Tr}\left\{ \left(A\Pi^{-1}B\Pi\right)^{2}-\left(AQ_{q}^{-1}BQ_{q}\right)^{2}\right\} $
evaluates to be

\[
d\left(2k-1\right)\left(m_{2}^{\mbox{odd}}-m_{1,1}^{\mbox{odd}}\right)\left(m_{2}^{\mbox{even}}-m_{1,1}^{\mbox{even}}\right)\mathbb{E}\left(\left\Vert uv\right\Vert _{\mbox{F}}^{2}-\left\Vert uv\left(uv\right)^{\dagger}\right\Vert _{\mbox{F}}^{2}\right),
\]
and the denominator $\mathbb{E}\mbox{Tr}\left\{ \left(A\Pi^{-1}B\Pi\right)^{2}-\left(AQ^{-1}BQ\right)^{2}\right\} $

\[
\left(m_{2}^{\mbox{odd}}-m_{1,1}^{\mbox{odd}}\right)\left(m_{2}^{\mbox{even}}-m_{1,1}^{\mbox{even}}\right)\left(\frac{km\left(n-1\right)}{n\left(m-1\right)}\right)^{2}\left\{ 1-m\mathbb{E}\left(q_{ij}^{4}\right)\right\} .
\]

Therefore $1-p$ does not depend on the local distribution and can
generally be expressed as

\framebox{\begin{minipage}[t]{1\columnwidth}%
\[
1-p=\frac{d\left(2k-1\right)\mathbb{E}\left(\left\Vert uv\right\Vert _{\mbox{F}}^{2}-\left\Vert uv\left(uv\right)^{\dagger}\right\Vert _{\mbox{F}}^{2}\right)}{\left(\frac{km\left(n-1\right)}{n\left(m-1\right)}\right)^{2}\left\{ 1-m\mathbb{E}\left(q_{ij}^{4}\right)\right\} }.
\]
\end{minipage}}

\bigskip{}

If we further assume that the local eigenvectors are $\beta-\mbox{Haar}$
distributed we get 

\begin{eqnarray*}
1-p & = & \left(1-d^{-2k-1}\right)\left[1-\left(\frac{k-1}{k}\right)^{2}\right]\left(1-\frac{1-d^{-2k+1}}{1+\beta d^{2}/2}\right)\left(\frac{d}{d+1}\right)^{2}\\
 & \times & \left[\frac{\beta\left(d^{3}+d^{2}-2d+1\right)+4d-2}{\left(d-1\right)\left(\beta d^{2}+2\right)}\right].
\end{eqnarray*}

\medskip{}

Next we asserted that this $p$ can be used to approximate the distribution

\[
d\nu^{q}\approx d\nu^{IE}=pd\nu^{c}+\left(1-p\right)d\nu^{iso}.
\]
We argued that the spectra obtained using Isotropic Entanglement (IE)
are accurate well beyond four moments. 

For illustration, we apply IE theory in full detail to a chain with
Wishart matrices as local terms. Other types of local terms (e.g.
GOE, random $\pm1$ eigenvalues) can be treated similarly; therefore
in Section \ref{sec:Other-Examples} we show the plots comparing IE
with exact diagonalization for these cases.

\section{\label{sec:First-Example:-Wishart}A Detailed Example: Wishart Matrices
as Local Terms }

As an example take a chain with odd number of sites and for the local
terms in Eq. \ref{eq:Hamiltonian} pick $H^{\left(l\right)}=W^{T}W$,
where $W$ is a rank $r$ matrix whose elements are picked randomly
from a Gaussian distribution ($\beta=1$); these matrices $W^{T}W$
are known as \textit{Wishart matrices. }Clearly the maximum possible
rank is $r=d^{2}$ for each of the local terms.

Any cumulant is equal to the corresponding cumulant of one local term,
denoted by $\kappa$, times the number of summands in Eq. \ref{eq:Hamiltonian}.
In particular, the fourth cumulant of $H$ is $\kappa_{4}^{\left(N-1\right)}=\left(N-1\right)\kappa_{4}$.
Below we drop the superscripts when the quantity pertains to the whole
chain. Next we recall the definitions in terms of cumulants of the
mean $(\mu)$, the variance $(\sigma^{2})$, the skewness $(\gamma_{1})$,
and the kurtosis $(\gamma_{2})$ 

\begin{equation}
\begin{array}{cccccccc}
\mu\equiv\kappa_{1} & \quad & \sigma^{2}\equiv\kappa_{2} & \quad & \gamma_{1}\equiv\frac{\kappa_{3}}{\sigma^{3}} & \quad & \gamma_{2}\equiv\frac{\kappa_{4}}{\sigma^{4}}=\frac{m_{4}}{\sigma^{4}}-3 & .\end{array}\label{eq:momCumul}
\end{equation}

\subsection{\label{sub:Wishart-Classical-Case:-Calculation}Evaluation of $p=\frac{\gamma_{2}^{q}-\gamma_{2}^{iso}}{\gamma_{2}^{c}-\gamma_{2}^{iso}}$}

The moments of the local terms are obtained from MOPS \cite{mops}; 

\begin{equation}
\begin{array}{c}
m_{1}=\beta r\\
m_{2}=\beta r\left[\beta\left(r+n-1\right)+2\right]\\
m_{3}=\beta r\left\{ \beta^{2}\left[n^{2}+\left(r-1\right)\left(3n+r-2\right)\right]+6\beta\left(n+r-1\right)+8\right\} \\
m_{4}=\beta r\left\{ 48+\beta^{3}\left[n^{3}+6n^{2}(r-1)+n\left(6r-11\right)\left(r-1\right)-6\left(r^{2}+1\right)+r^{3}+11r\right]\right.\\
\left.+2\beta^{2}\left[6\left(n^{2}+r^{2}\right)+17\left(n(r-1)-r\right)+11\right]+44\beta\left(n+r-1\right)\right\} \\
m_{1,1}=\beta^{2}r\left(r-1\right)
\end{array}
\end{equation}
which for real matrices $\beta=1$ yields

\begin{equation}
\begin{array}{c}
m_{1}=r\\
m_{2}=r\left(r+n+1\right)\\
m_{3}=r\left(n^{2}+3n+3rn+3r+r^{2}+4\right)\\
m_{4}=r\left(6n^{2}+21n+6rn^{2}+17rn+21r+6nr^{2}+6r^{2}+n^{3}+r^{3}+20\right)\\
m_{1,1}=r\left(r-1\right).
\end{array}
\end{equation}

The mean, variance, skewness, and kurtosis are obtained from the foregoing
relations, through the cumulants Eq. \ref{eq:momCumul}. We drop the
superscripts when the quantity pertains to the whole chain. Therefore,
using Eq. \ref{eq:momCumul}, we have

\begin{equation}
\begin{array}{ccc}
\mu\equiv\left(N-1\right)r & \quad & \sigma^{2}\equiv r\left(N-1\right)\left(n+1\right)\\
\gamma_{1}\equiv\frac{n^{2}+3n+4}{\left(n+1\right)^{3/2}\sqrt{r\left(N-1\right)}} & \quad & \gamma_{2}^{\left(c\right)}\equiv\frac{n^{2}\left(n+6\right)-rn\left(n+1\right)+21n+2r+20}{r\left(N-1\right)\left(n+1\right)^{2}}.
\end{array}\label{eq:AllClassical}
\end{equation}

From Eq. \ref{eq:m_2} we readily obtain 

\begin{equation}
\frac{1}{m}\mathbb{E}\textrm{Tr}\left(A\Pi^{T}B\Pi\right)^{2}=r^{2}k^{2}\left(rk+n+1\right)^{2}.\label{eq:Classical}
\end{equation}

By The Matching Three Moments theorem we immediately have the mean,
the variance and the skewness for the isotropic case 

\[
\begin{array}{ccc}
\mu=\left(N-1\right)r &  & \sigma^{2}=r\left(N-1\right)\left(n+1\right)\\
 & \gamma_{1}=\frac{n^{2}+3n+4}{\left(n+1\right)^{3/2}\sqrt{r\left(N-1\right)}}.
\end{array}
\]

Note that the denominator in Eq. \ref{eq:convex} becomes, 

\begin{equation}
\gamma_{2}^{c}-\gamma_{2}^{iso}=\frac{\kappa_{4}^{(c)}-\kappa_{4}^{(iso)}}{\sigma^{4}}=\frac{2}{m}\frac{\mathbb{E}\left\{ \textrm{Tr}\left[\left(A\Pi^{T}B\Pi\right)^{2}-\left(AQ^{T}BQ\right)^{2}\right]\right\} }{r^{2}\left(N-1\right)^{2}\left(n+1\right)^{2}}.\label{eq:denomEx}
\end{equation}

In the case of Wishart matrices, $m_{1}^{\mbox{odd}}=m_{1}^{\mbox{even}}=r$,
and $m_{2}^{\mbox{odd}}=m_{2}^{\mbox{even}}=r\left(r+n+1\right)$,
$m_{11}^{\mbox{odd}}=m_{11}^{\mbox{even}}=r\left(r-1\right)$ given
by Eqs. \ref{eq:m_2} and \ref{eq:elemGeneral} respectively. Therefore
we can substitute these into Eq. \ref{eq:Iso-Classical_FINAL} 

\begin{eqnarray}
\frac{1}{m}\mathbb{E}\left\{ \textrm{Tr}\left[\left(A\Pi^{T}B\Pi\right)^{2}-\left(AQ^{T}BQ\right)^{2}\right]\right\}  & = & \left(m_{2}^{(A)}-m_{1,1}^{(A)}\right)\left(m_{2}^{(B)}-m_{1,1}^{(B)}\right)\left\{ 1-m\mathbb{E}\left(q_{ij}^{4}\right)\right\} \nonumber \\
 & = & \frac{\beta\left(m-1\right)}{\left(m\beta+2\right)}\left(\frac{km\left(n-1\right)}{n\left(m-1\right)}\right)^{2}\left(m_{2}-m_{1,1}\right)^{2}\nonumber \\
 & = & \frac{\beta k^{2}m^{2}\left(n-1\right)^{2}}{\left(m\beta+2\right)\left(m-1\right)n^{2}}\left(m_{2}-m_{1,1}\right)^{2}\label{eq:WishartDenom}
\end{eqnarray}
 One can also calculate each of the terms separately and obtain the
same results (see Appendix for the alternative).

From Eq. \ref{eq:Quantum-Classical} we have 

\begin{eqnarray}
\frac{1}{m}\mathbb{E}\left[\mbox{Tr}\left(A\Pi^{T}B\Pi\right)^{2}-\textrm{Tr}\left(AQ_{q}^{T}BQ_{q}\right)^{2}\right] & = & d\left(2k-1\right)\left(m_{2}-m_{1,1}\right)^{2}\nonumber \\
 & \times & \mathbb{E}\left(\left\Vert uv\right\Vert _{\mbox{F}}^{2}-\left\Vert uv\left(uv\right)^{\dagger}\right\Vert _{\mbox{F}}^{2}\right)\nonumber \\
 & = & \left(2k-1\right)\left(m_{2}-m_{1,1}\right)^{2}\nonumber \\
 & \times & \left\{ \frac{1+d\left(d^{3}+d-3\right)}{\left(d^{2}+2\right)^{2}}\right\} .\label{eq:WishartNum}
\end{eqnarray}
We can divide Eq. \ref{eq:WishartNum} by Eq. \ref{eq:WishartDenom}
to evaluate the parameter $p$.

\subsection{\label{sub:Wishart-Summary}Summary of Our Findings and Further Numerical
Results}

We summarize the results along with numerical experiments in Tables
\ref{tab:summaryTheory} and \ref{tab:SummaryNumerical} to show the
equivalence of the first three moments and the departure of the three
cases in their fourth moment. As said above,

\begin{equation}
Q_{c}=\mathbb{I}_{d^{N}}\qquad Q_{iso}\equiv Q\;\textrm{Haar}\; d^{N}\times d^{N}\qquad Q_{q}=\left(Q_{q}^{(A)}\right)^{T}Q_{q}^{(B)}
\end{equation}

\begin{flushleft}
where, $\left(Q_{q}^{(A)}\right)^{T}Q_{q}^{(B)}$ is given by Eq.
\ref{eq:OddEven}. In addition, from Eq. \ref{eq:AllClassical} we
can define $\Delta$ to be the part of the kurtosis that is equal
among the three cases
\par\end{flushleft}

\begin{center}
$\Delta=$ $\gamma_{2}^{c}$ - $\frac{2m_{2}^{A}m_{2}^{B}}{\sigma^{4}}=$
$\gamma_{2}^{c}-\frac{1}{2}\frac{\left(rk+n+1\right)^{2}}{\left(n+1\right)^{2}}$.
\par\end{center}

Using $\Delta$ we can obtain the full kurtosis for the iso and quantum
case, and therefore (see Table \ref{tab:summaryTheory} for a theoretical
summary): 

\begin{equation}
p=\frac{\gamma_{2}^{q}-\gamma_{2}^{iso}}{\gamma_{2}^{c}-\gamma_{2}^{iso}}.
\end{equation}

\begin{table}
\begin{centering}
\begin{tabular}{|c||c||c||c|}
\hline 
$\beta=1$ Wishart & Iso & Quantum & Classical \tabularnewline
\hline 
\hline 
Mean $\mu$ & \multicolumn{3}{c|}{$r\left(N-1\right)$}\tabularnewline
\hline 
Variance $\sigma^{2}$ & \multicolumn{3}{c|}{$r\left(N-1\right)\left(d^{2}+1\right)$}\tabularnewline
\hline 
Skewness $\gamma_{1}$ & \multicolumn{3}{c|}{$\frac{d^{4}+3d^{2}+4}{\sqrt{r\left(N-1\right)\left(d^{2}+1\right)^{3}}}$}\tabularnewline
\hline 
$\frac{1}{m}\mathbb{E}\left[\textrm{Tr}\left(AQ_{\bullet}^{T}BQ_{\bullet}\right)^{2}\right]$ & $m_{2}^{A}m_{2}^{B}-$ Eq. \ref{eq:WishartDenom} & $m_{2}^{A}m_{2}^{B}-$ Eq. \ref{eq:WishartNum} & $r^{2}k^{2}\left(rk+n+1\right)^{2}$\tabularnewline
\hline 
Kurtosis $\gamma_{2}^{\left(\bullet\right)}$ & $\frac{2}{m\sigma^{4}}\mathbb{E}\left[\textrm{Tr}\left(AQ^{T}BQ\right)^{2}\right]$$+\Delta$ & $\frac{2}{m\sigma^{4}}\mathbb{E}\left[\textrm{Tr}\left(AQ_{q}^{T}BQ_{q}\right)^{2}\right]$$+\Delta$ & Eq. \ref{eq:AllClassical}\tabularnewline
\hline 
\end{tabular}
\par\end{centering}

\centering{}\caption{\label{tab:summaryTheory}Summary of the results when the local terms
are Wishart matrices. The fourth moment is where the three cases differ.}
\end{table}

\begin{table}
\begin{centering}
\begin{longtable}{|c||c||c||c|c||c||c|c|c|c|}
\hline 
\noalign{\vskip\doublerulesep}
\multicolumn{10}{|c|}{Experiments based on $500000$ trials}\tabularnewline[1sp]
\hline 
\noalign{\vskip\doublerulesep}
\multicolumn{4}{|c|}{$N=3$ } & \multicolumn{3}{c|}{Theoretical value} & \multicolumn{3}{c|}{Numerical Experiment }\tabularnewline[1sp]
\hline 
\noalign{\vskip\doublerulesep}
\multicolumn{4}{|c|}{} & Iso & Quantum & Classical & \multicolumn{1}{c||}{Iso} & \multicolumn{1}{c||}{Quantum} & Classical\tabularnewline[1sp]
\hline 
\hline 
\noalign{\vskip\doublerulesep}
\multicolumn{4}{|c|}{Mean $\mu$} & \multicolumn{3}{c|}{$8$} & $8.007$ & $8.007$ & $7.999$\tabularnewline[1sp]
\hline 
\noalign{\vskip\doublerulesep}
\multicolumn{4}{|c|}{Variance $\sigma^{2}$} & \multicolumn{3}{c|}{$40$} & $40.041$ & $40.031$ & $39.976$\tabularnewline[1sp]
\hline 
\noalign{\vskip\doublerulesep}
\multicolumn{4}{|c|}{Skewness $\gamma_{1}$} & \multicolumn{3}{c|}{$\frac{8}{25}\sqrt{10}=1.01192$} & $1.009$ & $1.009$ & $1.011$\tabularnewline[1sp]
\hline 
\noalign{\vskip\doublerulesep}
\multicolumn{4}{|c|}{Kurtosis $\gamma_{2}$} & \multicolumn{1}{c|}{$\frac{516}{875}=0.590$} & \multicolumn{1}{c|}{$\frac{33}{50}=0.660$} & $\frac{24}{25}=0.960$  & $0.575$ & $0.645$ & $0.953$\tabularnewline[1sp]
\hline 
\noalign{\vskip\doublerulesep}
\multicolumn{10}{|c|}{Experiments based on $500000$ trials}\tabularnewline[1sp]
\hline 
\noalign{\vskip\doublerulesep}
\multicolumn{4}{|c|}{$N=5$ } & \multicolumn{3}{c|}{Theoretical value} & \multicolumn{3}{c|}{Numerical Experiment }\tabularnewline[1sp]
\hline 
\noalign{\vskip\doublerulesep}
\multicolumn{4}{|c|}{} & Iso & Quantum & Classical & \multicolumn{1}{c||}{Iso} & \multicolumn{1}{c||}{Quantum} & Classical\tabularnewline[1sp]
\hline 
\hline 
\noalign{\vskip\doublerulesep}
\multicolumn{4}{|c|}{Mean $\mu$} & \multicolumn{3}{c|}{$16$} & $15.999$ & $15.999$ & $16.004$\tabularnewline[1sp]
\hline 
\noalign{\vskip\doublerulesep}
\multicolumn{4}{|c|}{Variance $\sigma^{2}$} & \multicolumn{3}{c|}{$80$} & $79.993$ & $80.005$ & $80.066$\tabularnewline[1sp]
\hline 
\noalign{\vskip\doublerulesep}
\multicolumn{4}{|c|}{Skewness $\gamma_{1}$} & \multicolumn{3}{c|}{$\frac{8}{25}\sqrt{5}=0.716$} & $0.715$ & $0.715$ & $0.717$\tabularnewline[1sp]
\hline 
\noalign{\vskip\doublerulesep}
\multicolumn{4}{|c|}{Kurtosis $\gamma_{2}$} & \multicolumn{1}{c|}{$\frac{228}{2635}=0.087$} & \multicolumn{1}{c|}{$\frac{51}{200}=0.255$} & $\frac{12}{25}=0.48$  & $0.085$ & $0.255$ & $0.485$\tabularnewline[1sp]
\hline 
\noalign{\vskip\doublerulesep}
\multicolumn{10}{|c|}{Experiments based on $300000$ trials}\tabularnewline[1sp]
\hline 
\noalign{\vskip\doublerulesep}
\multicolumn{4}{|c|}{$N=7$ } & \multicolumn{3}{c|}{Theoretical value} & \multicolumn{3}{c|}{Numerical Experiment }\tabularnewline[1sp]
\hline 
\noalign{\vskip\doublerulesep}
\multicolumn{4}{|c|}{} & Iso & Quantum & Classical & \multicolumn{1}{c||}{Iso} & \multicolumn{1}{c||}{Quantum} & Classical\tabularnewline[1sp]
\hline 
\hline 
\noalign{\vskip\doublerulesep}
\multicolumn{4}{|c|}{Mean $\mu$} & \multicolumn{3}{c|}{$24$} & $23.000$ & $23.000$ & $24.095$\tabularnewline[1sp]
\hline 
\noalign{\vskip\doublerulesep}
\multicolumn{4}{|c|}{Variance $\sigma^{2}$} & \multicolumn{3}{c|}{$120$} & $120.008$ & $120.015$ & $120.573$\tabularnewline[1sp]
\hline 
\noalign{\vskip\doublerulesep}
\multicolumn{4}{|c|}{Skewness $\gamma_{1}$} & \multicolumn{3}{c|}{$\frac{8}{75}\sqrt{30}=0.584$} & $0.585$ & $0.585$ & $0.588$\tabularnewline[1sp]
\hline 
\noalign{\vskip\doublerulesep}
\multicolumn{4}{|c|}{Kurtosis $\gamma_{2}$} & \multicolumn{1}{c|}{$-\frac{16904}{206375}=-0.082$} & \multicolumn{1}{c|}{$\frac{23}{150}=0.153$} & $\frac{8}{25}=0.320$  & $-0.079$ & $0.156$ & $0.331$\tabularnewline[1sp]
\newpage
\hline 
\noalign{\vskip\doublerulesep}
\multicolumn{10}{|c|}{Experiments based on $40000$ trials}\tabularnewline[1sp]
\hline 
\noalign{\vskip\doublerulesep}
\multicolumn{4}{|c|}{$N=9$ } & \multicolumn{3}{c|}{Theoretical value} & \multicolumn{3}{c|}{Numerical Experiment }\tabularnewline[1sp]
\hline 
\noalign{\vskip\doublerulesep}
\multicolumn{4}{|c|}{} & Iso & Quantum & Classical & \multicolumn{1}{c||}{Iso} & \multicolumn{1}{c||}{Quantum} & Classical\tabularnewline[1sp]
\hline 
\hline 
\noalign{\vskip\doublerulesep}
\multicolumn{4}{|c|}{Mean $\mu$} & \multicolumn{3}{c|}{$32$} & $32.027$ & $32.027$ & $31.777$\tabularnewline[1sp]
\hline 
\noalign{\vskip\doublerulesep}
\multicolumn{4}{|c|}{Variance $\sigma^{2}$} & \multicolumn{3}{c|}{$160$} & $160.074$ & $160.049$ & $157.480$\tabularnewline[1sp]
\hline 
\noalign{\vskip\doublerulesep}
\multicolumn{4}{|c|}{Skewness $\gamma_{1}$} & \multicolumn{3}{c|}{$\frac{4}{25}\sqrt{10}=0.506$} & $0.505$ & $0.506$ & $0.500$\tabularnewline[1sp]
\hline 
\noalign{\vskip\doublerulesep}
\multicolumn{4}{|c|}{Kurtosis $\gamma_{2}$} & \multicolumn{1}{c|}{$-\frac{539142}{3283175}=-0.164$} & \multicolumn{1}{c|}{$\frac{87}{800}=0.109$} & $\frac{6}{25}=0.240$  & $-0.165$ & $0.109$ & $0.213$\tabularnewline[1sp]
\hline 
\noalign{\vskip\doublerulesep}
\multicolumn{10}{|c|}{Experiments based on $2000$ trials}\tabularnewline[1sp]
\hline 
\noalign{\vskip\doublerulesep}
\multicolumn{4}{|c|}{$N=11$} & \multicolumn{3}{c|}{Theoretical value} & \multicolumn{3}{c|}{Numerical Experiment }\tabularnewline[1sp]
\hline 
\noalign{\vskip\doublerulesep}
\multicolumn{4}{|c|}{} & \multicolumn{1}{c||}{Iso} & Quantum & Classical & \multicolumn{1}{c||}{Iso} & \multicolumn{1}{c||}{Quantum} & Classical\tabularnewline[1sp]
\hline 
\noalign{\vskip\doublerulesep}
\multicolumn{4}{|c|}{Mean $\mu$} & \multicolumn{3}{c|}{$40$} & $39.973$ & $39.973$ & $39.974$\tabularnewline[1sp]
\hline 
\noalign{\vskip\doublerulesep}
\multicolumn{4}{|c|}{Variance $\sigma^{2}$} & \multicolumn{3}{c|}{$200$} & $200.822$ & $200.876$ & $197.350$\tabularnewline[1sp]
\hline 
\noalign{\vskip\doublerulesep}
\multicolumn{4}{|c|}{Skewness $\gamma_{1}$} & \multicolumn{3}{c|}{$\frac{8}{25}\sqrt{2}=0.452548$} & $0.4618$ & $0.4538$ & $0.407$\tabularnewline[1sp]
\hline 
\noalign{\vskip\doublerulesep}
\multicolumn{4}{|c|}{Kurtosis $\gamma_{2}$} & \multicolumn{1}{c|}{$-\frac{11162424}{52454375}=-0.213$} & \multicolumn{1}{c|}{$\frac{21}{250}=0.084$} & $\frac{24}{125}=0.192$  & $-0.189$ & $0.093$ & $0.102$\tabularnewline[1sp]
\hline 
\end{longtable}\medskip{}

\par\end{centering}

\caption{\label{tab:SummaryNumerical}The mean, variance and skewness of classical,
iso and quantum results match. However, the fourth moments (kurtoses)
differ. Here we are showing results for $d=2,\; r=4$ with an accuracy
of three decimal points.}
\end{table}

\pagebreak{}

The numerical convergence of the kurtoses to the theoretical values
were rather slow. To make sure the results are consistent we did a
large run with $500$ million trials for $N=5$, $d=2$, $r=3$ and
$\beta=1$ and obtained four digits of accuracy

\[
\begin{array}{c}
\qquad\gamma_{2}^{c}-\gamma_{2}^{iso}=0.39340\qquad\textrm{Numerical\;\ experiment}\\
\gamma_{2}^{c}-\gamma_{2}^{iso}=0.39347\qquad\textrm{Theoretical\;\ value.}
\end{array}
\]

Convergence is faster if one calculates $p$ based on the departing
terms alone (Eq. \ref{eq:1-P_Effective}). In this case, for full
rank Wishart matrices with $N=5$ and $d=2$

\begin{center}
\begin{tabular}{|c|c|}
\hline 
$\beta=1$, trials: $5$ Million & $1-p$\tabularnewline
\hline 
\hline 
Numerical Experiment & $0.57189$\tabularnewline
\hline 
Theoretical Value & $0.57183$\tabularnewline
\hline 
\end{tabular}\hspace{1cc}%
\begin{tabular}{|c|c|}
\hline 
$\beta=2$, trials: $10$ Million & $1-p$\tabularnewline
\hline 
\hline 
Numerical Experiment & $0.63912$\tabularnewline
\hline 
Theoretical Value & $0.63938$\tabularnewline
\hline 
\end{tabular}
\par\end{center}

\[
\]

Below we compare our theory against exact diagonalization for various
number of sites $N$, local ranks $r$, and site dimensionality $d$
(Figures \ref{fig:N=00003D3}-\ref{fig:N=00003D11}).

\begin{figure}[H]
\begin{centering}
\includegraphics[scale=0.4]{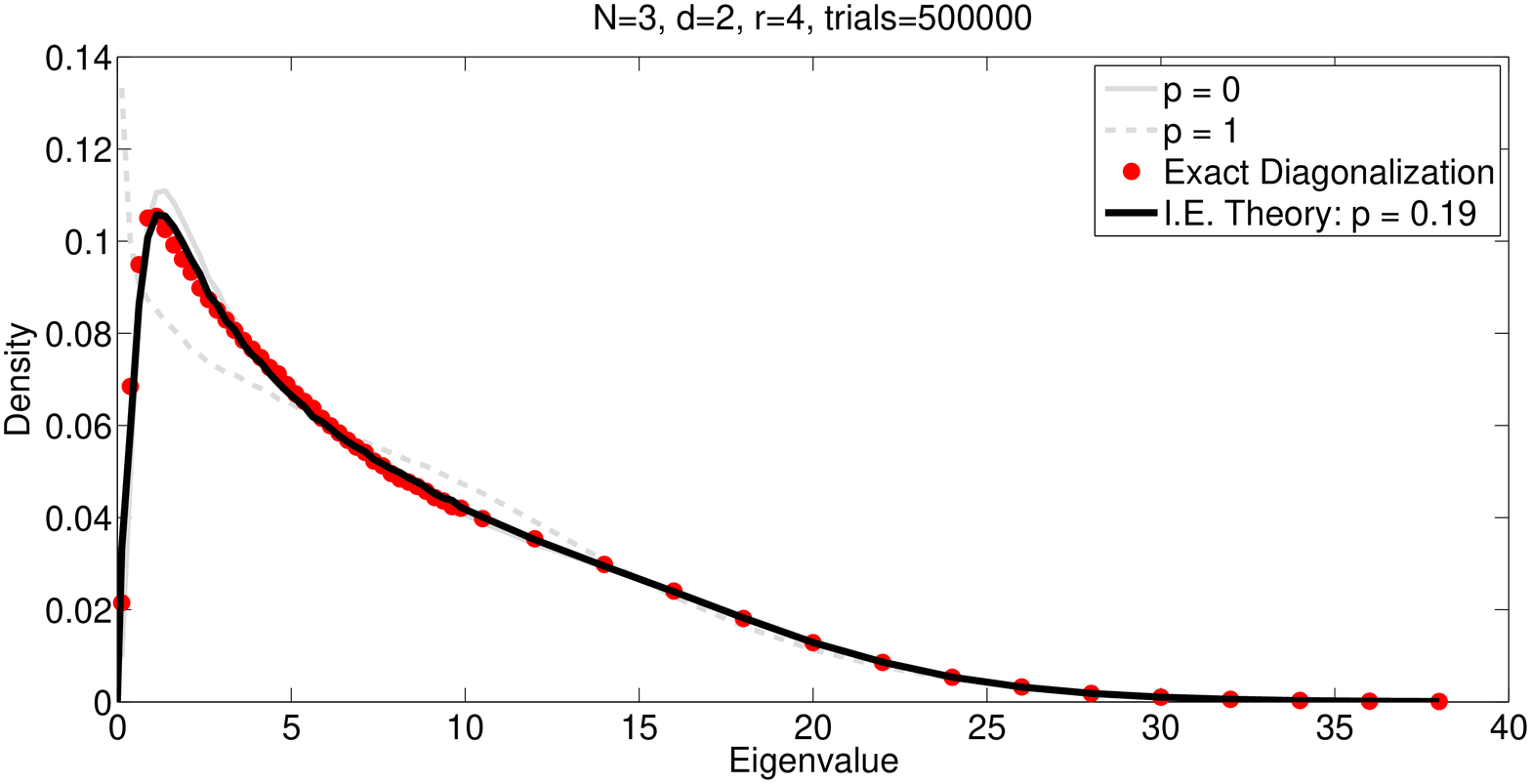}\vspace{0.2in}

\par\end{centering}

\begin{centering}
\includegraphics[scale=0.4]{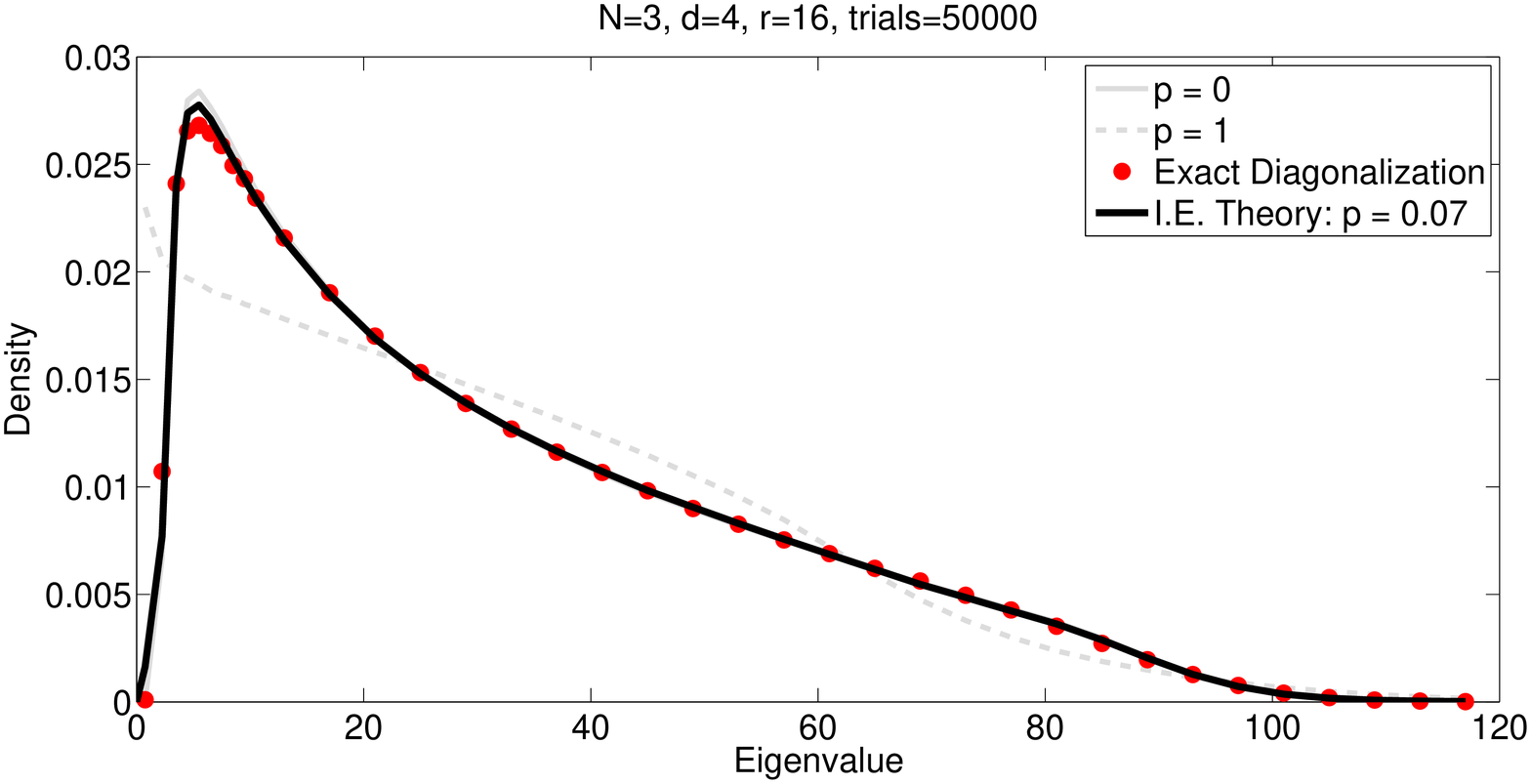}
\par\end{centering}

\begin{centering}
\vspace{0.2in}

\par\end{centering}

\begin{centering}
\includegraphics[scale=0.4]{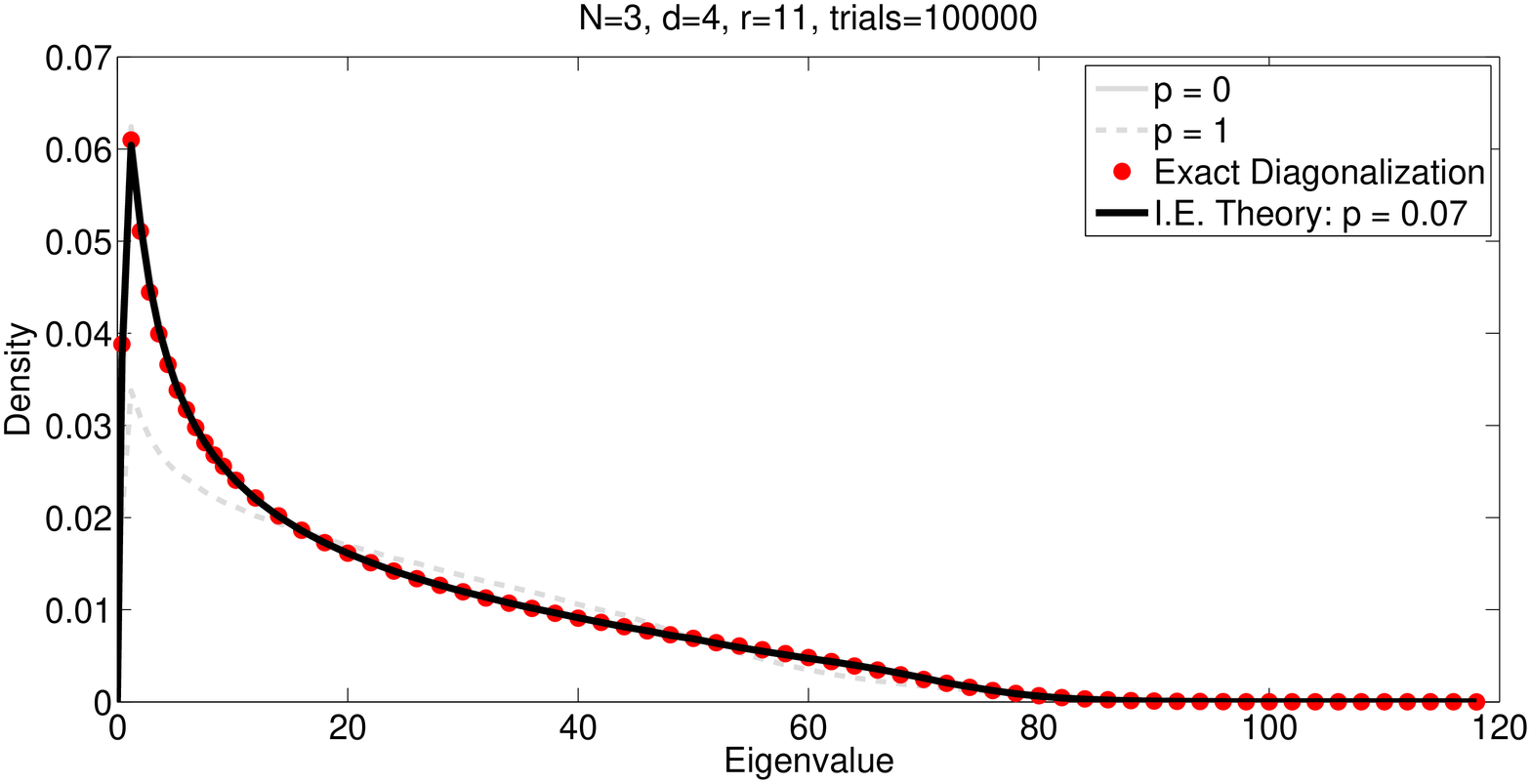}
\par\end{centering}

\centering{}\caption{\label{fig:N=00003D3}$N=3$ examples. Note that the last two plots
have the same $p$ despite having different ranks $r$. This is a
consequence of the Universality Lemma since they have the same $N$
and $d$. }
\end{figure}

\begin{figure}[H]
\begin{centering}
\includegraphics[scale=0.4]{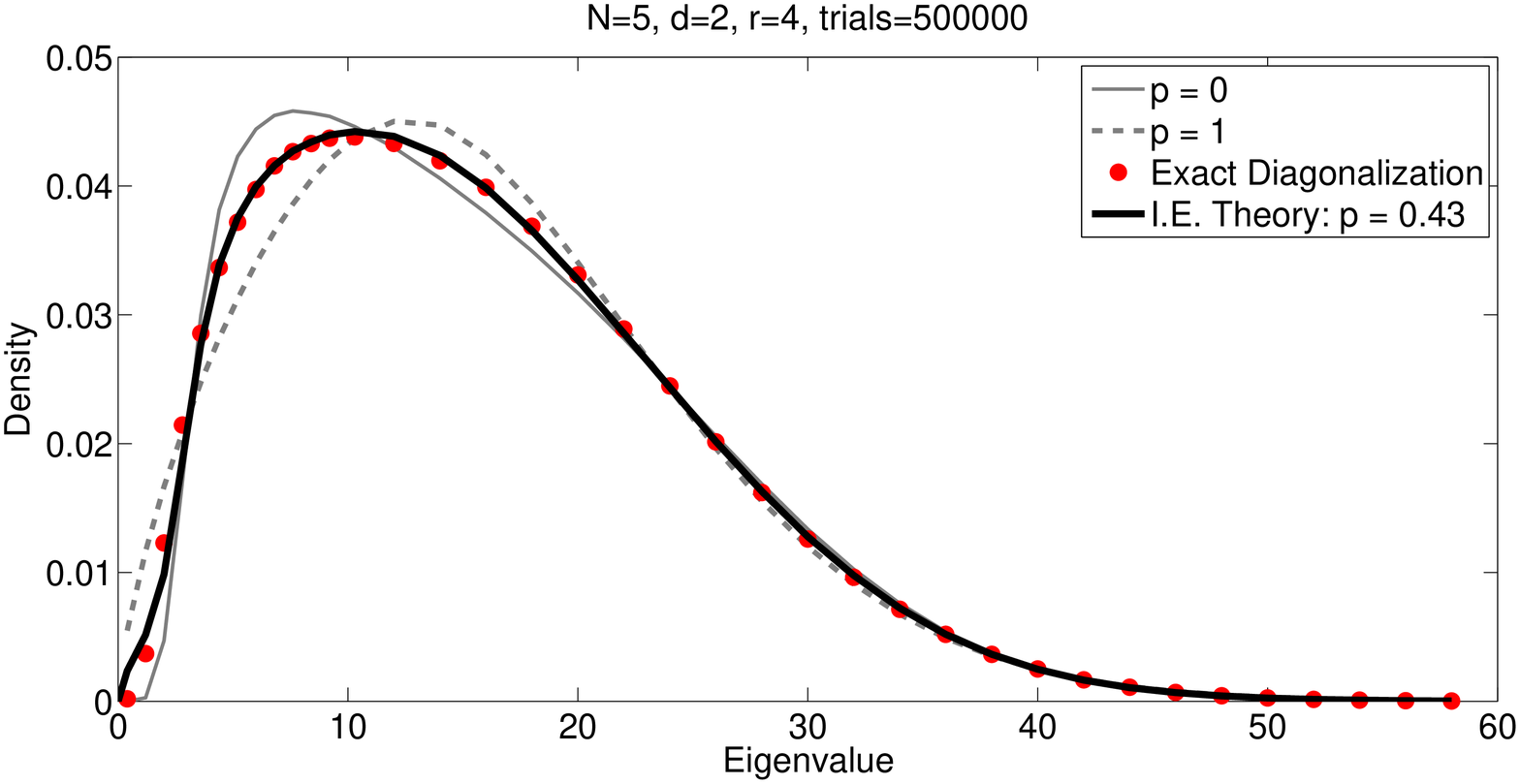}
\par\end{centering}

\begin{centering}
\vspace{0.2in}

\par\end{centering}

\begin{centering}
\includegraphics[scale=0.4]{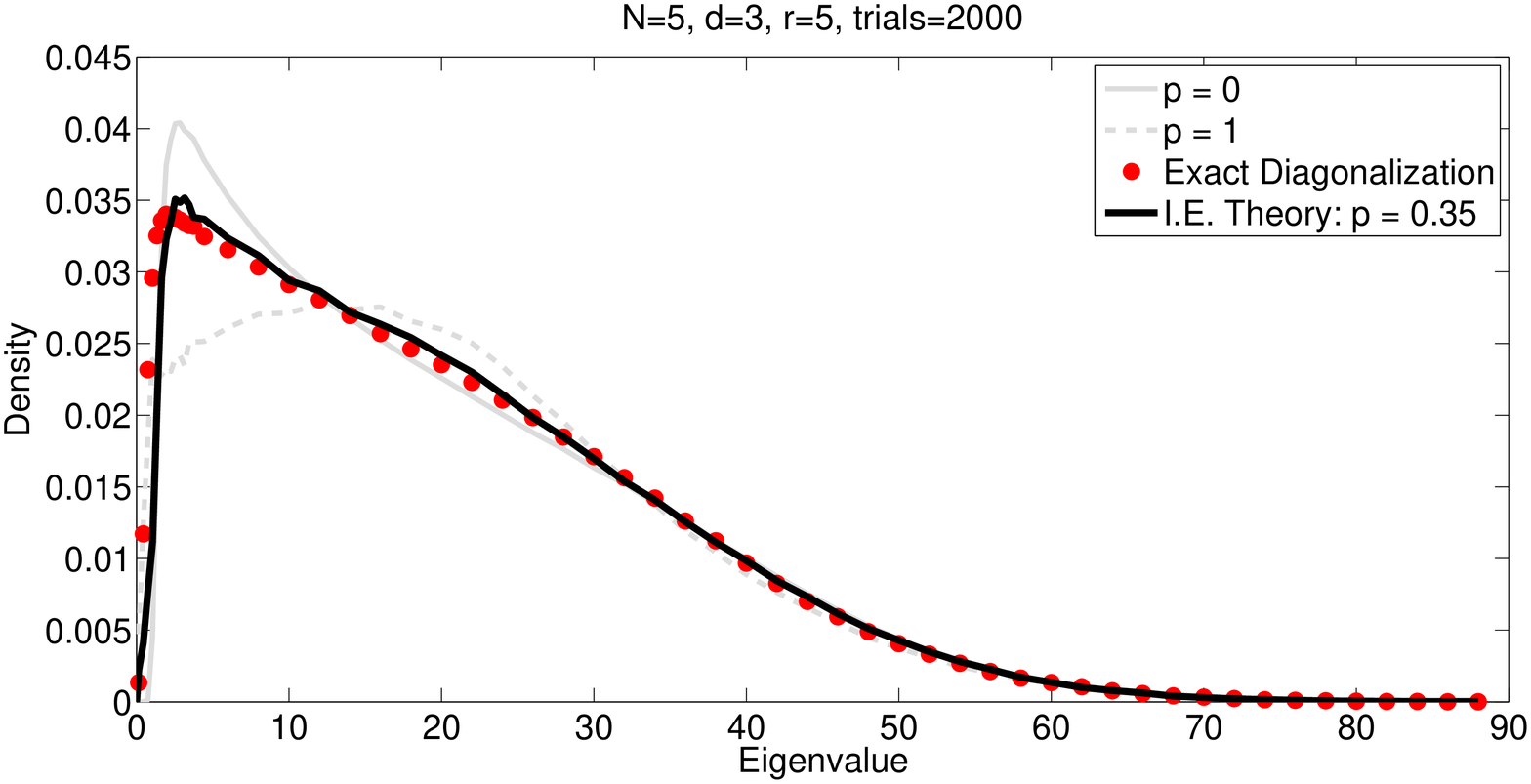}
\par\end{centering}

\centering{}\caption{$N=5$}
\end{figure}

\begin{figure}[H]
\begin{centering}
\includegraphics[scale=0.4]{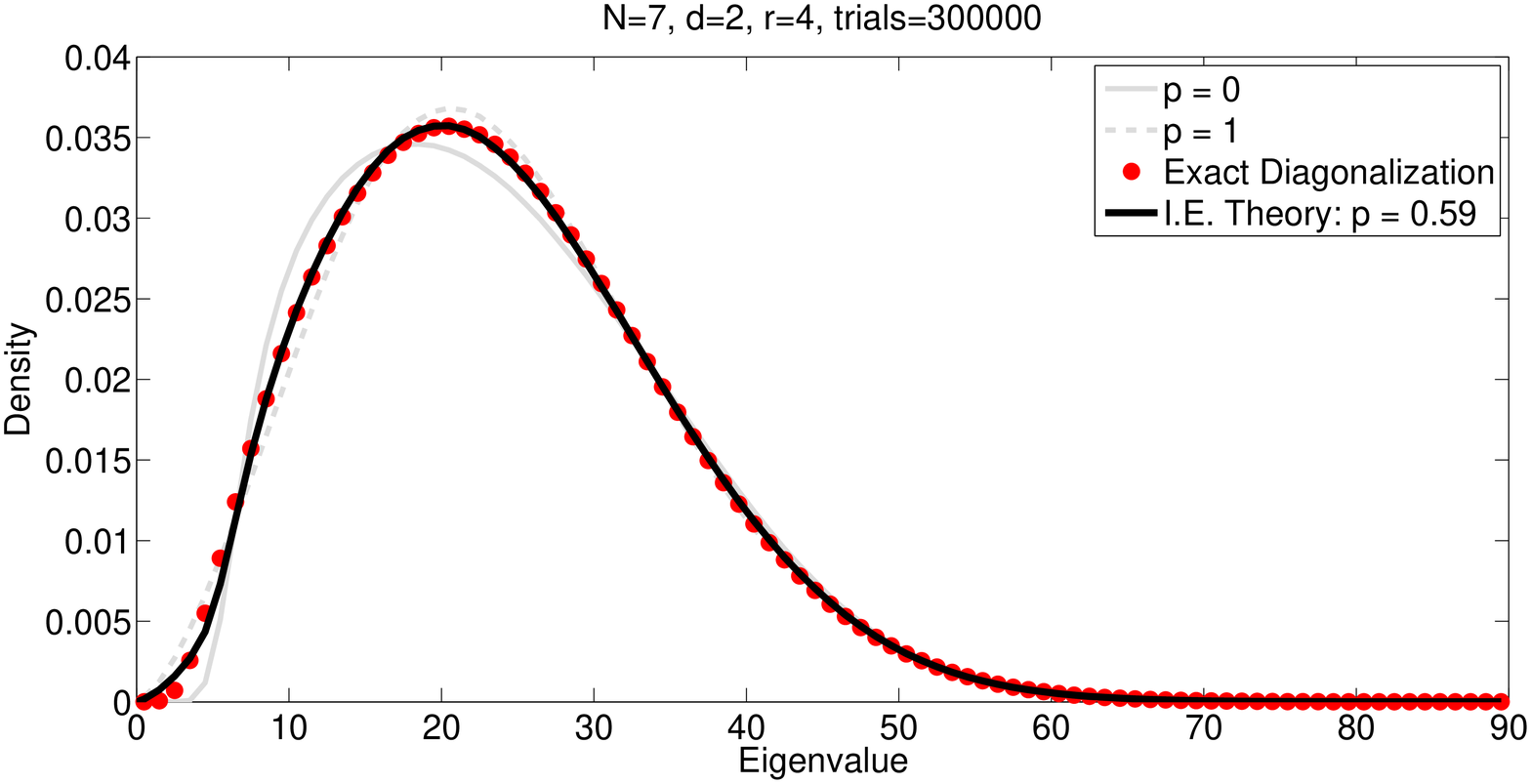}
\par\end{centering}

\centering{}\caption{$N=7$}
\end{figure}

\begin{figure}[H]
\begin{centering}
\includegraphics[scale=0.4]{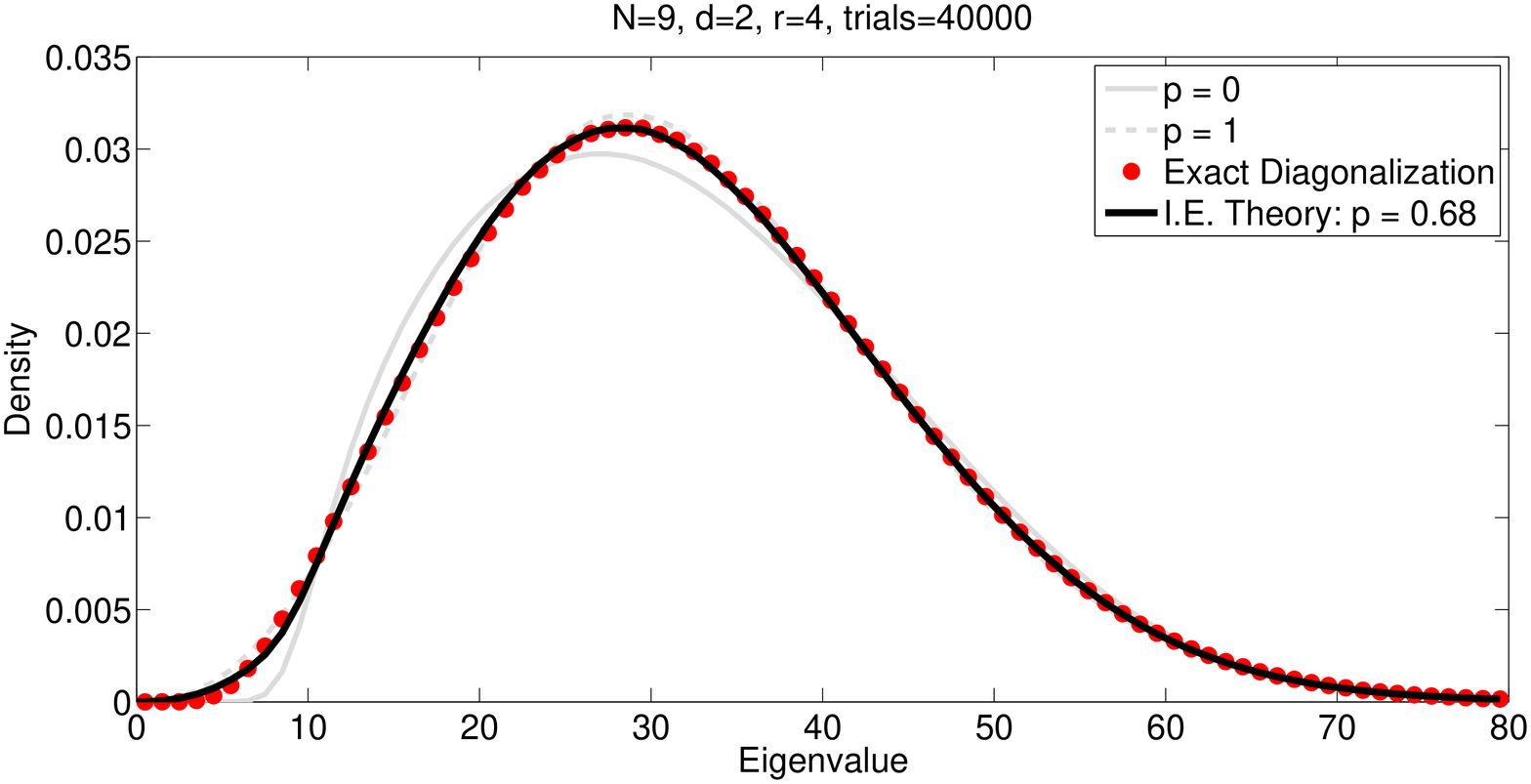}
\par\end{centering}

\begin{centering}
\includegraphics[scale=0.4]{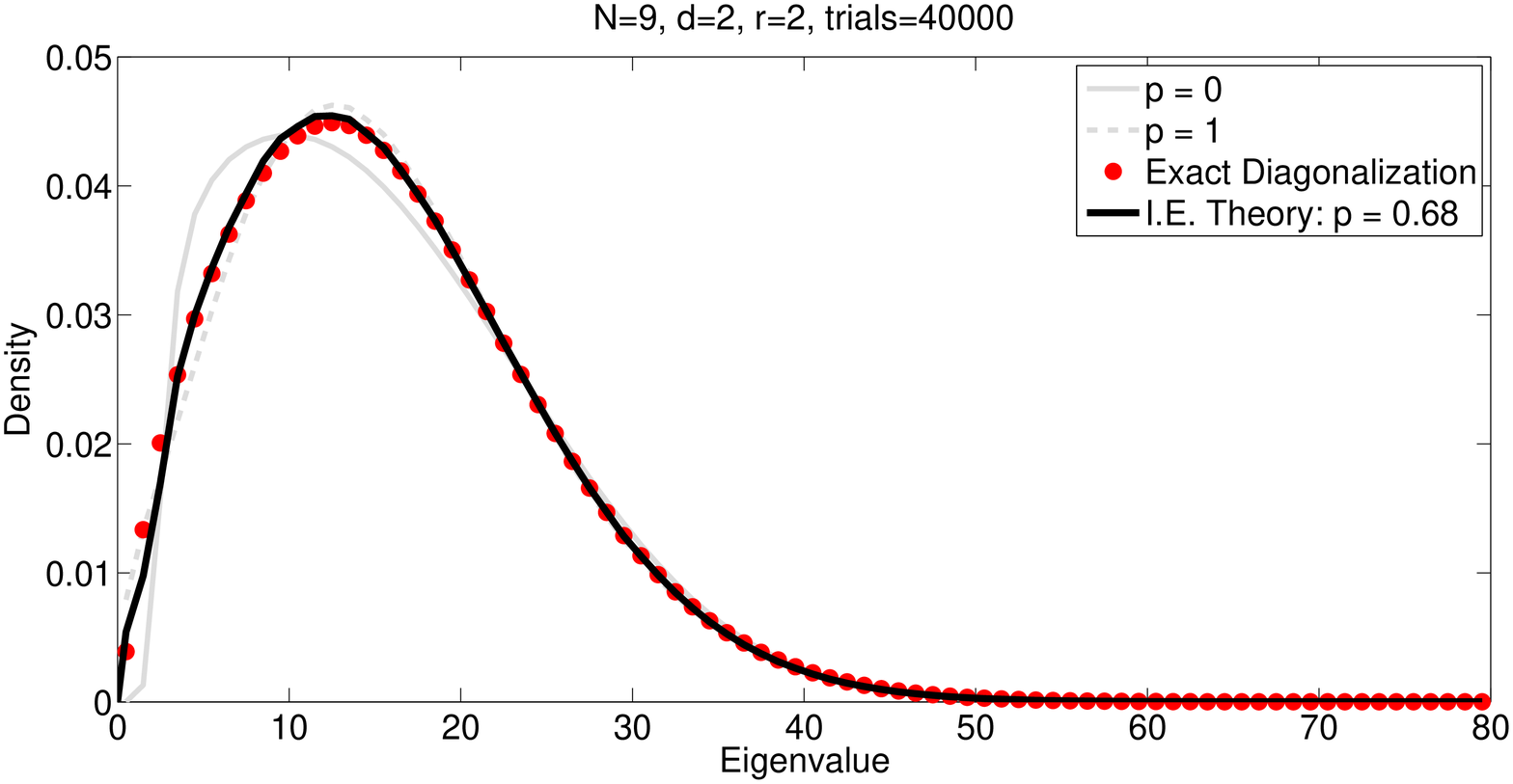}
\par\end{centering}

\centering{}\caption{$N=9$. Note that the two plots have the same $p$ despite having
different local ranks.}
\end{figure}

\begin{figure}[H]
\centering{}\includegraphics[scale=0.4]{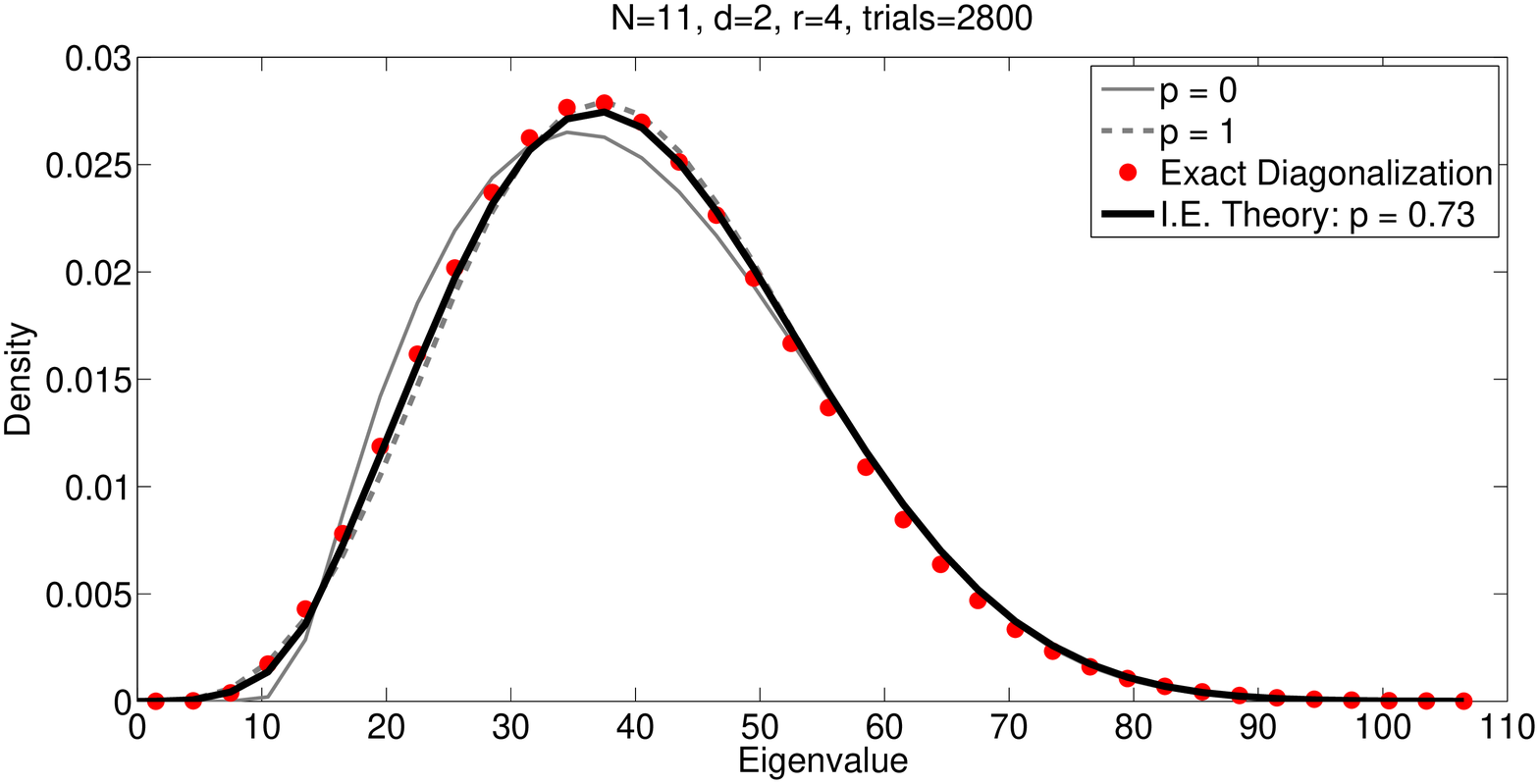}\caption{\label{fig:N=00003D11}$N=11$}
\end{figure}

\section{\label{sec:Other-Examples}Other Examples of Local Terms}

Because of the Universality lemma, $p$ is independent of the type
of local distribution. Furthermore, as discussed above, the application
of the theory for other types of local terms is entirely similar to
the Wishart case. Therefore, we only show the results in this section.
As a second example consider GOE's as local terms, i.e., $H_{l,l+1}=\frac{G^{T}+G}{2}$,
where $G$ is a full rank matrix whose elements are real Gaussian
random numbers. 

\begin{figure}[H]
\begin{centering}
\includegraphics[scale=0.4]{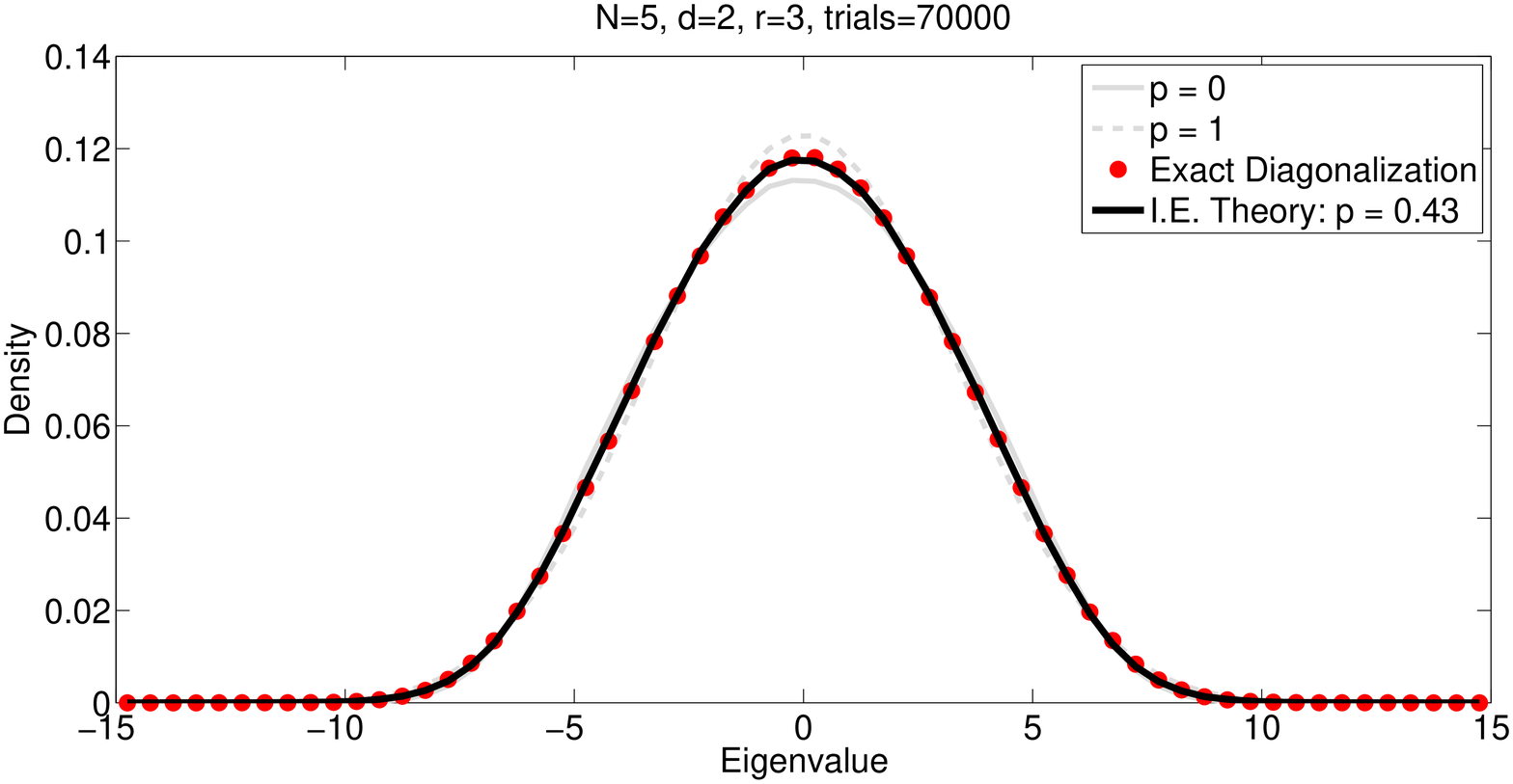}
\par\end{centering}

\begin{centering}
\includegraphics[scale=0.4]{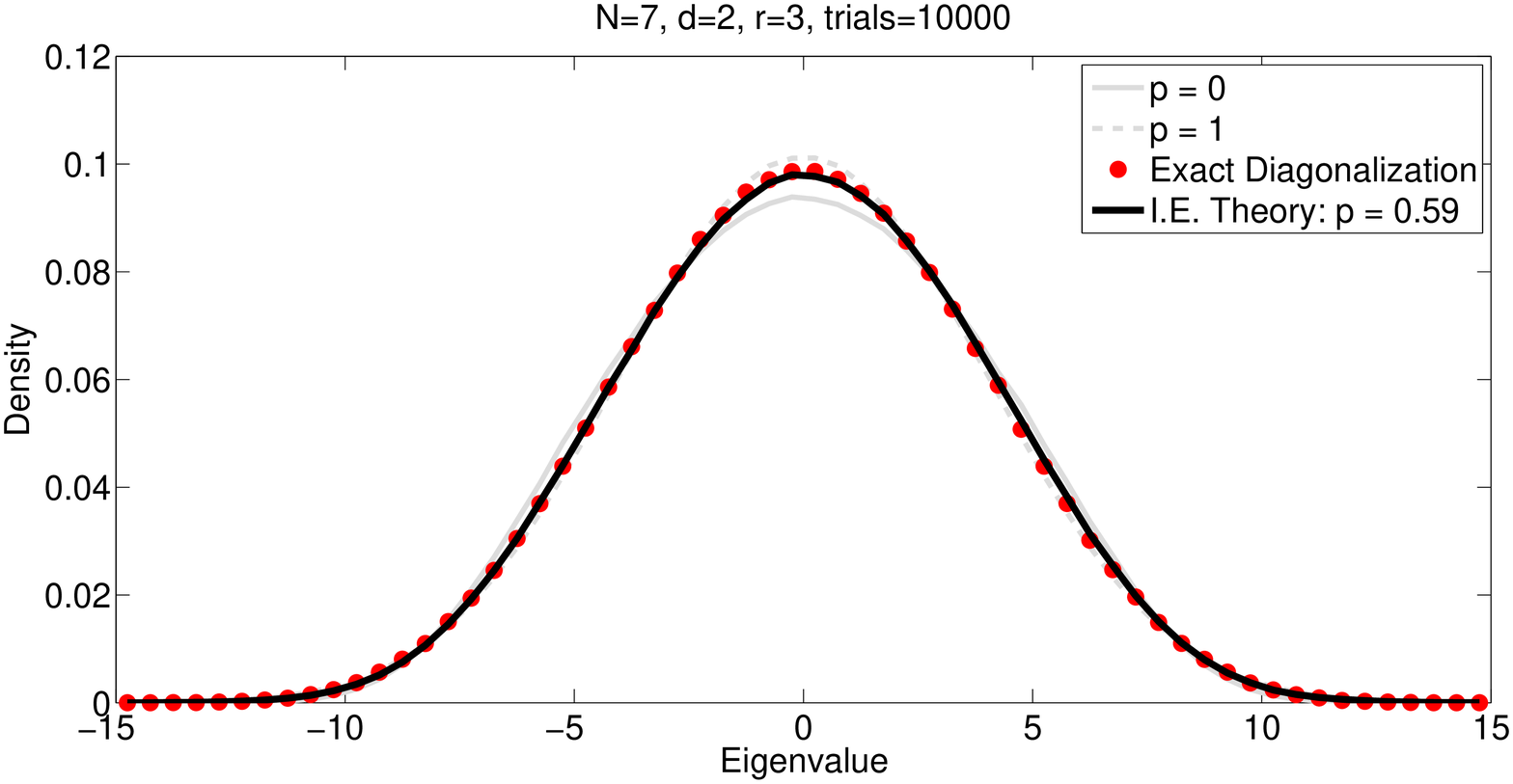}
\par\end{centering}

\begin{centering}
\includegraphics[scale=0.4]{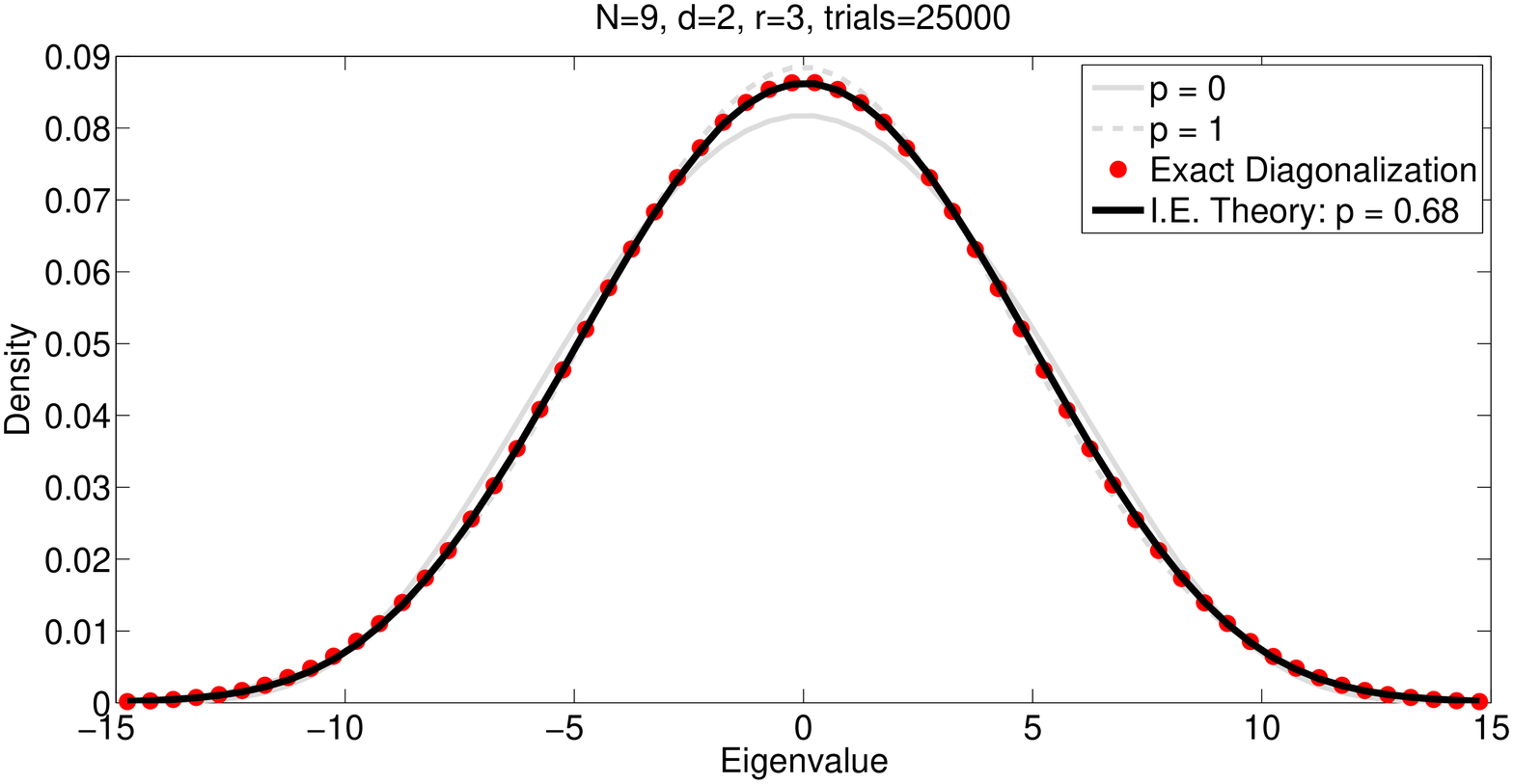}
\par\end{centering}

\caption{\label{fig:GOE's}GOE's as local terms}
\end{figure}

Lastly take the local terms to have Haar eigenvectors but with random
eigenvalues $\pm1$, i.e., $H_{l,l+1}=Q_{l}^{T}\Lambda_{l}Q_{l}$,
where $\Lambda_{l}$ is a diagonal matrix whose elements are binary
random variables $\pm1$ (Figure \ref{fig:binomial}).

\begin{figure}
\begin{centering}
\includegraphics[scale=0.4]{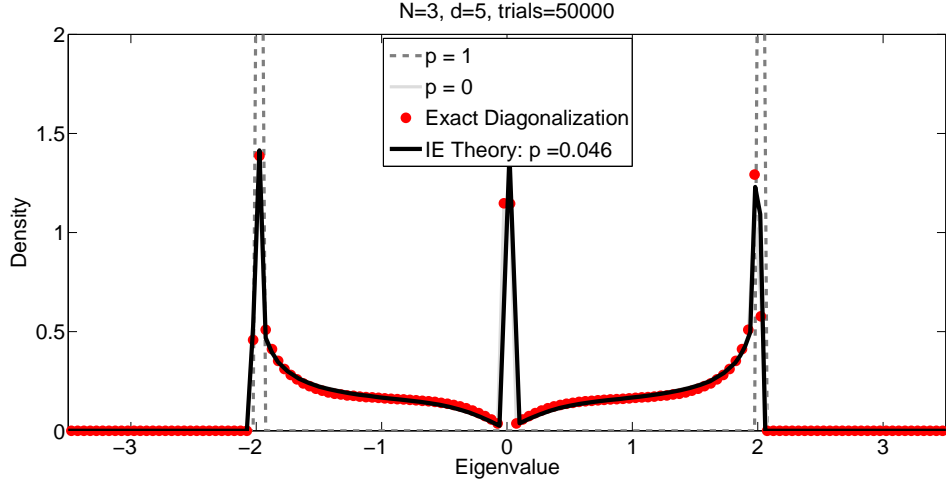}
\par\end{centering}

\caption{\label{fig:binomial}Local terms have a random binomial distribution.}
\end{figure}

In this case the classical treatment of the local terms leads to a
binomial distribution. As expected $p=1$ in Figure \ref{fig:binomial}
has three atoms at $-2,0,2$ corresponding to the randomized sum of
the eigenvalues from the two local terms. The exact diagonalization,
however, shows that the quantum chain has a much richer structure
closer to iso; i.e, $p=0$. This is captured quite well by IE with
$p=0.046$.

\section{\label{sub:Wishart-Beyond-Nearest-Neighbors}Beyond Nearest Neighbors
Interaction: $L>2$}

If one fixes all the parameters in the problem and compares $L>2$
with nearest neighbor interactions, then one expects the former to
act more isotropic as the number of random parameters in Eq. \ref{eq:Hamiltonian}
are more. When the number of random parameters introduced by the local
terms, i.e., $\left(N-L+1\right)d^{L}$ and $d^{N}$ are comparable,
we find that we can approximate the spectrum with a high accuracy
by taking the summands to be all isotropic\cite{ramisPI} (See Figures
\ref{fig:L=00003D3}-\ref{fig:L=00003D5}). 

\begin{figure}[H]
\begin{centering}
\includegraphics[scale=0.4]{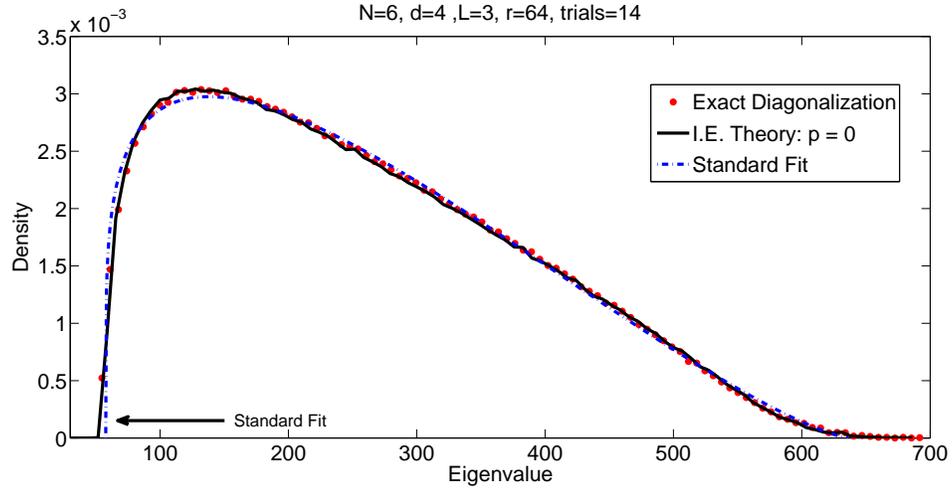}
\par\end{centering}

\caption{\label{fig:L=00003D3}IE method approximates the quantum spectrum
by $H^{IE}=\sum_{l=1}^{4}Q_{l}^{T}H_{l,\cdots,l+2}Q_{l}$}
\end{figure}

\begin{figure}[H]
\begin{centering}
\includegraphics[scale=0.4]{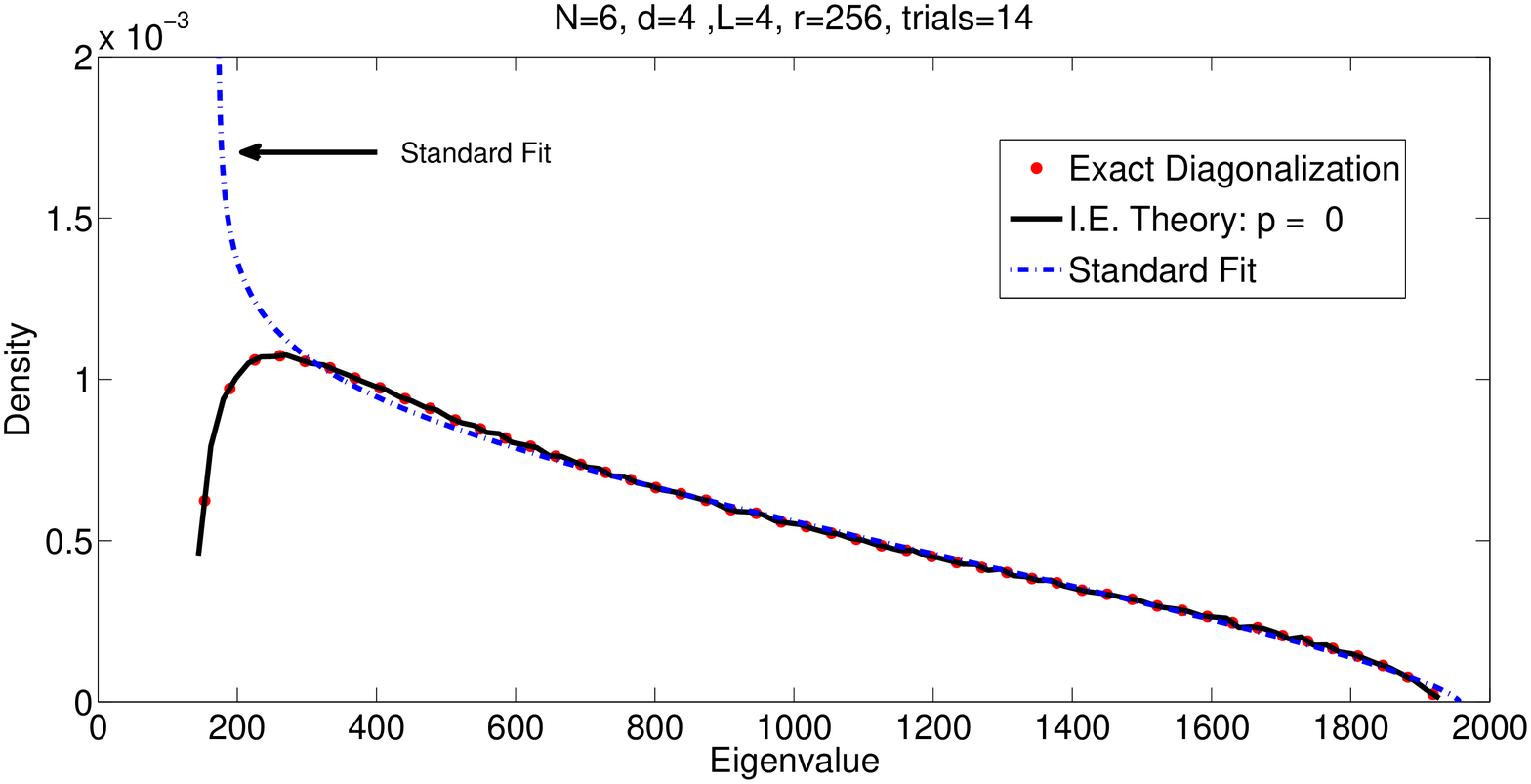}
\par\end{centering}

\caption{\label{fig:L=00003D4}IE method approximates the quantum spectrum
by $H^{IE}=\sum_{l=1}^{3}Q_{l}^{T}H_{l,\cdots,l+3}Q_{l}$. }
\end{figure}

\begin{figure}[H]
\begin{centering}
\includegraphics[scale=0.4]{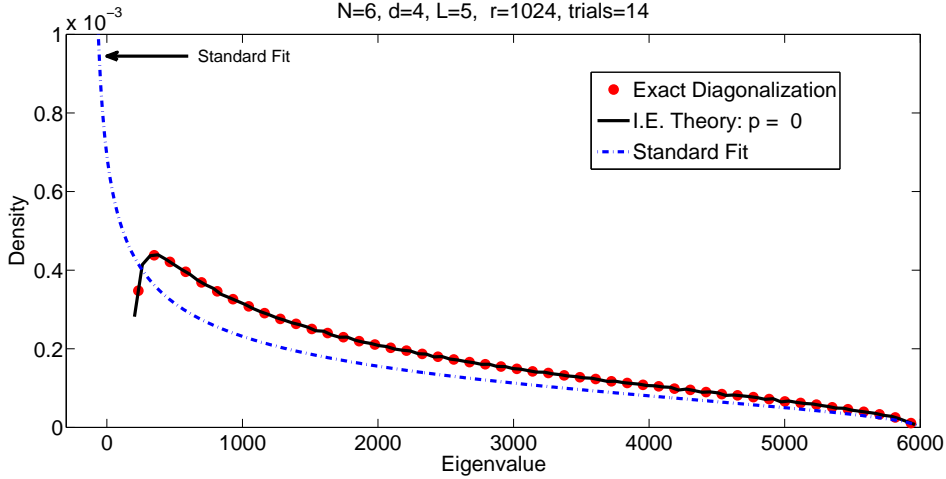}
\par\end{centering}

\caption{\label{fig:L=00003D5}IE method approximates the quantum spectrum
by $H^{IE}=\sum_{l=1}^{2}Q_{l}^{T}H_{l,\cdots,l+4}Q_{l}$}
\end{figure}

Most distributions built solely from the first four moments, would
give smooth curves. Roughly speaking, the mean indicates the center
of the distribution, variance its width, skewness its bending away
from the center and kurtosis how tall and skinny versus how short
and fat the distribution is. Therefore, it is hard to imagine that
the kinks, cusps and local extrema of the quantum problem (as seen
in some of our examples and in particular Figure in \ref{fig:binomial})
could be captured by fitting only the first four moments of the QMBS
Hamiltonian to a known distribution. It is remarkable that a one parameter
(i.e., $p$) interpolation between the isotropic and classical suffices
in capturing the richness of the spectra of QMBS.

\section{Conjectures and Open Problems}

In this paper we have offered a method that successfully captures
the density of states of QMBS with generic local interaction with
an accuracy higher than one expects solely from the first four moments.
We would like to direct the reader's attention to open problems that
we believe are within reach.
\begin{enumerate}
\item We conjecture that the higher moments may be analyzed for their significance.
For example, one can show that the fraction of departing terms in
the expansion of the higher moments (e.g. analogous to bold faced
and underlined terms in Eqs. \ref{eq:fourthMoments},\ref{eq:fifthmoments}
but for higher moments) is asymptotically upper bounded by $1/N^{3}$.
In Section\ref{sub:More-Than-Four} we conjectured that their expectation
values would not change the moments significantly. It would be of
interest to know if

\begin{eqnarray*}
\mathbb{E}\textrm{Tr}\left\{ \ldots Q^{-1}B^{\ge1}QA^{\ge1}Q^{-1}B^{\ge1}Q\ldots\right\}  & \le\\
\mathbb{E}\textrm{Tr}\left\{ \ldots Q_{q}^{-1}B^{\ge1}Q_{q}A^{\ge1}Q_{q}^{-1}B^{\ge1}Q_{q}\ldots\right\}  & \le\\
\mathbb{E}\textrm{Tr}\left\{ \ldots\Pi^{-1}B^{\ge1}\Pi A^{\ge1}\Pi^{-1}B^{\ge1}\Pi\ldots\right\}  & .
\end{eqnarray*}
For example, one wonders if

\[
\mathbb{E}\textrm{Tr}\left\{ \left(AQ^{-1}BQ\right)^{k}\right\} \le\mathbb{E}\textrm{Tr}\left\{ \left(AQ_{q}^{-1}BQ_{q}\right)^{k}\right\} \le\mathbb{E}\textrm{Tr}\left\{ \left(A\Pi^{-1}B\Pi\right)^{k}\right\} 
\]
for $k>2$; we have proved that the inequality becomes an equality
for $k=1$ (Departure Theorem) and holds for $k=2$ (Slider Theorem). 

\item Though we focus on decomposition for spin chains, we believe that
the main theorems may generalize to higher dimensional graphs. Further
rigorous and numerical work in higher dimensions would be of interest. 
\item At the end of Section \ref{sub:Isotropic-theory} we proposed that
more general local terms might be treated by explicitly including
the extra terms (Type I terms). 
\item Application of this method to slightly disordered systems would be
of interest in condensed matter physics. In this case, the assumption
of fully random local terms needs to be relaxed. 
\item In our numerical work, we see that the method gives accurate answers
in the presence of an external field. It would be nice to formally
extend the results and calculate thermodynamical quantities.
\item We derived our results for general $\beta$ but numerically tested
$\beta=1,2$. We acknowledge that general $\beta$ remains an abstraction. 
\item Readers may wonder whether it is better to consider ``iso'' or the
infinite limit which is ``free''. We have not fully investigated
these choices, and it is indeed possible that one or the other is
better suited for various purposes.
\item A grander goal would be to apply the ideas of this paper to very general
sums of matrices. 
\end{enumerate}
\pagebreak{}
\begin{acknowledgments}
We thank Peter W. Shor, Jeffrey Goldstone, Patrick Lee, Peter Young,
Gil Strang, Mehran Kardar, Salman Beigi, Xiao-Gang Wen and Ofer Zeitouni
for discussions. RM would like to thank the National Science Foundation
for support through grant CCF-0829421. We thank the National Science
Foundation for support through grants DMS 1035400 and DMS 1016125.\end{acknowledgments}

\section{Appendix}

To help the reader with the random quantities that appear in this
paper, we provide explanations of the exact nature of the random variables
that are being averaged. A common assumption is that we either assume
a uniformly randomly chosen eigenvalue from a random matrix or we
assume a collection of eigenvalues that may be randomly ordered, Random
ordering can be imposed or a direct result of the eigenvector matrix
having the right property. Calculating each of the terms separately
and then subtracting gives the same results.

\begin{equation}
\frac{1}{m}\mathbb{E}\left[\textrm{Tr}\left(AQ^{T}BQ\right)^{2}\right]=\frac{1}{m}\mathbb{E}\left\{ \sum_{1\leq i_{1},i_{2},j_{1},j_{2}\leq m}a_{i_{1}}a_{i_{2}}b_{j_{1}}b_{j_{2}}\left(q_{i_{1}j_{1}}q_{i_{1}j_{2}}q_{i_{2}j_{1}}q_{i_{2}j_{2}}\right)\right\} ,\label{eq:isoStart}
\end{equation}
where $a_{i}$ and $b_{j}$ are elements of matrices $A$ and$B$
respectively. The right hand side of Eq. \ref{eq:isoStart} can have
terms with two collisions (i.e., $i_{1}=i_{2}$ and $j_{1}=j_{2}$),
one collision (i.e. $i_{1}\neq i_{2}$ exclusive-or $j_{1}\neq j_{2}$),
or no collisions (i.e., $i_{1}\neq i_{2}$ and $j_{1}\neq j_{2}$).
Our goal now is to group terms based on the number of collisions.
The pre-factors for two, one and no collisions along with the counts
are summarized in Table \ref{tab:HaarExp}. Using the latter we can
sum the three types of contributions, to get the expectation

\begin{equation}
\begin{array}{c}
\frac{1}{m}\mathbb{E}\left[\textrm{Tr}\left(AQ^{T}BQ\right)^{2}\right]=\frac{\left(\beta+2\right)}{\left(m\beta+2\right)}\mathbb{E}\left(a^{2}\right)\mathbb{E}\left(b^{2}\right)+\\
\frac{\beta\left(m-1\right)}{\left(m\beta+2\right)}\left[\mathbb{E}\left(b^{2}\right)\mathbb{E}\left(a_{1}a_{2}\right)+\mathbb{E}\left(a^{2}\right)\mathbb{E}\left(b_{1}b_{2}\right)\right]-\frac{\beta\left(m-1\right)}{\left(m\beta+2\right)}\mathbb{E}\left(a_{1}a_{2}\right)\mathbb{E}\left(b_{1}b_{2}\right).
\end{array}\label{eq:Isotropic}
\end{equation}

\begin{flushleft}
If we take the local terms to be from the same distribution we can
further express the foregoing equation
\par\end{flushleft}

\begin{equation}
\frac{1}{m}\mathbb{E}\left[\textrm{Tr}\left(AQ^{T}BQ\right)^{2}\right]=\frac{1}{\left(m\beta+2\right)}\left[\left(\beta+2\right)m_{2}^{2}+\beta\left(m-1\right)\mathbb{E}\left(a_{1}a_{2}\right)\left\{ 2m_{2}-\mathbb{E}\left(a_{1}a_{2}\right)\right\} \right].\label{eq:isotropicGeneral}
\end{equation}

The quantity of interest is the difference of the classical and the
iso (see Eq. \ref{eq:denom}),

\begin{equation}
\begin{array}{c}
\frac{1}{m}\mbox{\ensuremath{\mathbb{E}}Tr}\left(AB\right)^{2}-\frac{1}{m}\mbox{\ensuremath{\mathbb{E}}Tr}\left(AQ^{T}BQ\right)^{2}=\\
\frac{\beta\left(m-1\right)}{\left(m\beta+2\right)}\left\{ \mathbb{E}\left(a^{2}\right)\mathbb{E}\left(b^{2}\right)-\mathbb{E}\left(b^{2}\right)\mathbb{E}\left(a_{1}a_{2}\right)-\mathbb{E}\left(a^{2}\right)\mathbb{E}\left(b_{1}b_{2}\right)+\mathbb{E}\left(a_{1}a_{2}\right)\mathbb{E}\left(b_{1}b_{2}\right)\right\} =\\
\frac{\beta\left(m-1\right)}{\left(m\beta+2\right)}\left\{ m_{2}^{(A)}m_{2}^{(B)}-m_{2}^{(B)}m_{1,1}^{(A)}-m_{2}^{(A)}m_{1,1}^{(B)}+m_{1,1}^{(A)}m_{1,1}^{(B)}\right\} =\\
\frac{\beta\left(m-1\right)}{\left(m\beta+2\right)}\left(m_{2}^{(A)}-m_{1,1}^{(A)}\right)\left(m_{2}^{(B)}-m_{1,1}^{(B)}\right)=
\end{array}\label{eq:denom_General}
\end{equation}

\begin{flushleft}
If we assume that the local terms have the same distribution: $m_{2}\equiv m_{2}^{(A)}=m_{2}^{\left(B\right)}$,
$m_{11}\equiv m_{1,1}^{(A)}=m_{1,1}^{\left(B\right)}$ as in Eq. \ref{eq:isotropicGeneral},
the foregoing equation simplifies to
\par\end{flushleft}

\[
\frac{1}{m}\mbox{\ensuremath{\mathbb{E}}Tr}\left(AB\right)^{2}-\frac{1}{m}\mbox{\ensuremath{\mathbb{E}}Tr}\left(AQ^{T}BQ\right)^{2}=\frac{\beta\left(m-1\right)}{\left(m\beta+2\right)}\left(m_{2}-m_{1,1}\right)^{2}.
\]

In the example of Wishart matrices as local terms we have

\begin{equation}
m_{1,1}\equiv\begin{array}{c}
m_{2}\equiv\mathbb{E}\left(a^{2}\right)=rk\left(rk+n+1\right)\\
\mathbb{E}\left(a_{1}a_{2}\right)=k\left(k-1\right)r^{2}+\frac{kr}{m-1}\left\{ \left(tn^{k-1}-1\right)(n+r+1)+tn^{k-1}\left(n-1\right)(r-1)\right\} \\
=k\left(k-1\right)r^{2}+\frac{kr}{m-1}\left\{ tn^{k-1}(nr+2)-n-r-1\right\} .
\end{array}\label{eq:elem}
\end{equation}

\end{document}